\setlist{topsep=0.25\baselineskip,partopsep=0pt,itemsep=1pt,parsep=0pt}
\tikzset{node/.style={draw, rectangle, rounded corner}}
\tikzset{
    dimdegdet/.style={
        draw, rectangle, rounded corners
    },
}
\declaretheorem[style=plain,parent=section]{definition}
\declaretheorem[sibling=definition]{theorem}
\declaretheorem[sibling=definition]{corollary}
\declaretheorem[sibling=definition]{proposition}
\declaretheorem[sibling=definition]{lemma}
\declaretheorem[sibling=definition]{remark}
\algnewcommand{\algorithmicassumption}{\textbf{Requirement:}}
\algnewcommand{\Assume}{\item[\algorithmicassumption]}
\algnewcommand{\CommentLine}[1]{\(\triangleright\;\;\) \emph{\small #1} \(\;\;\triangleleft\)} 
\algnewcommand{\InlineIf}[2]{
  \algorithmicif\ #1\ \algorithmicthen\ #2}
\algnewcommand{\InlineIfElse}[3]{
  \algorithmicif\ #1\ \algorithmicthen\ #2\ \algorithmicelse\ #3}
\newcommand{\storeArg}{} 
\newcommand{\bigO}[1]{\mathchoice{O\left(#1\right)}{O(#1)}{O(#1)}{O(#1)}} 
\newcommand{\expmm}{\omega} 
\newcommand{\timepm}[1]{\mathchoice{\operatorname{\mathsf{M}}\!\left(#1\right)}{\operatorname{\mathsf{M}}(#1)}{\operatorname{\mathsf{M}}(#1)}{\operatorname{\mathsf{M}}(#1)}} 
\newcommand{\timegcd}[1]{\mathchoice{\operatorname{\mathsf{M}}'\!\left(#1\right)}{\operatorname{\mathsf{M}}'(#1)}{\operatorname{\mathsf{M}}'(#1)}{\operatorname{\mathsf{M}}'(#1)}} 
\newcommand{\algoname}[1]{{\normalfont\textsc{#1}}} 
\newcommand{\assign}{\leftarrow} 
\newcommand{\ZZ}{\mathbb{Z}} 
\newcommand{\NN}{\mathbb{Z}_{\ge 0}} 
\newcommand{\ZZp}{\mathbb{Z}_{> 0}} 
\newcommand{\card}[1]{\# #1}  
\newcommand{\var}{x} 
\newcommand{\field}{\mathbb{K}} 
\newcommand{\polRing}{\field[\var]} 
\newcommand{\rdim}{m} 
\newcommand{\cdim}{n} 
\newcommand{\rk}{r} 
\newcommand{\cork}{k} 
\newcommand{\matRing}[1][\rdim]{\renewcommand\storeArg{#1}\matRingAux} 
\newcommand{\pmatRing}[1][\rdim]{\renewcommand\storeArg{#1}\pmatRingAux} 
\newcommand{\matRingAux}[1][\storeArg]{\field^{\storeArg \times #1}} 
\newcommand{\pmatRingAux}[1][\storeArg]{\polRing^{\storeArg \times #1}} 
\newcommand{\row}[1]{\mathbf{\MakeLowercase{#1}}} 
\newcommand{\mat}[1]{\mathbf{\MakeUppercase{#1}}} 
\newcommand{\matt}[1]{\mathbf{\hat{\MakeUppercase{#1}}}} 
\newcommand{\matrow}[2]{{#1}_{#2,*}} 
\newcommand{\matrows}[2]{{#1}_{#2,*}} 
\newcommand{\matcol}[2]{{#1}_{*,#2}} 
\newcommand{\matcols}[2]{{#1}_{*,#2}} 
\newcommand{\submat}[3]{{#1}_{#2,#3}} 
\newcommand{\trsp}[1]{#1^\mathsf{T}} 
\newcommand{\diag}[1]{\mathrm{diag}(#1)}  
\newcommand{\xDiag}[1]{\mat{\var}^{#1\,}} 
\newcommand{\idMat}[1]{\mat{I}_{#1}} 
\newcommand{\revmat}[1]{\mat{J}_{#1}} 
\newcommand{\matz}{\mat{0}}  
\newcommand{\anyMat}{\boldsymbol{\ast}}  
\newcommand{\sddots}{\raisebox{3pt}{$\scalebox{.75}{$\ddots$}$}} 
\newcommand{\svdots}{\raisebox{3pt}{$\scalebox{.75}{$\vdots$}$}} 
\newcommand{\rank}[1]{\mathrm{rank}(#1)}
\newcommand{\tuple}[1]{\boldsymbol{#1}}  
\newcommand{\sumTuple}[1]{|#1|} 
\newcommand{\rdeg}[2][]{\mathrm{rdeg}_{{#1}}(#2)} 
\newcommand{\cdeg}[2][]{\mathrm{cdeg}_{{#1}}(#2)} 
\newcommand{\lmat}[2][]{\mathrm{lm}_{#1}(#2)} 
\newcommand{\lmatBig}[2][]{\mathrm{lm}_{#1}\left(#2\right)} 
\newcommand{\shiftz}{\mathbf{0}} 
\newcommand{\shift}{s} 
\newcommand{\shifts}{\tuple{s}} 
\newcommand{\shiftt}{\tuple{t}} 
\newcommand{\pivDeg}{\delta} 
\newcommand{\pivDegs}{\boldsymbol{\pivDeg}} 
\newcommand{\pivInd}{\pi} 
\newcommand{\pivInds}{\boldsymbol{\pi}} 
\newcommand{\degdet}{D} 
\newcommand{\subTuple}[2]{{#1}_{#2}} 
\newcommand{\pvec}{\row{p}} 
\newcommand{\pmat}{\mat{A}} 
\newcommand{\sys}{\mat{F}} 
\newcommand{\app}{\row{p}} 
\newcommand{\appbas}{\mat{M}} 
\newcommand{\order}{\gamma} 
\newcommand{\orders}{\boldsymbol{\gamma}} 
\newcommand{\cdegs}{\tuple{d}} 
\newcommand{\modApp}[2]{\operatorname{\mathcal{A}}_{#1}(#2)} 
\newcommand{\modKer}[1]{\operatorname{\mathcal{K}}(#1)} 
\newcommand{\degExp}{ \delta }  
\newcommand{\expandMat}{\mat{E}}  
\newcommand{\quoExp}{\alpha}  
\newcommand{\remExp}{\beta}  
\newcommand{\expand}[1]{\overline{#1}}  
\newcommand{\kerExpMat}{\mat{S}} 
\newcommand{\rowbas}{\mat{B}} 
\newcommand{\kerbas}{\mat{K}} 
\newcommand{\hypsl}{\mathcal{H}_{\mathrm{sl}}} 
\newcommand{\hypsm}{\mathcal{H}_{\mathrm{sm}}} 
\newcommand{\hypeps}{\mathcal{H}_{\mathrm{\expmm}}} 
\begin{document}

\begin{frontmatter}

\title{Deterministic computation of the characteristic polynomial \\ in the time of matrix multiplication}

\author{Vincent Neiger}
\address{Univ.~Limoges, CNRS, XLIM, UMR 7252, F-87000 Limoges, France}

\author{Cl\'ement Pernet}
\address{Universit\'e Grenoble Alpes, Laboratoire Jean Kuntzmann, CNRS, UMR 5224 \\
700 avenue centrale, IMAG - CS 40700, 38058 Grenoble cedex 9, France}

\begin{abstract}
  This paper describes an algorithm which computes the characteristic
  polynomial of a matrix over a field within the same asymptotic complexity, up
  to constant factors, as the multiplication of two square matrices.
  Previously, this was only achieved by resorting to
  genericity assumptions or randomization techniques, while the best known
  complexity bound with a general deterministic algorithm was obtained by
  Keller-Gehrig in 1985 and involves logarithmic factors. Our algorithm
  computes more generally the determinant of a univariate polynomial matrix in
  reduced form, and relies on new subroutines for transforming shifted reduced
  matrices into shifted weak Popov matrices, and shifted weak Popov matrices
  into shifted Popov matrices.
\end{abstract}

\begin{keyword}
  Characteristic polynomial, polynomial matrices, determinant, fast linear algebra.
\end{keyword}

\end{frontmatter}

\section{Introduction}
\label{sec:intro}

The last five decades witnessed a constant effort towards computational
reductions of linear algebra problems to matrix multiplication. It has been
showed that most classical problems are not harder than multiplying two square
matrices, such as matrix inversion, LU decomposition, nullspace basis
computation, linear system solving, rank and determinant computations, etc.
\cite{BunHop74} \cite{IbaMorHui82}
\cite[Chap.\,16]{BurgisserClausenShokrollahi2010}. In this context, one major
challenge stands out: designing a similar reduction to matrix multiplication
for the computation of characteristic polynomials and related objects such as
minimal polynomials and Frobenius forms. For the characteristic polynomial,
significant progress was achieved by Keller-Gehrig \cite{KelGeh85}, and more
recently by Pernet and Storjohann \cite{PerSto07} who solved the problem if one
allows randomization. This paper closes the problem by providing a
deterministic algorithm with the same asymptotic complexity as matrix
multiplication.

The characteristic polynomial of a square matrix over a field \(\field\), say
\(\mat{M} \in \matRing[\rdim]\), is defined as \(\det(x\idMat{\rdim} -
\mat{M})\). Specific algorithms exist for sparse or structured matrices; here
we consider the classical, dense case. In this paper the complexity of an algorithm
is measured as an upper bound on its arithmetic cost, that is, the number of
basic field operations it uses to compute the output.

\begin{theorem}
  \label{thm:charpoly}
  Let \(\field\) be a field. Using a subroutine which multiplies two matrices
  in \(\matRing[\rdim]\) in \(\bigO{\rdim^\expmm}\) field operations for some
  \(\expmm>2\), the characteristic polynomial of a matrix in
  \(\matRing[\rdim]\) can be computed deterministically in
  \(\bigO{\rdim^\expmm}\) field operations.
\end{theorem}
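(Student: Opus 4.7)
The plan is to reduce \cref{thm:charpoly} to the computation of the determinant of a specific polynomial matrix, via the identity $\operatorname{charpoly}(\mat{M}) = \det(x \idMat{\rdim} - \mat{M})$. Observe that $x \idMat{\rdim} - \mat{M} \in \pmatRing[\rdim]$ has uniform row degree $1$ and its leading matrix is the identity, so it is already in shifted reduced form (for the zero shift), with column-degree sum equal to $\rdim$. The characteristic polynomial is therefore obtained as soon as one has a deterministic $\bigO{\rdim^\expmm}$ subroutine for the determinant of such polynomial matrices, applied to this input of total polynomial size $\bigO{\rdim^2}$.

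To compute the determinant of a polynomial matrix $\mat{A}$ in shifted reduced form, I would bring it into shifted Popov form via unimodular transformations, which change the determinant by an explicit nonzero scalar in $\field$. The transformation proceeds in two stages, each handled by a dedicated subroutine developed in the paper: first transforming a shifted reduced matrix into a shifted weak Popov matrix, then transforming a shifted weak Popov matrix into a shifted Popov matrix. Once the matrix is in shifted Popov form, its determinant can be recovered from the pivot entries together with the leading matrix: the degree of $\det(\mat{A})$ equals the sum of pivot degrees, and the canonical structure of the pivots (one per row and per column, dominating their columns in the shifted sense) makes the full determinant polynomial recoverable in $\bigO{\rdim^\expmm}$.

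The main obstacle---and the heart of the paper's technical contribution---is to ensure that these two normalization subroutines each run in $\bigO{\rdim^\expmm}$ operations, \emph{without} the $\log \rdim$ factor present in earlier methods such as the Krylov-iteration approach of Keller-Gehrig. A direct halving recursion on the number of unresolved pivots already yields $\bigO{\log \rdim}$ rounds each of cost $\bigO{\rdim^\expmm}$; avoiding the log factor requires an amortized analysis in which the work performed in each round is proportional to the decrease of shifted degree accomplished in that round, so that the total cost telescopes to $\bigO{\rdim^\expmm}$. This amortization, combined with the careful use of fast polynomial matrix arithmetic (multiplication, kernel and division-with-remainder) for operations on degree-balanced subproblems, is what upgrades the Keller-Gehrig bound to the stated deterministic $\bigO{\rdim^\expmm}$ bound for the characteristic polynomial.
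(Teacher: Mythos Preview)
Your reduction to the determinant of $x\idMat{\rdim}-\mat{M}$ is correct, and this matrix is indeed \(\shiftz\)-reduced. But the rest of the plan has a fundamental gap: you propose to compute the determinant by bringing the matrix to \(\shiftz\)-Popov form and then ``recovering the full determinant polynomial from the pivot entries together with the leading matrix.'' This last step is the entire difficulty, and your claim that it costs \(\bigO{\rdim^\expmm}\) is unjustified. A matrix in Popov form is \emph{not} triangular; its determinant is not the product of its pivots, and there is no known way to read it off in \(\bigO{\rdim^\expmm}\) operations from the pivot structure alone. Worse, the characteristic matrix \(x\idMat{\rdim}-\mat{M}\) is \emph{already} in \(\shiftz\)-Popov form (the diagonal entries have degree~1, the off-diagonal entries degree~0, and the leading matrix is \(\idMat{\rdim}\)). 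So your two normalization stages do nothing on this input, and you are left with exactly the problem you started with.

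The paper's approach is quite different from what you describe. The two subroutines \algoname{ReducedToWeakPopov} and \algoname{WeakPopovToPopov} are not the determinant algorithm; they are auxiliary tools used inside a recursive scheme (\cref{algo:determinant_weakpopov}) based on block triangularization via kernel bases. Writing \(\pmat=\bigl[\begin{smallmatrix}\pmat_1&\pmat_2\\\pmat_3&\pmat_4\end{smallmatrix}\bigr]\), one computes a kernel basis \([\kerbas_1\;\kerbas_2]\) of \(\bigl[\begin{smallmatrix}\pmat_1\\\pmat_3\end{smallmatrix}\bigr]\) and uses \(\det(\pmat)=\det(\pmat_1)\det(\mat{B})/\det(\kerbas_2)\) with \(\mat{B}=\kerbas_1\pmat_2+\kerbas_2\pmat_4\), recursing on the three half-size matrices \(\pmat_1\), \(\mat{B}\), \(\kerbas_2\). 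The normalization subroutines are invoked at each level solely to ensure that \(\pmat_1\), \(\mat{B}\), and (after a transposition trick, \cref{lem:row_column_reduced}) a matrix equivalent to \(\kerbas_2\) are again in \(\shiftz\)-weak Popov form, so that the recursion can proceed. The absence of \(\log\rdim\) comes from the recursion structure (two of the three calls have determinantal degree halved) combined with the fact that kernel basis, unbalanced multiplication, and the two normalizations each cost \(\bigO{\rdim^\expmm\timegcd{\degdet/\rdim}}\) under \(\hypsl,\hypsm,\hypeps\); there is no ``amortization over rounds of pivot resolution'' as you suggest.
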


\paragraph{Outline}

The rest of this introduction gives more details about our framework for
complexity bounds (\cref{sec:intro:cost_bounds}), summarizes previous work
(\cref{sec:intro:previous}), describes our contribution on polynomial matrix
determinant computation (\cref{sec:intro:determinant}), gives an overview of
our approach and of new tools that we designed to avoid logarithmic factors
(\cref{sec:intro:approach,sec:intro:new_tools}), and finally lists a few
perspectives (\cref{sec:intro:perspectives}). \cref{sec:prelim} introduces the
notation, main definitions, and basic properties used in this paper. Then
\cref{sec:determinant} presents the main algorithm of this paper along with a
detailed complexity analysis. This algorithm uses two new technical tools
described in \cref{sec:reduced_to_weak_popov,sec:weak_popov_to_popov}: the
transformation of reduced forms into weak Popov forms and of weak Popov forms
into Popov forms, in the case of shifted forms.

\subsection{Framework for complexity bounds}
\label{sec:intro:cost_bounds}

In this paper, \(\field\) is any field and we seek upper bounds on the
complexity of algorithms which operate on objects such as matrices and
polynomials over \(\field\). We consider the arithmetic cost of these
algorithms, i.e.~the number of basic operations in \(\field\) that are used to
compute the output from some input of a given size. The basic operations are
addition, subtraction, multiplication, and inversion in the field, as well as
testing whether a given field element is zero.

As already highlighted in \cref{thm:charpoly}, in this paper we fix any \(2 <
\expmm \le 3\) as well as any algorithm which multiplies matrices in
\(\matRing[\rdim]\) using \(\bigO{\rdim^\expmm}\) operations in \(\field\):
this algorithm is assumed to be the one used as a black box for all matrix
multiplications arising in the algorithms we design. The current best known
cost bounds ensure that any \(\expmm > 2.373\) is feasible \cite{LeGall14}. In
practice, one often considers a cubic algorithm with \(\expmm=3\) or Strassen's
algorithm with \(\expmm = \log_2(7)\) \cite{Strassen1969}. Our results hold
with the only assumption that \(2 < \expmm \le 3\).

In the computer algebra literature, this setting is classical and often
implicit;
we still emphasize it because here, and more generally when one studies the
logarithmic factors in the cost bound of some algorithm, this clarification of
how the underlying matrix multiplications are performed is of the utmost
importance. Indeed, if one were allowed to use any matrix multiplication
subroutine, then the question of logarithmic factors becomes void: for any
exponent \(\expmm\) known to be feasible at the time of writing, it is known
that \(\expmm-\varepsilon\) is feasible as well for a sufficiently small
\(\varepsilon>0\); then one might rather rely on this faster subroutine, and
apply Keller-Gehrig's algorithm to obtain the characteristic polynomial in
\(\bigO{\rdim^{\expmm-\varepsilon} \log(\rdim)}\) operations in \(\field\),
which is in \(\bigO{\rdim^\expmm}\).

Similarly, we consider a nondecreasing function \(d\mapsto \timepm{d}\) and an
algorithm which multiplies two polynomials in \(\polRing\) of degree at most
\(d\) using at most \(\timepm{d}\) operations in \(\field\); our algorithms
rely on this subroutine for polynomial multiplication. Here \(d\) is any
nonnegative real number; it will often be a fraction \(\degdet/\rdim\) of
positive integers; we assume that \(\timepm{d} = 1\) for \(0 \le d < 1\), so
that \(\timepm{d} \ge 1\) for all \(d\ge 0\). To help derive complexity upper
bounds, we also consider the following assumptions \(\hypsl\), \(\hypsm\),
and \(\hypeps\).

\begin{description}
  \item[\hspace{1em}\rm \(\hypsl\):]
    \(2\timepm{d} \le \timepm{2d}\) for all \(d\ge 1\)
    (superlinearity).
  \item[\hspace{1em}\rm \(\hypsm\):]
   \(\timepm{d_1 d_2} \le \timepm{d_1} \timepm{d_2}\) for all \(d_1,d_2 \ge 0\)
   (submultiplicativity).
  \item[\hspace{1em}\rm \(\hypeps\):]
    \(\timepm{d} \in \bigO{d^{\expmm-1-\epsilon}}\) for some \(\epsilon>0\).
\end{description}
The first assumption is customary, see e.g.~\cite[Sec.\,8.3]{vzGathen13}; note
that it implies \(\timepm{d} \ge d\) for all \(d\ge 1\). The second and last
assumptions are commonly made in complexity analyses for divide and conquer
algorithms on polynomial matrices \cite{Storjohann03,GuSaStVa12}: we refer to
\cite[Sec.\,2]{Storjohann03} for further comments on these assumptions. They
are satisfied by the cost bounds of polynomial multiplication
algorithms such as the quasi-linear algorithm of Cantor and Kaltofen
\cite{CanKal91} and, for suitable fields \(\field\), the quasi-linear algorithm
of Harvey and van der Hoeven and Lecerf \cite{HarveyVDHoevenLecerf2017}, and
most of Toom-Cook subquadratic algorithms \cite{Toom1963,Cook1966}. For the
latter only \(\hypeps\) might not be satisfied, depending on \(\expmm\) and on
the number of points used. Note that with the current estimates having \(\omega
> 2.373\), an order 5 Toom-Cook multiplication (requiring a field with at least
9 points) has exponent \(\log(9)/\log(5) \approx 1.365 < \expmm-1\); thus for
such exponents \(\expmm\) all Toom-Cook algorithms of order 5 or more
satisfy all the above assumptions.

Following \cite{Storjohann03,GuSaStVa12}, we also define a function \(d \mapsto
\timegcd{d}\) related to the cost of divide and conquer methods such as the
half-gcd algorithm: \(\timegcd{d} = \sum_{0 \le i \le \lceil \log_2(d) \rceil}
2^i \timepm{2^{-i}d}\) for \(d \ge 1\), and \(\timegcd{d} = 1\) for \(0\le d
\le 1\). By definition one has \(\timegcd{d} \ge \timepm{d} \ge 1\) for all
\(d\ge 0\), and the identity \(\timegcd{2d} = 2 \timegcd{d} + \timepm{2d}\) for
\(d\ge 1\) ensures that \(\timegcd{d}\) is superlinear: \(2\timegcd{d} \le
\timegcd{2d}\) for all \(d\ge 1\). Assuming \(\hypsl\) yields the asymptotic
bound \(\timegcd{d} \in \bigO{\timepm{d}  \log(d)}\) where the \(\log(d)\) factor
only occurs if a quasi-linear polynomial multiplication is used;
in particular, \(\hypsl\) and \(\hypeps\) imply \(\timegcd{d} \in
\bigO{d^{\expmm-1-\varepsilon}}\) for some \(\varepsilon>0\).  Furthermore, if
one assumes \(\hypsl\) and \(\hypsm\), then \(\timegcd{\cdot}\) is
submultiplicative as well: \(\timegcd{d_1 d_2} \le \timegcd{d_1}
\timegcd{d_2}\) for all \(d_1,d_2 \ge 0\).

In what follows we assume that two polynomial matrices in \(\pmatRing[\rdim]\)
of degree at most \(d\ge 0\) can be multiplied in \(\bigO{\rdim^\expmm
\timepm{d}}\) operations in \(\field\). This is a very mild assumption: it
holds as soon as \(\timepm{d}\) corresponds to one of the above-mentioned
polynomial multiplication algorithms, and it also holds if the chosen matrix
multiplication algorithm defining \(\expmm\) supports matrices over a
commutative ring using only the operations \(\{+,-,\times\}\) (so that one can
use it to multiply \(\rdim\times\rdim\) matrices over
\(\polRing/(\var^{2d+1})\)). Note still that this bound \(\bigO{\rdim^\expmm
\timepm{d}}\) is slightly worse than the best known ones
\cite{CanKal91,HarveyVDHoevenLecerf2017}; for example, Cantor and Kaltofen's
algorithm performs polynomial matrix multiplication in \(\bigO{\rdim^\expmm d
\log(d) + \rdim^2 d \log(d) \log(\log(d))}\) field operations, which is finer
than the bound \(\bigO{\rdim^\expmm \timepm{d}}\) with \(\timepm{d} = \Theta(d
\log(d) \log(\log(d)))\) in that case. This simplification is frequent in the
polynomial matrix literature, and it is made here for the sake of presentation,
to improve the clarity of our main complexity results and of the analyses that
lead to them.

\subsection{Previous work}
\label{sec:intro:previous}

Previous algorithms based on linear algebra over \(\field\) for computing the
characteristic polynomial of \(\mat{M} \in \matRing[\rdim]\) mainly fall in three
types of methods.
\begin{description}
  \item[\hspace{1em}\it Traces of powers:]
    combining the traces of the first \(n\) powers of the input matrix using
    the Newton identities reveals the coefficients of the characteristic
    polynomial. Known as the Faddeev-LeVerrier algorithm, it was introduced in
    \cite{Leverrier1840}, refined and rediscovered in
    \cite{Souriau48,FaSo49,Frame49}, and used in \cite{Csanky75} to prove that
    the problem is in the \(\mathcal{NC}^2\) parallel complexity class.
  \item[\hspace{1em}\it Determinant expansion formula:]
    introduced in \cite{Samuelson42} and improved in \cite{Berkowitz84}, this
    approach does not involve division, and is therefore well suited for
    computing over integral domains. Later developments in this field include
    \cite{AbMa01,KaVi05}, the latter reaching the best known cost bound of \(\bigO{\rdim^{2.6973}\log(\rdim)^c}\)
    ring operations using a deterministic algorithm, for some constant~\(c>0\).
  \item[\hspace{1em}\it Krylov methods:]
    based on sequences of iterates of vectors under the application of the
    matrix: \((\row{v}, \mat{M}\row{v}, \mat{M}^2 \row{v}, \ldots)\). These methods
    rely on the fact that the first linear dependency between these iterates
    defines a polynomial which divides the characteristic polynomial. Some
    algorithms construct the Krylov basis explicitly \cite{KelGeh85,
      Giesbrecht95, DPW05}, while others can be interpreted as an implicit
      Krylov iteration with structured vectors \cite{Danilevskij1937,
      PerSto07}.
\end{description}
Methods based on traces of powers use \(\bigO{\rdim^4}\) or
\(\bigO{\rdim^{\omega+1}}\) field operations, and are mostly competitive for
their parallel complexity. Methods based on determinant expansions use
\(\bigO{\rdim^4}\) or \(\bigO{\rdim^{\omega+1}}\) field operations and are
relevant for division-free algorithms. Lastly, the Krylov methods run in
\(\bigO{\rdim^3}\) \cite{Danilevskij1937, DPW05} or \(\bigO{\rdim^{\expmm}\log
\rdim}\) \cite{KelGeh85} field operations with deterministic algorithms, or in
\(\bigO{\rdim^\expmm}\) field operations with the Las Vegas randomized
algorithm in \cite{PerSto07}.

Note that the characteristic polynomial of \(\mat{M}\) cannot be computed
faster than the determinant of \(\mat{M}\), since the latter is the constant
coefficient of the former. Furthermore, under the model of computation trees,
the determinant of \(\rdim\times\rdim\) matrices cannot be computed faster than
the product of two \(\rdim\times\rdim\) matrices
\cite[Sec.\,16.4]{BurgisserClausenShokrollahi2010}, a consequence of Baur and
Strassen's theorem \cite{BaurStrassen83}.

Another type of characteristic polynomial algorithms is based on operations on
matrices over \(\polRing\), called polynomial matrices in what follows. Indeed
the characteristic polynomial may be obtained by calling a determinant
algorithm on the characteristic matrix \(x\idMat{\rdim} - \mat{M}\), which is
in \(\pmatRing[\rdim]\). Existing algorithms, which accept any matrix in
\(\pmatRing[\rdim]\) of degree \(d\) as input, include
\begin{itemize}
  \item the evaluation-interpolation method, which costs
    \(\bigO{\rdim^{\expmm+1} d + \rdim^3 \timegcd{d}}\) field operations,
    requires that the field \(\field\) is large enough, and mainly relies on
    the computation of about \(\rdim d\) determinants of matrices in
    \(\matRing[\rdim]\);
  \item the algorithm of Mulders and Storjohann \cite{MulSto03} based on weak
    Popov form computation, which uses \(\bigO{\rdim^{3} d^2}\) field
    operations;
  \item retrieving the determinant as the product of the diagonal entries of
    the Smith form, itself computed by a Las Vegas randomized algorithm in
    \(\bigO{\rdim^{\expmm} \timegcd{d} \log(\rdim)^2}\) field operations
    \cite[Prop.\,41]{Storjohann03}, assuming \(\timegcd{d} \in
    \bigO{d^{\expmm-1}}\);
  \item the algorithm based on unimodular triangularization in
    \cite{LaNeZh17}, which is deterministic and uses \(\bigO{\rdim^\expmm d
    \log(d)^a \log(\rdim)^b}\) field operations for some constants \(a,b \in
    \ZZp\).
\end{itemize}
In the last two items the cost bound is, up to logarithmic factors, the same as
the cost of multiplying matrices \(\pmatRing[\rdim]\) of degree \(d\) by
relying on both fast linear algebra over \(\field\) and fast arithmetic in
\(\polRing\), as showed in \cite{CanKal91}. The last two of these cost bounds
do involve factors logarithmic in \(\rdim\), whereas the first two have an
exponent on \(\rdim\) which exceeds \(\expmm\).

In summary, the fastest characteristic polynomial algorithms either are
randomized or have a cost a logarithmic factor away from the lower bound. This
paper, with \cref{thm:charpoly}, bridges this gap by proposing the first
deterministic algorithm with cost \(\bigO{\rdim^\expmm}\).

\subsection{A more general result: determinant of reduced polynomial matrices}
\label{sec:intro:determinant}

Our algorithm falls within the category of polynomial matrix determinant
computation. Yet unlike the above-listed approaches ours is tailored to a
specific family of polynomial matrices, which contains the characteristic
matrix \(x\idMat{\rdim} - \mat{M}\): the family of \emph{row reduced matrices}
\cite{Wolovich74,Kailath80}. Restricting to such matrices provides us with good
control of the degrees in computations; as a typical example, it is easy to
predict the degree of a vector-matrix product \(\row{v} (x\idMat{\rdim} -
\mat{M})\) by observing the degrees in \(\row{v}\), without actually computing
the product. As we explain below, this degree control allows us to avoid
searches of degree profiles, which would add logarithmic terms to the cost.
Although the characteristic matrix has other properties besides row reducedness
(it has degree \(1\), and is in Popov form \cite{Popov72} hence column
reduced), we do not exploit them.

When appropriate, the average row degree \(\degdet/\rdim\), where \(\degdet\)
is the sum of the degrees of the rows of the matrix, is chosen as a measure of
the input degree which refines the matrix degree \(d\) used above.  This gives
cost bounds more sensitive to the input degrees and also, most importantly,
leverages the fact that even if the algorithm starts from a matrix with uniform
degrees such as \(x\idMat{\rdim} - \mat{M}\), it may end up handling matrices
with unbalanced row degrees in the process.

\begin{theorem}
  \label{thm:polmat_det}
  Assuming \(\hypsl\), \(\hypsm\), and \(\hypeps\) (hence in particular
  \(\expmm>2\)), there is an algorithm which takes as input a row reduced
  matrix \(\pmat \in \pmatRing[\rdim]\) and computes its determinant using
  \[
    \bigO{\rdim^\expmm \timegcd{\degdet/\rdim}}
    \subseteq \bigO{\rdim^\expmm \timegcd{\deg(\pmat)}}
  \]
  operations in \(\field\), where \(\degdet = \deg(\det(\pmat))\) is equal to
  the sum of the degrees of the rows of \(\pmat\).
\end{theorem}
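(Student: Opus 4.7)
The strategy is to transform $\pmat$ into a shifted Popov form $\mat{P} = \mat{U}\pmat$, choosing the shift so that this Popov form coincides with the Hermite form, i.e., $\mat{P}$ is, after a row permutation, (lower) triangular with monic diagonal entries. Since $\mat{U} \in \pmatRing[\rdim]$ is unimodular, its determinant is a nonzero constant $c := \det(\mat{U}) \in \field^\times$, and $\det(\pmat) = c^{-1}\det(\mat{P}) = c^{-1}\prod_{i=1}^{\rdim} \mat{P}_{i,i}$. Thus the determinant computation reduces to (i) producing $\mat{P}$ and (ii) tracking the scalar $c$ during the transformation.

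Concretely, the transformation is carried out in two stages matching the two subroutines advertised in the introduction. First, apply the algorithm of \cref{sec:reduced_to_weak_popov} to the row reduced input $\pmat$ to produce a shifted weak Popov form $\mat{W} = \mat{U}_1 \pmat$; only the scalar $c_1 := \det(\mat{U}_1)$ is maintained alongside $\mat{W}$, so $\mat{U}_1$ itself is never assembled. Second, apply the algorithm of \cref{sec:weak_popov_to_popov} to $\mat{W}$ to produce the shifted Popov form $\mat{P} = \mat{U}_2 \mat{W}$, again tracking only $c_2 := \det(\mat{U}_2)$. Finally, $\det(\pmat) = (c_1 c_2)^{-1}\prod_i \mat{P}_{i,i}$ is read off from the triangular shape of $\mat{P}$ in $\bigO{\rdim}$ field operations. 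Note that \cref{thm:charpoly} follows by specializing \cref{thm:polmat_det} to the characteristic matrix $x\idMat{\rdim}-\mat{M}$, which is row reduced with $\degdet = \rdim$, whence $\timegcd{\degdet/\rdim} = \timegcd{1} \in \bigO{1}$.

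The main obstacle, and the technical heart of the paper, is establishing the cost bound $\bigO{\rdim^\expmm \timegcd{\degdet/\rdim}}$ for the two stages. A straightforward divide-and-conquer implementation of either stage would cost $\bigO{\rdim^\expmm \timegcd{\degdet/\rdim} \log \rdim}$ and would propagate this parasitic $\log \rdim$ factor into \cref{thm:charpoly}, which is precisely what this paper aims to avoid. Removing the extra $\log \rdim$ relies on the fact that the row reducedness of $\pmat$, and of every intermediate matrix carefully maintained by the subroutines under the appropriate shifts, delivers sharp a priori bounds on the degrees of every polynomial matrix product that arises inside the recursions; this obviates the costly ``degree profile searches'' or column compressions that standard normal-form algorithms perform. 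With degrees known in advance, the internal recursions of both subroutines balance so that per-level work telescopes into $\bigO{\rdim^\expmm \timegcd{\degdet/\rdim}}$, instead of accumulating an additional $\log \rdim$, yielding the claimed bound in $\cref{thm:polmat_det}$.
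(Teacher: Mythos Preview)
Your proposal misidentifies the algorithm. The paper does \emph{not} compute any Hermite (or Popov) form of \(\pmat\) and then read the determinant off a triangular matrix. The two subroutines \algoname{ReducedToWeakPopov} and \algoname{WeakPopovToPopov} are auxiliary tools used \emph{inside} a recursive determinant algorithm \algoname{DeterminantOfWeakPopov}, which is based on the block identity
\[
  \begin{bmatrix} \idMat{} & \matz \\ \kerbas_1 & \kerbas_2 \end{bmatrix}
  \begin{bmatrix} \pmat_1 & \pmat_2 \\ \pmat_3 & \pmat_4 \end{bmatrix}
  =
  \begin{bmatrix} \pmat_1 & \pmat_2 \\ \matz & \mat{B} \end{bmatrix},
  \qquad
  \det(\pmat) = \frac{\det(\pmat_1)\det(\mat{B})}{\det(\kerbas_2)},
\]
where \([\kerbas_1\;\kerbas_2]\) is a kernel basis of \([\begin{smallmatrix}\pmat_1\\\pmat_3\end{smallmatrix}]\). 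This yields three recursive calls on half-size matrices; \algoname{ReducedToWeakPopov} is used to put the kernel basis in \(\shifts\)-weak Popov form (so that \(\mat{B}\) and \(\kerbas_2\) inherit the right forms), and \algoname{WeakPopovToPopov} is applied not to \(\pmat\) but to \(\revmat{}\trsp{\kerbas_2}\revmat{}\), in order to convert the \emph{shifted} weak Popov matrix \(\kerbas_2\) into a \(\shiftz\)-weak Popov matrix suitable for recursion. The delicate complexity analysis in \cref{sec:determinant:complexity} shows that this unusual three-way recursion (two calls at determinantal degree \(\le\degdet/2\), one at \(\le\degdet\)) sums to the target bound.

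Your proposed route also fails on cost grounds. To make the \(\shifts\)-Popov form coincide with the Hermite form one needs a shift with entries of size \(\Theta(\rdim\degdet)\), hence \(\sumTuple{\shifts}\in\Theta(\rdim^2\degdet)\). But \cref{thm:weak_popov_to_popov} gives cost \(\bigO{\rdim^\expmm\timegcd{\sumTuple{\shifts}/\rdim}}\), which with such a shift becomes \(\bigO{\rdim^\expmm\timegcd{\rdim\degdet}}\) rather than the required \(\bigO{\rdim^\expmm\timegcd{\degdet/\rdim}}\); for the characteristic matrix this is \(\bigO{\rdim^\expmm\timegcd{\rdim^2}}\), no better than prior work. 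Similarly, \cref{thm:reduced_to_weak_popov} has its \(\degdet\) parameter defined as \(\sumTuple{\rdeg[\shifts]{\pmat}}-\rdim\min(\shifts)\), which blows up for the Hermite shift. In short, neither subroutine is designed to operate at the Hermite shift within the claimed budget; the paper's contribution is precisely the recursive scheme that avoids computing any global normal form of \(\pmat\).
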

\noindent The fact that \(\deg(\det(\pmat))\) is the sum of row degrees is a
consequence of row reducedness \cite{Kailath80}, and the cost bound inclusion
follows from \(\deg(\det(\pmat)) \le \rdim\deg(\pmat)\). Taking \(\pmat = x
\idMat{\rdim} - \mat{M}\) for \(\mat{M} \in \matRing[\rdim]\),
\cref{thm:charpoly} is a direct corollary of \cref{thm:polmat_det}. The only
assumption needed in \cref{thm:charpoly} is \(\expmm>2\), since it implies the
existence of a polynomial multiplication algorithm such that \(\hypsl\),
\(\hypsm\), and \(\hypeps\) hold, such as Cantor and Kaltofen's algorithm
\cite{CanKal91}.

Previous polynomial matrix determinant algorithms with costs of the order of
\(\rdim^\expmm \deg(\pmat)\), up to logarithmic factors, have been listed
above: a randomized one from \cite{Storjohann03}, and a deterministic one from
\cite{LaNeZh17}. To our knowledge, this paper gives the first description of an
algorithm achieving such a cost involving no factor logarithmic in \(\rdim\).
Our approach partially follows the algorithm of \cite{LaNeZh17}, but also
substantially differs from it in a way that allows us to benefit from the
reducedness of \(\pmat\). The cost bound \(\bigO{\rdim^\expmm
\timegcd{\deg(\pmat)}}\) has been obtained before in
\cite[Sec.\,4.2.2]{GiJeVi03} in the particular case of a ``sufficiently
generic\footnote{Precisely, if the upper triangular, row-wise Hermite normal
  form of \(\pmat\) has diagonal entries \((1,\ldots,1,\lambda\det(\pmat))\),
  for some \(\lambda \in \field\setminus\{0\}\) making \(\lambda\det(\pmat)\)
monic.}'' matrix \(\pmat\). In that case, both the algorithm of \cite{LaNeZh17}
and the one here coincide and become the algorithm of
\cite[Sec.\,4.2.2]{GiJeVi03}; when \(\pmat\) is the characteristic matrix
\(x\idMat{\rdim} - \mat{M}\), this also relates to the fast algorithm
in \cite[Sec.\,6]{KelGeh85} for a generic \(\mat{M}\).

\subsection{Approach, and existing tools}
\label{sec:intro:approach}

For the sake of presentation, suppose \(\rdim\) is a power of \(2\). Writing
\(\pmat = [\begin{smallmatrix} \pmat_1 & \pmat_2 \\ \pmat_3 & \pmat_4
\end{smallmatrix}]\) with the \(\pmat_i\)'s of dimensions
\((\rdim/2) \times (\rdim/2)\), the algorithm of \cite{LaNeZh17} is based on
the block triangularization
\begin{equation*}
  \begin{bmatrix}
    \anyMat & \anyMat \\
    \kerbas_1 & \kerbas_2
  \end{bmatrix}
  \begin{bmatrix}
    \pmat_1 & \pmat_2 \\
    \pmat_3 & \pmat_4
  \end{bmatrix}
  =
  \begin{bmatrix}
    \mat{R} & \anyMat \\
    \matz & \mat{B}
  \end{bmatrix}
\end{equation*}
where the entries ``\(\anyMat\)'' are not computed, \(\mat{B} = \kerbas_1
\pmat_2 + \kerbas_2 \pmat_4\), and \(\mat{R}\) and \([\kerbas_1 \;\;
\kerbas_2]\) are computed from \([\begin{smallmatrix} \pmat_1 \\ \pmat_3
\end{smallmatrix}]\) as a \emph{row basis} and a \emph{kernel basis},
respectively (see \cref{sec:prelim:app_ker} for definitions). Then the leftmost
matrix in the above identity is unimodular~\cite[Lem.\,3.1]{LaNeZh17} and thus, up to a constant factor,
\(\det(\pmat)\) can be computed recursively as \(\det(\mat{R})\det(\mat{B})\).

A first observation is that neither the kernel basis computation nor the matrix
multiplication giving \(\mat{B}\) is an obstacle towards a cost which is free
of \(\log(\rdim)\). (The fastest known method for multiplying matrices with
unbalanced degrees, such as in \(\mat{B} = \kerbas_1 \pmat_2 + \kerbas_2
\pmat_4\), splits the computation into \(\bigO{\log(\rdim)}\) multiplications
of smaller matrices with balanced degrees \cite[Sec.\,3.6]{ZhLaSt12},
suggesting that its cost may involve a \(\log(\rdim)\) factor.) Indeed we show
that, under the above assumptions on \(\timepm{\cdot}\), the cost of these
operations is in \(\bigO{\rdim^\expmm \timepm{\degdet/\rdim}}\) and
\(\bigO{\rdim^\expmm \timegcd{\degdet/\rdim}}\), thus only involving factors
logarithmic in \(\degdet/\rdim\). In previous work, cost bounds either hide
logarithmic factors \cite{ZhLaSt12} or they are derived without assuming
\(\hypsm\) and have the form \(\bigO{\rdim^{\expmm-1} \timepm{\degdet}}\) and
\(\bigO{\rdim^{\expmm-1} \timegcd{\degdet}}\) \cite{JeNeScVi17}, thus resulting
in factors logarithmic in \(\degdet\).  Proving this observation is
straightforward from the analyses in \cite{ZhLaSt12,JeNeScVi17} (see
\cref{sec:prelim:subroutines}). This is a first key towards our main result:
the characteristic matrix has \(\degdet=\rdim\), and \(\bigO{\rdim^\expmm
\timepm{1}}\) is the same as \(\bigO{\rdim^\expmm}\) whereas
\(\bigO{\rdim^{\expmm-1} \timepm{\rdim}}\) involves factors logarithmic in
\(\rdim\).

However, the computation of the row basis \(\mat{R}\) remains an obstacle which
prevents the algorithm of \cite{LaNeZh17} from being a candidate for
\cref{thm:polmat_det}. Indeed, among the row basis algorithms we are aware of,
only one has a cost bound which fits into our target up to logarithmic factors:
the one of \cite{ZhoLab13}. It relies on three kernel bases computations, and
while one of them is similar to the computation of \([\kerbas_1 \;\;
\kerbas_2]\) and is handled via the algorithm of \cite{ZhLaSt12}, the two
others have different constraints on the input and were the subject of a
specific algorithm described in \cite[Sec.\,4]{ZhoLab13}. In this reference,
cost bounds were given without showing logarithmic factors; our own analysis
revealed the presence of a factor logarithmic in \(\rdim\). The algorithm has a
loop over \(\Theta(\log(\rdim))\) iterations, each of them calling
\cite[Algo.\,2]{ZhoLab12} for minimal approximant bases with unbalanced input.
This approximant basis algorithm may spend a logarithmic number of iterations for finding some
degree profile of the output basis, in a way reminiscent of Keller-Gehrig's
algorithm in \cite[Sec.\,5]{KelGeh85} which finds the lengths of Krylov
sequences (the link between the two situations becomes more explicit for
approximant bases at small orders, see \cite[Sec.\,7]{JeNeScVi17}).

Our attempts at accelerating the row basis algorithm of \cite{ZhoLab13} having
not succeeded, the algorithm in this paper follows an approach which is more
direct at first: remove the obstacle. Instead of computing a row basis
\(\mat{R}\) and relying on the identity \(\det(\pmat) =
\det(\mat{R})\det(\mat{B})\) (up to a constant), keep the first block row of
\(\pmat\):
\begin{equation}
  \label{eqn:nonsingular_triangularization}
  \begin{bmatrix}
    \idMat{\rdim/2} & \matz \\
    \kerbas_1 & \kerbas_2
  \end{bmatrix}
  \begin{bmatrix}
    \pmat_1 & \pmat_2 \\
    \pmat_3 & \pmat_4
  \end{bmatrix}
  =
  \begin{bmatrix}
    \pmat_1 & \pmat_2 \\
    \matz & \mat{B}
  \end{bmatrix}
\end{equation}
and rely on the identity \(\det(\pmat) = \det(\pmat_1) \det(\mat{B}) /
\det(\kerbas_2)\). The nonsingularity of \(\pmat_1\) and \(\kerbas_2\) is
easily ensured thanks to the assumption that \(\pmat\) is reduced, as discussed
in \cref{sec:intro:new_tools}.

This leads to an unusual recursion scheme: we are not aware of a similar scheme
being used in the literature on computational linear algebra. The algorithm
uses three recursive calls with \((\rdim/2) \times (\rdim/2)\) matrices whose
determinant has degree at most \(\degdet/2\) for two of them and at most
\(\degdet\) for the third; our complexity analysis in
\cref{sec:determinant:complexity} shows that such a recursion gives the cost in
\cref{thm:polmat_det}. Precisely, if \(\deg(\det(\pmat_1)) \le \degdet/2\) then
degree properties of minimal kernel bases imply that \(\deg(\det(\kerbas_2))
\le \degdet/2\), yielding the two calls in half the degree; otherwise the
algorithm uses inexpensive row and column operations on \(\pmat\) to reduce to
the case \(\deg(\det(\pmat_1)) \le \degdet/2\).

Although this approach removes the obstacle of row basis computation which
arises in \cite{LaNeZh17}, it adds a requirement: all recursive calls must take
input matrices that are reduced. In the next section we discuss how to ensure
the reducedness of \(\pmat_1\) and \(\mat{B}\) thanks to a straightforward generalization
of \cite[Sec.\,3]{SarSto11}, and we describe a new algorithm which handles the
more involved case of \(\kerbas_2\).

\subsection{New tools, and ensuring reduced form in recursive calls}
\label{sec:intro:new_tools}

When outlining the approach of our determinant algorithm via the identity in
\cref{eqn:nonsingular_triangularization}, we implicitly assumed that the
matrices used as input in recursive calls, i.e.~\(\pmat_1\) and \(\kerbas_2\)
and \(\mat{B}\), do satisfy the input requirement of row reducedness: this is
not necessarily the case, even if starting from a reduced matrix \(\pmat\).

Concerning \(\pmat_1\), one may locate such a reduced submatrix of \(\pmat\)
and then permute rows and columns of \(\pmat\) (which only affects the sign of 
\(\det(\pmat)\)) to make this submatrix become the leading
principal submatrix \(\pmat_1\). This is a classical operation on reduced
matrices which suggests using a form slightly stronger than reduced form called
\emph{weak Popov form} \cite{MulSto03} (see \cref{sec:prelim:popov}). Assuming
that \(\pmat\) has this form ensures that its leading principal submatrix
\(\pmat_1\) has it as well.  This assumption is acceptable in terms of
complexity since one can transform a reduced \(\pmat\) into a weak Popov
\(\mat{P}\) by means of fast linear algebra in a cost negligible compared to
our target \cite[Sec.\,3]{SarSto11}; note that \(\pmat\) and \(\mat{P}\) have
the same determinant up to an easily found constant (see
\cref{algo:determinant_reduced}).

Next, the cases of \(\kerbas_2\) and \(\mat{B}\) are strongly linked. First, we
will not discuss \(\kerbas_2\) but the whole kernel basis \([\kerbas_1 \;\;
\kerbas_2]\). The fastest known algorithm for computing such a basis is that of
\cite{ZhLaSt12}, and for best efficiency it outputs a matrix in \emph{shifted}
reduced form, which is a generalization of reducedness involving degree weights
given by a tuple \(\shifts\in\ZZ^\rdim\) called a \emph{shift} (see
\cref{sec:prelim:reduced,sec:prelim:popov} for definitions); the non-shifted
case is for \(\shifts=\shiftz\). As in the determinant algorithm of
\cite{LaNeZh17}, here the shift for \([\kerbas_1 \;\; \kerbas_2]\) is taken as
the list of row degrees of \(\pmat\), denoted by \(\shifts=\rdeg{\pmat}\); for
the characteristic matrix one has \(\shifts=(1,\ldots,1)\) but non-uniform
shifts may arise in recursive calls. We want \([\kerbas_1 \;\; \kerbas_2]\) to
be not only \(\shifts\)-reduced, but in \(\shifts\)-weak Popov form: a direct
consequence is that \(\mat{B}\) is in weak Popov form, and is thus suitable
input for a recursive call.

To obtain \([\kerbas_1 \;\; \kerbas_2]\) we use the kernel basis algorithm of
\cite{ZhLaSt12} and transform its output into \(\shifts\)-weak Popov form. A
minor issue is that the fastest known algorithm for such transformations was
written in \cite[Sec.\,3]{SarSto11} for non-shifted forms; yet it easily
extends to shifted forms as we show in \cref{sec:reduced_to_weak_popov},
obtaining the next result.

\begin{theorem}
  \label{thm:reduced_to_weak_popov}
  There is an algorithm~\algoname{ReducedToWeakPopov} which takes as input a
  matrix \(\pmat\in\pmatRing[\rdim][\cdim]\) with \(\rdim\leq \cdim\) and a shift
  \(\shifts\in\ZZ^\cdim\) such that \(\pmat\) is in \(\shifts\)-reduced form,
  and returns an \(\shifts\)-weak Popov form of \(\pmat\) using
  \(\bigO{\rdim^{\expmm-2} \cdim \degdet + \rdim^{\expmm-1} \cdim}\) operations
  in \(\field\), where \(\degdet = \sumTuple{\rdeg[\shifts]{\pmat}} - \rdim
  \cdot \min(\shifts)\).
\end{theorem}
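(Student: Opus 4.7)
The plan is to generalize the algorithm of \cite[Sec.\,3]{SarSto11}, which handles the unshifted case \(\shifts=\shiftz\), to arbitrary shifts \(\shifts\in\ZZ^\cdim\). The generalization is natural: every occurrence of row degree and leading matrix is replaced by its \(\shifts\)-shifted analogue, and the structural arguments carry over with only minor adaptations. The algorithm computes the \(\shifts\)-leading matrix of \(\pmat\), finds a constant transformation that eliminates pivot collisions, and applies it to \(\pmat\); the stated complexity then follows from a balance between fast linear algebra over \(\field\) on the leading matrix and fast polynomial matrix arithmetic for applying the transformation.

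First, I compute the \(\shifts\)-leading matrix \(\mat{L}=\lmat[\shifts]{\pmat}\in\field^{\rdim\times\cdim}\), which is immediate from the \(\shifts\)-row degrees of \(\pmat\) and costs \(\bigO{\rdim\cdim}\) operations; since \(\pmat\) is \(\shifts\)-reduced, \(\mat{L}\) has full row rank \(\rdim\). A fast Gaussian elimination (e.g.~the LSP algorithm \cite{IbaMorHui82}) on \(\mat{L}\) in \(\bigO{\rdim^{\expmm-1}\cdim}\) field operations produces a nonsingular \(\mat{U}\in\field^{\rdim\times\rdim}\) such that the rows of \(\mat{U}\mat{L}\) have pairwise distinct rightmost-nonzero columns, i.e.~the would-be \(\shifts\)-pivot indices of \(\mat{U}\pmat\) are all distinct. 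The elimination is performed after sorting the rows of \(\mat{L}\) by decreasing \(\shifts\)-row degree of \(\pmat\), and only operations that combine rows of larger \(\shifts\)-degree into rows of smaller or equal \(\shifts\)-degree are permitted; this shifted analogue of the ordering used in \cite{SarSto11} ensures that \(\mat{U}\) preserves every \(\shifts\)-row degree when applied to \(\pmat\), whence \(\lmat[\shifts]{\mat{U}\pmat}=\mat{U}\mat{L}\) and \(\mat{U}\pmat\) is in \(\shifts\)-weak Popov form.

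Applying \(\mat{U}\) to \(\pmat\) is the last step. The unshifted row degrees of \(\pmat\) satisfy \(\rdeg{\matrow{\pmat}{i}}\leq\rdeg[\shifts]{\matrow{\pmat}{i}}-\min(\shifts)\), so their sum is at most \(\degdet\). Decomposing \(\pmat\) into \(\lceil\cdim/\rdim\rceil\) column blocks of width \(\rdim\) and computing each block product against \(\mat{U}\) coefficient by coefficient yields a total cost of \(\bigO{\rdim^{\expmm-2}\cdim\degdet+\rdim^{\expmm-1}\cdim}\), matching the claimed bound. The main obstacle to implementing this plan rigorously is showing that the pivot-ordering rule indeed prevents any cancellation of \(\shifts\)-leading terms when \(\mat{U}\) is applied to \(\pmat\), since this is what justifies handling all pivot collisions with a single precomputed \(\mat{U}\) instead of iterating row by row in the style of Mulders--Storjohann; I expect this to follow from the same triangular-structure argument as in \cite{SarSto11}, carried out over the ordering induced by the shift.
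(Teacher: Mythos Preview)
There is a genuine gap in your correctness argument: the transformation that brings an \(\shifts\)-reduced matrix to \(\shifts\)-weak Popov form is \emph{not} a constant matrix in general, and your claim that a constant \(\mat{U}\in\matRing[\rdim]\) can be chosen to preserve all \(\shifts\)-row degrees of \(\pmat\) is false. The issue is not subtle and already arises in the unshifted case of \cite{SarSto11}, which you may be misremembering: there too the unimodular transformation is polynomial, namely \(\mat{U} = \xDiag{\shiftt}\mat{T}\xDiag{-\shiftt}\) where \(\shiftt=\rdeg[\shifts]{\pmat}\) (sorted nondecreasingly) and \(\mat{T}\in\matRing[\rdim]\) is a constant lower-triangular matrix obtained by Gaussian elimination on \(\lmat[\shifts]{\pmat}\). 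The entry \((i,j)\) of \(\mat{U}\) is \(T_{ij}\var^{t_i-t_j}\), which is a polynomial precisely because of the nondecreasing sort and the lower-triangular shape of \(\mat{T}\); applying this polynomial \(\mat{U}\) is what preserves \(\shifts\)-row degrees, via the predictable degree property.

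A concrete counterexample to your constant-\(\mat{U}\) plan: take \(\rdim=\cdim=2\), \(\shifts=(0,0)\), and \(\pmat=\begin{bmatrix}\var&\var\\1&2\end{bmatrix}\), which is \(\shiftz\)-reduced with \(\rdeg{\pmat}=(1,0)\) and both \(\shiftz\)-pivots in column \(2\). If you add a constant multiple of the degree-\(1\) row to the degree-\(0\) row (your ``larger into smaller'' rule), the degree-\(0\) row becomes degree \(1\) and \(\lmat[\shiftz]{\mat{U}\pmat}\) drops rank; if instead you add a constant multiple of the degree-\(0\) row to the degree-\(1\) row, the \(\shiftz\)-leading row of the latter is unchanged and the pivot collision persists. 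Either way \(\lmat[\shifts]{\mat{U}\pmat}\ne\mat{U}\,\lmat[\shifts]{\pmat}\). The paper's fix is exactly the polynomial \(\mat{U}\) above, after which the product \(\mat{U}\pmat\) is computed within the stated cost by partitioning the rows of \(\pmat\) into \(\bigO{\log\rdim}\) groups according to dyadic ranges of \(\shifts\)-row degree and exploiting the lower-triangular structure of \(\mat{T}\); your ``coefficient-by-coefficient'' multiplication does not address this since \(\mat{U}\) is polynomial with unbalanced degrees.
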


\noindent Here, following usual notation recalled in
\cref{sec:prelim:notation}, \(\sumTuple{\rdeg[\shifts]{\pmat}}\) is the sum of
the \(\shifts\)-degrees of the rows of \(\pmat\). This result extends
\cite[Thm.\,13]{SarSto11} since for \(\shifts=\shiftz\) the quantity
\(\degdet\) is the sum of the row degrees of \(\pmat\) and in particular
\(\degdet \le \rdim \deg(\pmat)\), leading to the cost bound
\(\bigO{\rdim^{\expmm-1} \cdim \deg(\pmat)}\).

To summarize, at this stage we have outlined how to ensure, without exceeding
our target cost bound, that \(\pmat_1\) and \(\mat{B}\) are valid input for
recursive calls, i.e.~are in weak Popov form. Having \(\det(\pmat_1)\) and
\(\det(\mat{B})\), it remains to find \(\det(\kerbas_2)\) and then the sought
\(\det(\pmat)\) follows. We noted that, to ensure the form of \(\mat{B}\) but
also for efficiency reasons, the kernel basis \([\kerbas_1 \;\; \kerbas_2]\) is
computed in \(\shifts\)-weak Popov form for the shift \(\shifts=\rdeg{\pmat}\).
This causes the main difficulty related to our modification of the determinant
algorithm of \cite{LaNeZh17}: \(\kerbas_2\) is not valid input for a recursive
call since it is in \(\tuple{v}\)-weak Popov form for some shift \(\tuple{v}\),
a subtuple of \(\shifts\) which is possibly nonzero.

A first idea is to extend our approach to the shifted case, allowing recursive
calls with such a \(\tuple{v}\)-reduced matrix: this is straightforward but
gives an inefficient algorithm. Indeed, along the recursion the shift drifts
away from its initial value and becomes arbitrarily large and unbalanced with
respect to the degrees of the input matrices of recursive calls. For example,
as mentioned above the sum of row degrees of the initial non-shifted
\(\rdim\times\rdim\) matrix \(\pmat\) is \(\degdet=\deg(\det(\pmat))\), whereas
for the \(\tuple{v}\)-shifted \((\rdim/2)\times(\rdim/2)\) matrix \(\kerbas_2\)
we only have the same bound \(\degdet\) instead of one related to
\(\deg(\det(\kerbas_2))\) itself, which is known to be at most \(\degdet/2\) in
our algorithm. This gap, here between \(\degdet\) and \(\degdet/2\), will only
grow as the algorithm goes down the tree of recursive calls, meaning that
degrees in matrices handled recursively are not sufficiently well controlled.

Another idea is to compute a \(\shiftz\)-reduced matrix which has the same
determinant as \(\kerbas_2\). Finding a \(\shiftz\)-reduced form of
\(\kerbas_2\) within our target cost seems to be a difficult problem. The best
known algorithms for general \(\shiftz\)-reduction involve \(\log(\rdim)\)
factors, either explicitly \cite{GuSaStVa12} or implicitly \cite{NeiVu17} (in
the latter approach one starts by using the above-discussed triangularization
procedure of \cite{LaNeZh17} which we are modifying here to avoid
\(\log(\rdim)\) factors). More specific algorithms exploit the form of
\(\kerbas_2\), interpreting the problem as a change of shift from \(\tuple{v}\)
to \(\shiftz\); yet at the time of writing efficient changes of shifts have
only been achieved when the target shift is larger than the origin shift
\cite[Sec.\,5]{JeNeScVi17}, a fact that offers degree control for the
transformation between the two matrices. Another possibility is to compute the
so-called \(\tuple{v}\)-Popov form \(\mat{P}\) of \(\kerbas_2\), since its
transpose \(\trsp{\mat{P}}\) is \(\shiftz\)-reduced by definition (see
\cref{sec:prelim:popov}), and \(\det(\trsp{\mat{P}}) = \det(\mat{P})\) is
\(\det(\kerbas_2)\) up to a constant. However this suffers from the same issue,
as computing \(\mat{P}\) is essentially the same as changing the shift
\(\tuple{v}\) into the nonpositive shift \(-\pivDegs\), where \(\pivDegs\) is
the list of diagonal degrees of \(\kerbas_2\) \cite{SarSto11,JeNeScVi16}.

To circumvent these issues, we use the property that the transpose
\(\trsp{\kerbas_2}\) of a \(\tuple{v}\)-reduced matrix is in
\(-\tuple{d}\)-reduced form where \(\tuple{d} = \rdeg[\tuple{v}]{\kerbas_2}\).
This fact naturally comes up here since \(\det(\kerbas_2) =
\det(\trsp{\kerbas_2})\), but seems otherwise rarely exploited in polynomial
matrix algorithms: in fact we did not find a previous occurrence of it apart
from related degree considerations in \cite[Lem.\,2.2]{ZhoLab13}.

Transposing the above two approaches using \(\trsp{\kerbas_2}\) instead of
\(\kerbas_2\), we observe that computing a \(\shiftz\)-reduced form of
\(\trsp{\kerbas_2}\) is a change of shift from \(-\tuple{d}\) to
\(\tuple{0}\), and computing the \(-\tuple{d}\)-Popov form \(\mat{P}\) of
\(\trsp{\kerbas_2}\) is essentially a change of shift from \(-\tuple{d}\) to
\(-\pivDegs\). In both cases the target shift is larger than the origin shift,
implying that the kernel-based change of shift of \cite[Sec.\,5]{JeNeScVi17}
involves matrices of well-controlled degrees. Still, this is not enough to make
this change of shift efficient as such, the difficulty being now that the
average row degree of \(\trsp{\kerbas_2}\) may not be small: only its average
column degree, which corresponds to the average row degree of \(\kerbas_2\), is
controlled.

Our solution uses the second approach, computing the \(-\tuple{d}\)-Popov form
\(\mat{P}\), because it offers the a priori knowledge that the column degrees
of \(\mat{P}\) are exactly \(\pivDegs\). We exploit this degree knowledge to
carry out partial linearization techniques, originally designed for approximant
bases \cite{Storjohann06,JeaNeiVil20}, which we extend here to kernel bases.
These techniques allow us to reduce our problem to a kernel basis computation
where the matrix entries have uniformly small degrees, implying that it can be
efficiently handled via the minimal approximant basis algorithm
\algoname{PM-Basis} from \cite{GiJeVi03}. The next result summarizes the new
algorithmic tool developed in \cref{sec:weak_popov_to_popov} for finding
\(\mat{P}\).

\begin{theorem}
  \label{thm:weak_popov_to_popov}
  Let \(\shifts\in\ZZ^\rdim\), let \(\pmat\in\pmatRing[\rdim]\) be in
  \(-\shifts\)-weak Popov form, let \(\pivDegs \in \NN^\rdim\) be the
  \(-\shifts\)-pivot degree of \(\pmat\), and assume that \(\shifts \ge
  \pivDegs\). There is an algorithm~\algoname{WeakPopovToPopov} which takes as
  input \((\pmat,\shifts)\) and computes the \(-\shifts\)-Popov form of
  \(\pmat\) by
  \begin{itemize}
    \item performing \algoname{PM-Basis} at order less than
      \(\sumTuple{\shifts}/\rdim  + 4\) on an input matrix of row dimension at
      most \(6\rdim\) and column dimension at most \(3\rdim\),
    \item multiplying the inverse of a matrix in \(\matRing[\rdim]\) by a
      matrix in \(\pmatRing[\rdim]\) of column degree \(\pivDegs\),
    \item and performing \(\bigO{\rdim^2}\) extra operations in \(\field\).
  \end{itemize}
  Thus, computing the \(-\shifts\)-Popov form of \(\pmat\) can be done in
  \(\bigO{\rdim^\expmm \timegcd{\sumTuple{\shifts}/\rdim}}\) operations in
  \(\field\).
\end{theorem}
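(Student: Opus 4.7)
My plan is to exploit the defining property of the $-\shifts$-Popov form $\mat{P}$ of $\pmat$: writing $\mat{P} = \mat{U}\pmat$ for an unknown unimodular $\mat{U}$, the column degrees of $\mat{P}$ are determined by $\pivDegs$ and the pivot pattern of $\pmat$, and its $\pivDegs$-leading column coefficient matrix is the identity up to that pivot permutation. I would formulate the computation of $\mat{U}$ as a minimal approximant basis problem at order $\sigma\le \sumTuple{\shifts}/\rdim + 4$, following the change-of-shift strategy from $-\shifts$ to $-\pivDegs$ sketched in \cref{sec:intro:new_tools}. The hypothesis $\shifts \ge \pivDegs$ makes this a change to a \emph{larger} shift, which in turn controls the degree of $\mat{U}$ and ensures that order $\sigma$ is already sufficient to recover $\mat{U}$ uniquely from the approximant basis.

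The main technical obstacle is that the rows of $\pmat$ may have very unbalanced degrees: only the sum of the $-\shifts$-pivot degrees is controlled, not each individual row. A direct call to \algoname{PM-Basis} on $\pmat$ would then require an order as large as the maximum row degree of $\pmat$, defeating the target complexity. To circumvent this I would apply partial linearization, extending to the present kernel/approximant setting the techniques devised for approximant bases in \cite{Storjohann06,JeaNeiVil20}. Each row of $\pmat$ whose degree exceeds $\sigma$ is split into several rows of degree less than $\sigma$; a symmetric treatment is applied on the column side using the a priori known target column degree $\pivDegs$. The split counts are tuned so that the linearized input has at most $6\rdim$ rows and at most $3\rdim$ columns, and so that the relevant block of a minimal $\sigma$-approximant basis of the linearized problem, after de-linearization, reassembles exactly to $\mat{U}$.

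Invoking \algoname{PM-Basis} on the linearized input and reversing the linearization then yields a $-\shifts$-reduced matrix $\widetilde{\mat{P}} = \mat{U}\pmat$ of column degree $\pivDegs$. To turn it into the $-\shifts$-Popov form it remains to multiply on the left by the inverse $\mat{L}^{-1}$ of the $\pivDegs$-leading column coefficient matrix $\mat{L}\in\matRing[\rdim]$ of $\widetilde{\mat{P}}$, which is nonsingular by reducedness; this is exactly the second item of the statement, since $\widetilde{\mat{P}}$ has column degree $\pivDegs$. The extraction of $\mat{L}$, the pivot permutation, and the bookkeeping for the linearization account for the $\bigO{\rdim^2}$ overhead, and the announced cost $\bigO{\rdim^\expmm \timegcd{\sumTuple{\shifts}/\rdim}}$ follows from the standard complexity of \algoname{PM-Basis} at order $\sigma$ on an $O(\rdim)\times O(\rdim)$ input together with the cost of the constant-by-polynomial product. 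The hardest part will be the correctness analysis of the partial linearization in this kernel-type formulation: proving that the reassembled matrix really is $\mat{U}$ rather than a spurious solution of the linearized problem, and tuning the splitting so that the dimensions and the order match $6\rdim$, $3\rdim$ and $\sumTuple{\shifts}/\rdim+4$ exactly.
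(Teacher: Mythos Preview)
Your proposal is correct and follows essentially the same approach as the paper: the paper formulates the problem as computing a $-\cdegs$-weak Popov kernel basis $[\mat{U}\;\;\mat{R}]$ of $\left[\begin{smallmatrix}\pmat\\-\idMat{\rdim}\end{smallmatrix}\right]$ with $\cdegs=(\shifts-\pivDegs,\pivDegs)$ (\cref{lem:normalize,cor:normalize}), then reduces this kernel computation to a single \algoname{PM-Basis} call via two partial linearizations---output column linearization to balance the shift (\cref{sec:weak_popov_to_popov:output_parlin}) and overlapping linearization to balance the order (\cref{sec:weak_popov_to_popov:ovlp_parlin})---before normalizing via $\lmat[-\pivDegs]{\mat{R}}^{-1}\mat{R}$. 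Your description of splitting rows and columns using the known output degree $\pivDegs$, and your identification of the linearization correctness as the crux, match this exactly; the only cosmetic difference is that the paper packages the change-of-shift explicitly as a kernel problem rather than speaking of computing $\mat{U}$ directly.
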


This theorem is a generalization of \cite[Sec.\,4]{SarSto11} to shifted forms,
for shifts \(-\shifts\) that satisfy the assumption \(\shifts \ge \pivDegs\).
Indeed, if \(\pmat\) is \(\shiftz\)-weak Popov, then one recovers
\cite[Thm.\,20]{SarSto11} by taking
\(\shifts=(\deg(\pmat),\ldots,\deg(\pmat))\) in the above theorem. For
comparison, the naive generalization of \cite[Sec.\,4]{SarSto11} to shifted
forms
runs in \(\bigO{\rdim^\expmm \timegcd{\max(\shifts)}}\), which exceeds our
target complexity as soon as \(\max(\shifts) \gg \sumTuple{\shifts}/\rdim\).
Hence the use of partial linearization techniques, which were not needed in the
non-shifted case featuring \(\max(\shifts) = \sumTuple{\shifts}/\rdim =
\deg(\pmat)\).

As mentioned above, our Algorithm~\algoname{WeakPopovToPopov} is based on the
computation of a kernel basis with a priori knowledge of the degree profile of
the output. This kernel problem is very close to the one handled in
\cite[Sec.\,4]{ZhoLab13}, except that in this reference one only has upper
bounds on the output degrees, implying a certain number---possibly logarithmic
in \(\rdim\)---of calls to \algoname{PM-Basis} to recover the output and its
actual degrees. In the same spirit but in the context of approximant bases, \cite[Sec.\,5]{JeaNeiVil20}
uses partial linearization techniques to reduce an arbitrary input with known
output degrees to essentially one call to \algoname{PM-Basis}, whereas
\cite[Algo.\,2]{ZhoLab12}  assumes weaker output degree information
and makes a potentially logarithmic number of calls to \algoname{PM-Basis}.

\subsection{Perspectives}
\label{sec:intro:perspectives}

We plan to implement our characteristic polynomial algorithm in the LinBox
ecosystem \cite{fflas-ffpack,Linbox}. First prototype experiments suggest that,
for large finite fields, it could be competitive with the existing
fastest-known implementation, based on the randomized algorithm of
\cite{PerSto07}. The native support for small fields of our algorithm should
outperform the algorithm of \cite{PerSto07} which requires expensive field
extensions. Another perspective stems from the remark that our algorithm
resorts to fast polynomial multiplication (see assumption $\hypeps$), while
previous ones did not \cite{KelGeh85,PerSto07}: we woud like to understand
whether the same cost can be achieved by a purely linear algebraic approach.
Finally, perhaps the most challenging problem related to characteristic
polynomial computation is to compute Frobenius forms deterministically in the
time of matrix multiplication, the current best known complexity bound being
\(\bigO{\rdim^\expmm \log(\rdim)\log(\log(\rdim))}\) \cite{Storjohann01}; and
more generally computing Smith forms of polynomial matrices with a cost free of
factors logarithmic in the matrix dimension.

\section{Preliminaries on polynomial matrices}
\label{sec:prelim}

In this section we present the notation as well as basic definitions and
properties that will be used throughout the paper.

\subsection{Notation}
\label{sec:prelim:notation}

Tuples of integers will often be manipulated entry-wise. In particular, for
tuples \(\tuple{s},\tuple{t}\in\ZZ^\cdim\) of the same length \(\cdim\), we
write \(\tuple{s} + \tuple{t}\) for their entry-wise sum, and the inequality
\(\tuple{s} \le \tuple{t}\) means that each entry in \(\tuple{s}\) is less than
or equal to the corresponding entry in \(\tuple{t}\).  The concatenation of
tuples is denoted by \((\tuple{s},\tuple{t})\). We write
\(\sumTuple{\tuple{t}}\) for the sum of the entries of \(\tuple{t}\). The tuple
of zeros is denoted by \(\shiftz = (0,\ldots,0)\); its length is understood
from the context. 

For an \(\rdim \times \cdim\) matrix \(\pmat\) over some ring, we write
\(\pmat_{i,j}\) for its entry at index \((i,j)\). We extend this to
submatrices: given sets \(I \subseteq \{1,\ldots,\rdim\}\) and \(J \subseteq
\{1,\ldots,\cdim\}\) of row and column indices, we write \(\pmat_{I,J}\) for
the submatrix of \(\pmat\) formed by its entries indexed by \(I \times J\).
Besides, \(\matrows{\pmat}{I}\) stands for the submatrix of \(\pmat\) formed by
its rows with index in \(I\), and we use the similar notation
\(\matcols{\pmat}{J}\). The transpose of \(\pmat\) is denoted by
\(\trsp{\pmat}\). The identity matrix of size \(\cdim\) is denoted by
\(\idMat{\cdim}\), while the \(\cdim\times\cdim\) matrix with \(1\) on the
antidiagonal and \(0\) elsewhere is denoted by \(\revmat{\cdim}\). In
particular, when writing \(\shifts\revmat{\cdim}\) for a tuple \(\shifts =
(\shift_1,\ldots,\shift_\cdim) \in \ZZ^\cdim\), we mean the reversed tuple
\(\shifts\revmat{\cdim} = (\shift_\cdim,\ldots,\shift_1)\).

Now consider \(\pmat\) with polynomial entries, i.e.~\(\pmat \in
\pmatRing[\rdim][\cdim]\). The \emph{degree} of \(\pmat\) is denoted by
\(\deg(\pmat)\) and is the largest of the degrees of its entries, or
\(-\infty\) if \(\pmat=\matz\). The \emph{row degree} of \(\pmat\) is the tuple
\(\rdeg{\pmat} \in (\NN \cup \{-\infty\})^{\rdim}\) whose \(i\)th entry is
\(\max_{1\le j \le \cdim}(\deg(\pmat_{i,j}))\). More generally, for a tuple
\(\shifts = (\shift_1,\ldots,\shift_\cdim) \in \ZZ^\cdim\), the
\emph{\(\shifts\)-row degree} of \(\pmat\) is the tuple \(\rdeg[\shifts]{\pmat}
\in (\ZZ \cup \{-\infty\})^{\rdim}\) whose \(i\)th entry is \(\max_{1\leq j
\leq \cdim}(\deg(\pmat_{i,j}) + \shift_j)\). In this context, the tuple
\(\shifts\) is commonly called a \emph{(degree) shift} \cite{BeLaVi99}. The
\emph{(shifted) column degree} of \(\pmat\) is defined similarly.

We write \(\xDiag{\shifts}\) for the \(\cdim \times \cdim\) diagonal matrix
\(\diag{\var^{\shift_1},\ldots,\var^{\shift_\cdim}}\) which is over the ring
\(\field[\var,\var^{-1}]\) of Laurent polynomials over \(\field\). Note that,
hereafter, Laurent polynomials will only arise in proofs and explanations, more
specifically in considerations about shifted degrees: they never arise in
algorithms, which for the sake of clarity only involve polynomials in
\(\polRing\). The usefulness of this matrix \(\xDiag{\shifts}\) will become
clear in the definition of leading matrices in the next subsection.

The next lemma gives a link between shifted row degrees and shifted column
degrees. We will mostly use the following particular case of it: the column
degree of \(\pmat\) is at most \(\cdegs\in\NN^\cdim\) (entry-wise) if and only
if the \(-\cdegs\)-row degree of \(\pmat\) is nonpositive.

\begin{lemma}[{\cite[Lemma~2.2]{ZhoLab13}}]
  \label{lem:rdegcdeg}
  Let \(\pmat\) be a matrix in \(\pmatRing[\rdim][\cdim]\), \(\cdegs\) be a
  tuple in \(\ZZ^\cdim\), and \(\tuple{t}\) be a tuple in \(\ZZ^\rdim\). Then,
  \(\cdeg[\tuple{t}]{\pmat} \leq \cdegs\) if and only if
  \(\rdeg[-\cdegs]{\pmat}\leq -\tuple{t}\).
\end{lemma}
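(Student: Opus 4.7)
The plan is to show the biconditional by unraveling both sides into the same entry-wise inequality. The key observation is that a shifted row-degree inequality and a shifted column-degree inequality are both, by definition, maxima over one index of an expression in $\deg(\pmat_{i,j})$ plus shift contributions; bounding such a maximum is equivalent to bounding every term, so the ``for all $i$'' and ``for all $j$'' quantifiers can be swapped freely.

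First I would unfold the left-hand side: the condition $\cdeg[\tuple{t}]{\pmat} \le \cdegs$, by the definition of shifted column degree given just above the lemma, says that for every column index $j \in \{1,\ldots,\cdim\}$, $\max_{1 \le i \le \rdim}(\deg(\pmat_{i,j}) + t_i) \le d_j$. Since the maximum of a finite family is bounded by $d_j$ iff every member is, this is equivalent to the entry-wise statement $\deg(\pmat_{i,j}) + t_i \le d_j$ for all $i,j$, with the usual conventions $-\infty + c = -\infty$ and $-\infty \le c$ handling zero entries.

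Next I would rearrange this entry-wise inequality to $\deg(\pmat_{i,j}) - d_j \le -t_i$ for all $i,j$, and then take the maximum over $j$ on the left: this is equivalent to $\max_{1 \le j \le \cdim}(\deg(\pmat_{i,j}) + (-d_j)) \le -t_i$ for every $i$, which is precisely $\rdeg[-\cdegs]{\pmat} \le -\tuple{t}$ by the definition of shifted row degree (applied with shift $-\cdegs$). Reading these equivalences forward gives one implication and reading them backward gives the other, so the biconditional follows.

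There is no real obstacle here; the only point requiring a moment of care is the bookkeeping of $-\infty$ values for zero rows, columns, or entries, which is handled by the standard conventions that $-\infty$ is absorbing under addition and is dominated by every integer. Since the argument is purely definitional and symmetric in its treatment of rows and columns via the negation of the shift, no further machinery is needed.
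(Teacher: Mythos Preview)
Your proof is correct: both conditions unwind to the same entry-wise inequality \(\deg(\pmat_{i,j}) \le d_j - t_i\) for all \(i,j\), and the swap of quantifiers is exactly the point. The paper does not actually give a proof of this lemma---it is stated with a citation to \cite[Lemma~2.2]{ZhoLab13} and left unproved---so there is no in-paper argument to compare against; your definitional argument is the standard one.
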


\subsection{Bases of modules, kernel bases and approximant bases}
\label{sec:prelim:app_ker}

We recall that any \(\polRing\)-submodule \(\mathcal{M}\) of \(\pmatRing[1][\cdim]\)
is free, and admits a basis formed by \(\rk\) elements of \(\pmatRing[1][\cdim]\),
where \(\rk \le \cdim\) is called the rank of \(\mathcal{M}\) \cite[see
e.g.][]{DumFoo04}. Such a basis can thus be represented as an
\(\rk\times\cdim\) matrix \(\mat{B}\) over \(\polRing\) whose rows are the
basis elements; this basis matrix \(\mat{B}\) has rank \(\rk\).

For a matrix \(\pmat \in \pmatRing[\rdim][\cdim]\), its \emph{row space} is the
\(\polRing\)-submodule \(\{\row{p} \pmat , \row{p} \in \pmatRing[1][\rdim]\}\)
of \(\pmatRing[1][\cdim]\), that is, the set of all \(\polRing\)-linear
combinations of its rows. If \(\rowbas \in \pmatRing[\rk][\cdim]\) is a basis
of this row space, then \(\rowbas\) is said to be a \emph{row basis} of
\(\pmat\); in particular, \(\rk\) is the rank of \(\rowbas\) and of \(\pmat\).

The \emph{left kernel} of \(\pmat\), denoted by \(\modKer{\pmat}\), is the
\(\polRing\)-module \(\{ \row{p} \in \pmatRing[1][\rdim] \mid \row{p} \pmat =
\matz \}\). A matrix \(\kerbas \in \pmatRing[k][\rdim]\) is a \emph{left kernel
basis of \(\pmat\)} if its rows form a basis of \(\modKer{\pmat}\), in which case
\(k=\rdim-\rk\).  Similarly, a \emph{right} kernel basis of \(\pmat\) is a
matrix \(\kerbas \in \pmatRing[\cdim][(\cdim-\rk)]\) whose \emph{columns} form
a basis of the right kernel of \(\pmat\).

Given positive integers \(\orders=(\order_1,\ldots,\order_\cdim)\in\ZZp^\cdim\)
and a matrix \(\sys \in \pmatRing[\rdim][\cdim]\), the set of
\emph{approximants for \(\sys\) at order \(\orders\)} \cite[see
e.g.][]{BarBul92,BecLab94} is the \(\polRing\)-submodule of
\(\pmatRing[1][\rdim]\) defined as
\[
  \modApp{\orders}{\sys} =
  \{ \app \in \pmatRing[1][\rdim] \mid \app \sys = \matz \bmod \xDiag{\orders} \}.
\]
The identity \(\app \sys = \matz \bmod \xDiag{\orders}\) means that \(\app
\matcol{\sys}{j} = 0 \bmod \var^{\order_j}\) for \(1 \le j \le \cdim\). Since
all \(\rdim\) rows of the matrix \(x^{\max(\orders)} \idMat{\rdim}\) are in
\(\modApp{\orders}{\sys}\), this module has rank \(\rdim\).

\subsection{Leading matrices and reduced forms of polynomial matrices}
\label{sec:prelim:reduced}

We will often compute with polynomial matrices that have a special form, called
the \emph{(shifted) reduced form}. It corresponds to a type of minimality of
the degrees of such matrices, and also provides good control of these degrees
during computations as illustrated by the \emph{predictable degree property}
\cite{Forney75} \cite[Thm.\,6.3-13]{Kailath80} which we recall below. In this
section, we introduce the notion of \emph{row} reducedness; to avoid confusion,
we will not use the similar notion of column reducedness in this paper, and
thus all further mentions of reducedness refer to row reducedness.

For shifted reduced forms, we follow the definitions in
\cite{BeLaVi99,BeLaVi06}. Let \(\pmat \in \pmatRing[\rdim][\cdim]\) and
\(\shifts \in \ZZ^\cdim\), and let \(\shiftt = (t_1,\ldots,t_\rdim) =
\rdeg[\shifts]{\pmat}\). Then, the \(\shifts\)-leading matrix of \(\pmat\) is
the matrix \(\lmat[\shifts]{\pmat} \in \matRing[\rdim][\cdim]\) whose entry
\((i,j)\) is the coefficient of degree \(t_i - \shift_j\) of the entry
\((i,j)\) of \(\pmat\), or \(0\) if \(t_i = -\infty\). Equivalently,
\(\lmat[\shifts]{\pmat}\) is the coefficient of degree zero of
\(\xDiag{-\shiftt} \pmat \xDiag{\shifts}\), whose entries are in
\(\field[\var^{-1}]\). The matrix \(\pmat\) is said to be in
\(\shifts\)-reduced form if its \(\shifts\)-leading matrix has full row rank.
In particular, a matrix in \(\shifts\)-reduced form must have full row rank.

For a matrix $\mat{M} \in \pmatRing[k][\rdim]$, we have
\(\rdeg[\shifts]{\mat{M}\pmat} \le \rdeg[\tuple{t}]{\mat{M}}\) and this is an
equality when no cancellation of leading terms occurs in this
left-multiplication. The predictable degree property states that \(\pmat\) is
\(\shifts\)-reduced if and only if \(\rdeg[\shifts]{\mat{M}\pmat} =
\rdeg[\tuple{t}]{\mat{M}}\) holds for any $\mat{M} \in \pmatRing[k][\rdim]$.
Here is a useful consequence of this characterization.

\begin{lemma}
  \label{lem:product_of_lmats}
  Let \(\shifts\in\ZZ^\cdim\), let \(\pmat \in \pmatRing[\rdim][\cdim]\), and
  let \(\shiftt = \rdeg[\shifts]{\pmat}\). If \(\pmat\) is \(\shifts\)-reduced,
  then the identity \(\lmat[\shifts]{\mat{M}\pmat} = \lmat[\shiftt]{\mat{M}}
  \lmat[\shifts]{\pmat}\) holds for any \(\mat{M} \in \pmatRing[k][\rdim]\).
\end{lemma}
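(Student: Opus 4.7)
The plan is to express both sides of the claimed identity as degree-zero coefficients of matrices of Laurent polynomials in $\var^{-1}$, and then observe that the degree-zero coefficient of a product of such matrices is the product of their degree-zero coefficients, which gives the desired factorization directly.

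First I would set $\shiftt = \rdeg[\shifts]{\pmat}$ and $\tuple{u} = \rdeg[\shiftt]{\mat{M}}$. The predictable degree property (applicable because $\pmat$ is $\shifts$-reduced, as recalled just before the statement) gives $\rdeg[\shifts]{\mat{M}\pmat} = \tuple{u}$, so by definition $\lmat[\shifts]{\mat{M}\pmat}$ is the degree-zero coefficient of $\xDiag{-\tuple{u}} \mat{M} \pmat \xDiag{\shifts}$. The key algebraic step is the factorization
\[
  \xDiag{-\tuple{u}} \mat{M} \pmat \xDiag{\shifts}
  = \bigl(\xDiag{-\tuple{u}} \mat{M} \xDiag{\shiftt}\bigr)\bigl(\xDiag{-\shiftt} \pmat \xDiag{\shifts}\bigr).
\]
By the definitions of $\tuple{u}$ and $\shiftt$, the two parenthesized factors have entries in $\field[\var^{-1}]$: indeed their $(i,j)$ entries $\var^{t_j - u_i} \mat{M}_{i,j}$ and $\var^{\shift_j - t_i} \pmat_{i,j}$ have nonpositive degree by definition of $\shifts$-row and $\shiftt$-row degrees. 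Moreover their degree-zero coefficients are exactly $\lmat[\shiftt]{\mat{M}}$ and $\lmat[\shifts]{\pmat}$.

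Next I would invoke the fact that, for two matrices with entries in $\field[\var^{-1}]$, the degree-zero coefficient of their product is the product of their degree-zero coefficients: this is immediate since any contribution to degree zero can only come from pairing a degree-zero entry with another degree-zero entry, as all entries have degree at most zero. Applying this to the factorization yields
\[
  \lmat[\shifts]{\mat{M}\pmat} = \lmat[\shiftt]{\mat{M}} \lmat[\shifts]{\pmat},
\]
as required.

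The only subtlety is the treatment of $-\infty$ entries in the shifted row degree tuples, arising from zero rows. If the $i$-th row of $\mat{M}$ is zero then the $i$-th row of $\mat{M}\pmat$ is also zero, and the $i$-th rows of $\lmat[\shiftt]{\mat{M}}$, $\lmat[\shifts]{\mat{M}\pmat}$, and $\lmat[\shiftt]{\mat{M}} \lmat[\shifts]{\pmat}$ are all zero by convention, so the identity still holds on that row. Conversely, a nonzero row of $\mat{M}$ produces a finite $u_i$, and the argument above applies verbatim to that row. I do not foresee a substantial obstacle: the predictable degree property does all the heavy lifting by controlling $\rdeg[\shifts]{\mat{M}\pmat}$, and the remainder is a direct manipulation of leading coefficients via the diagonal conjugation that defines the shifted leading matrix.
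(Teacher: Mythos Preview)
Your proof is correct and follows essentially the same approach as the paper: both factor $\xDiag{-\tuple{u}}\mat{M}\pmat\xDiag{\shifts}$ as $(\xDiag{-\tuple{u}}\mat{M}\xDiag{\shiftt})(\xDiag{-\shiftt}\pmat\xDiag{\shifts})$, use the predictable degree property to identify $\tuple{u}=\rdeg[\shiftt]{\mat{M}}=\rdeg[\shifts]{\mat{M}\pmat}$, and then take degree-zero coefficients of these $\field[\var^{-1}]$ matrices. Your additional remark on zero rows and $-\infty$ entries is a harmless refinement the paper omits.
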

\begin{proof}
  Let \(\tuple{d} = \rdeg[\shifts]{\mat{M}\pmat}\). By definition,
  \(\lmat[\shifts]{\mat{M}\pmat}\) is the coefficient of degree \(0\) of the
  matrix \(\xDiag{-\tuple{d}} \mat{M}\pmat \xDiag{\shifts} = \xDiag{-\tuple{d}}
  \mat{M} \xDiag{\shiftt} \xDiag{-\shiftt} \pmat \xDiag{\shifts}\), whose
  entries are in \(\field[x^{-1}]\). Besides, since \(\rdeg[\shifts]{\pmat}
  = \shiftt\) and since the predictable degree property gives
  \(\rdeg[\shiftt]{\mat{M}} = \tuple{d}\), the matrices \(\xDiag{-\tuple{d}}
  \mat{M} \xDiag{\shiftt}\) and \(\xDiag{-\shiftt} \pmat \xDiag{\shifts}\) are
  over \(\field[x^{-1}]\) and their coefficients of degree \(0\) are
  \(\lmat[\shiftt]{\mat{M}}\) and \(\lmat[\shifts]{\pmat}\), respectively.
\end{proof}

Another characterization of matrices in \(\shifts\)-reduced form is that they
have minimal \(\shifts\)-row degree among all matrices which represent the same
\(\polRing\)-module \cite[Def.\,2.13]{Zhou12}; in this paper, we will use the
following consequence of this minimality.

\begin{lemma}
  \label{lem:minimality_implies_basis}
  Let \(\mathcal{M}\) be a submodule of \(\pmatRing[1][\cdim]\) of rank
  \(\rdim\), let \(\shifts\in\ZZ^\cdim\), and let \(\shiftt \in \ZZ^\rdim\) be the
  \(\shifts\)-row degree of some \(\shifts\)-reduced basis of \(\mathcal{M}\).
  Without loss of generality, assume that \(\shiftt\) is nondecreasing. Let
  \(\mat{B} \in \pmatRing[\rdim][\cdim]\) be a matrix of rank \(\rdim\) whose rows
  are in \(\mathcal{M}\), and let \(\tuple{d} \in \ZZ^\rdim\) be its
  \(\shifts\)-row degree sorted in nondecreasing order. If \(\tuple{d} \le
  \shiftt\), then \(\mat{B}\) is an \(\shifts\)-reduced basis of \(\mathcal{M}\),
  and \(\tuple{d} = \shiftt\).
\end{lemma}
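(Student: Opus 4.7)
The plan is to write $\mat{B}$ as a left multiple of a reference $\shifts$-reduced basis of $\mathcal{M}$, and then extract the conclusion from a degree-of-determinant comparison combined with the predictable degree property and \cref{lem:product_of_lmats}.

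First, I would fix an $\shifts$-reduced basis $\mat{N}\in\pmatRing[\rdim][\cdim]$ of $\mathcal{M}$ whose rows are ordered so that $\rdeg[\shifts]{\mat{N}}=\shiftt$. Since every row of $\mat{B}$ lies in $\mathcal{M}$, there is a unique $\mat{U}\in\pmatRing[\rdim]$ with $\mat{B}=\mat{U}\mat{N}$, and $\mat{U}$ is nonsingular because $\mat{B}$ has rank $\rdim$. The predictable degree property applied to the $\shifts$-reduced matrix $\mat{N}$ then gives $\rdeg[\shifts]{\mat{B}}=\rdeg[\shiftt]{\mat{U}}$; in particular the sums of their entries agree and equal $\sumTuple{\tuple{d}}$.

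The key estimate is the standard bound $\deg(\det(\mat{U}))\le \sumTuple{\rdeg[\shiftt]{\mat{U}}}-\sumTuple{\shiftt}$, which follows from the Leibniz expansion together with the inequality $\deg(\mat{U}_{i,j})\le \rdeg[\shiftt]{\mat{U}}_i-t_j$ built into the definition of $\shiftt$-row degree. Combining this with the identity above and the hypothesis $\tuple{d}\le\shiftt$ yields $\deg(\det(\mat{U}))\le \sumTuple{\tuple{d}}-\sumTuple{\shiftt}\le 0$. Since $\mat{U}$ is nonsingular, $\det(\mat{U})\in\field\setminus\{0\}$, so $\mat{U}$ is unimodular and $\mat{B}$ is a basis of $\mathcal{M}$.

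The equality $\deg(\det(\mat{U}))=0$ then forces both of the inequalities in the previous chain to be equalities. From $\sumTuple{\tuple{d}}=\sumTuple{\shiftt}$ together with the entrywise bound $\tuple{d}\le\shiftt$ on sorted nondecreasing tuples, I would conclude $\tuple{d}=\shiftt$. The other equality case, $\deg(\det(\mat{U}))=\sumTuple{\rdeg[\shiftt]{\mat{U}}}-\sumTuple{\shiftt}$, is precisely the classical characterization of $\shiftt$-reducedness of $\mat{U}$, so $\lmat[\shiftt]{\mat{U}}$ is a nonsingular $\rdim\times\rdim$ matrix over $\field$. Applying \cref{lem:product_of_lmats} to $\mat{B}=\mat{U}\mat{N}$ then gives $\lmat[\shifts]{\mat{B}}=\lmat[\shiftt]{\mat{U}}\lmat[\shifts]{\mat{N}}$, a product of two matrices of full row rank $\rdim$, hence itself of full row rank. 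This shows that $\mat{B}$ is $\shifts$-reduced and completes the proof.

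No step looks like a serious obstacle: the only point that requires any care is keeping the sorted versus unsorted versions of $\shiftt$ and $\tuple{d}$ straight when comparing to the actual row degrees of $\mat{N}$ and $\mat{B}$. Since every argument ultimately passes through sums and through the predictable degree property (which is insensitive to row ordering), invoking the nondecreasing-sort hypothesis only at the very last step of deducing $\tuple{d}=\shiftt$ from $\sumTuple{\tuple{d}}=\sumTuple{\shiftt}$ and $\tuple{d}\le\shiftt$ avoids any bookkeeping issue.
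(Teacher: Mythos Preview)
Your proof is correct and takes a genuinely different route from the paper's. The paper argues structurally: from $\deg(\mat{U}_{i,j})\le d_i-t_j$ it first shows by contradiction that $\tuple{d}=\shiftt$ (otherwise a whole upper-right block of $\mat{U}$ would vanish, forcing $\mat{U}$ singular), and then observes that the resulting bound $\deg(\mat{U}_{i,j})\le t_i-t_j$ with $\shiftt$ nondecreasing makes $\mat{U}$ block lower triangular with constant invertible diagonal blocks, hence unimodular and with invertible $\lmat[\shiftt]{\mat{U}}$. You instead pass through the scalar invariant $\deg(\det(\mat{U}))$: the Leibniz bound gives $\deg(\det(\mat{U}))\le\sumTuple{\tuple{d}}-\sumTuple{\shiftt}\le 0$, nonsingularity forces equality throughout, and you read off unimodularity, $\tuple{d}=\shiftt$, and $\shiftt$-reducedness of $\mat{U}$ in one stroke via the determinantal characterization of shifted reducedness. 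Your argument is shorter and more conceptual; the paper's buys slightly more, namely the explicit block-triangular shape of $\mat{U}$, though that extra structure is not used anywhere else.
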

\begin{proof}
  Up to permuting the rows of \(\mat{B}\), we assume that
  \(\rdeg[\shifts]{\mat{B}} = \tuple{d}\) without loss of generality. Let
  \(\pmat \in \pmatRing[\rdim][\cdim]\) be an \(\shifts\)-reduced basis of
  \(\mathcal{M}\) such that \(\rdeg[\shifts]{\pmat} = \shiftt\). Since the rows
  of \(\mat{B}\) are in \(\mathcal{M}\), there exists a matrix \(\mat{U} \in
  \pmatRing[\rdim][\rdim]\) such that \(\mat{B} = \mat{U}\pmat\); and
  \(\mat{U}\) is nonsingular since \(\mat{B}\) and \(\pmat\) have rank
  \(\rdim\). Since \(\pmat\) is \(\shifts\)-reduced, the predictable degree
  property applies, ensuring that
  \[
    \tuple{d} = \rdeg[\shifts]{\mat{B}} = \rdeg[\shifts]{\mat{U}\pmat} =
    \rdeg[\shiftt]{\mat{U}}.
  \]
  This means that \(\deg(\mat{U}_{i,j}) \le d_i - t_j\) for all \(1\le i,
  j\le\rdim\).

  Now, assume by contradiction that \(\tuple{d}=\shiftt\) does not hold. Thus,
  \(d_{k} < t_{k}\) for some \(1\le k\le\rdim\). Then, for \(i\le k\) and
  \(j\ge k\) we have \(d_i \le d_k < t_k \le t_j\), hence \(\deg(\mat{U}_{i,j})
  < 0\).  Thus, the submatrix
  \(\submat{\mat{U}}{\{1,\ldots,k\}}{\{k,\ldots,\rdim\}}\) is zero, which
  implies that \(\mat{U}\) is singular; this is a contradiction, hence
  \(\tuple{d} = \shiftt\).

  Since \(\shiftt\) is nondecreasing, the inequality \(\deg(\mat{U}_{i,j}) \le
  t_i - t_j\) implies that \(\mat{U}\) is a block lower triangular matrix whose
  diagonal blocks have degree \(0\); hence these blocks are invertible matrices
  over \(\field\), and \(\mat{U}\) is unimodular \cite[see][Lemma~6 for
  similar degree considerations, starting from stronger assumptions on
  \(\pmat\) and \(\mat{B}\)]{SarSto11}. Thus, \(\mat{B}\) is a basis of
  \(\mathcal{M}\).

  Furthermore, it is easily observed that
  \(\lmat[\tuple{d}]{\mat{U}}\in\matRing[\rdim][\rdim]\) is block lower
  triangular with the same invertible diagonal blocks as \(\mat{U}\); hence
  \(\lmat[\tuple{d}]{\mat{U}}\) is invertible. On the other hand,
  \cref{lem:product_of_lmats} states that \(\lmat[\shifts]{\mat{B}} =
  \lmat[\tuple{d}]{\mat{U}}\lmat[\shifts]{\pmat}\). Thus
  \(\lmat[\shifts]{\mat{B}}\) has rank \(\rdim =
  \rank{\lmat[\shifts]{\pmat}}\), and \(\mat{B}\) is \(\shifts\)-reduced.
\end{proof}

\subsection{Pivots and weak Popov forms of polynomial matrices}
\label{sec:prelim:popov}

For algorithmic purposes, it is often convenient to work with reduced forms
that satisfy some additional requirements, called \emph{weak Popov
forms}. These are intrinsically related to the notion of \emph{pivot} of a
polynomial matrix.

For a nonzero vector \(\pvec \in \pmatRing[1][\cdim]\) and a shift \(\shifts
\in \ZZ^\cdim\), the \(\shifts\)-pivot of \(\pvec\) is its rightmost entry
\(p_j\) such that \(\deg(p_j) + s_j = \rdeg[\shifts]{\pvec}\)
\cite{BeLaVi99,MulSto03}; it corresponds to the rightmost nonzero entry of
\(\lmat[\shifts]{\pvec}\). The index \(j=\pivInd\) and the degree
\(\deg(p_\pivInd) = \pivDeg\) of this entry are called the \(\shifts\)-pivot
index and \(\shifts\)-pivot degree, respectively. For brevity, in this paper
the pair \((\pivInd,\pivDeg)\) is called the \emph{\(\shifts\)-pivot profile}
of \(\pvec\). By convention, the zero vector in \(\pmatRing[1][\cdim]\) has
\(\shifts\)-pivot index \(0\) and \(\shifts\)-pivot degree \(-\infty\). These
notions are extended to matrices \(\pmat \in \pmatRing[\rdim][\cdim]\) by
forming row-wise lists. For example, the \(\shifts\)-pivot index of \(\pmat\)
is \(\pivInds = (\pivInd_1,\ldots,\pivInd_\rdim) \in \ZZp^{\rdim}\) where
\(\pivInd_i\) is the \(\shifts\)-pivot index of the row \(\matrow{\pmat}{i}\).
The \(\shifts\)-pivot degree \(\pivDegs\) and the \(\shifts\)-pivot profile
\((\pivInd_i,\pivDeg_i)_{1\le i\le \rdim}\) of \(\pmat\) are defined similarly.

Then, \(\pmat\) is said to be in \(\shifts\)-weak Popov form if it has no zero
row and \(\pivInds\) is strictly increasing; and \(\pmat\) is said to be in
\(\shifts\)-unordered weak Popov form if it is in \(\shifts\)-weak Popov form
up to row permutation, i.e.~the entries of \(\pivInds\) are pairwise distinct.
Furthermore, a matrix is in \(\shifts\)-Popov form if it is in \(\shifts\)-weak
Popov form, its \(\shifts\)-pivots are monic, and each of these
\(\shifts\)-pivots has degree strictly larger than the other entries in the
same column. For a given \(\polRing\)-submodule \(\mathcal{M}\) of
\(\pmatRing[1][\cdim]\), there is a unique basis of \(\mathcal{M}\) which is in
\(\shifts\)-Popov form \cite{BeLaVi99}.

For a given matrix \(\mat{B}\), the matrix \(\pmat\) is said to be an
\(\shifts\)-reduced (resp.~\(\shifts\)-weak Popov, \(\shifts\)-Popov) form of
\(\mat{B}\) if \(\pmat\) is a row basis of \(\mat{B}\) and \(\pmat\) is in
\(\shifts\)-reduced (resp.~\(\shifts\)-weak Popov, \(\shifts\)-Popov) form.

Like for \(\shifts\)-reducedness, the property of a matrix \(\pmat \in
\pmatRing[\rdim][\cdim]\) to be in \(\shifts\)-weak Popov form depends only on
its \(\shifts\)-leading matrix \(\lmat[\shifts]{\pmat} \in
\matRing[\rdim][\cdim]\), namely on the fact that it has a staircase shape.
Indeed, \(\pmat\) is in \(\shifts\)-weak (resp.~\(\shifts\)-unordered weak) Popov
form if and only if \(\lmat[\shifts]{\pmat}\) has no zero row and
\(\revmat{\rdim}\lmat[\shifts]{\pmat}\revmat{\cdim}\) is in row echelon form
(resp.~in row echelon form up to row permutation); this was used as a
definition by Beckermann et al.~\cite{BeLaVi99,BeLaVi06}. In particular, for
any constant matrix \(\mat{C} \in \matRing[\rdim][\cdim]\), we have
\(\lmat[\shiftz]{\mat{C}} = \mat{C}\) and therefore \(\mat{C}\) is in
\(\shiftz\)-weak (resp.~\(\shiftz\)-unordered weak) Popov form if and only if it
has no zero row and \(\revmat{\rdim}\mat{C}\revmat{\cdim}\) is in row echelon
form (resp.~in row echelon form up to row permutation). Taking
\(\mat{C}=\lmat[\shifts]{\pmat}\), the next lemma follows.

\begin{lemma}
  \label{lem:weak_popov_lmat}
  Let \(\pmat \in \pmatRing[\rdim][\cdim]\) and let \(\shifts\in\ZZ^\cdim\).
  Then, \(\pmat\) is in \(\shifts\)-weak (resp.~\(\shifts\)-unordered weak) Popov
  form if and only if \(\lmat[\shifts]{\pmat}\) is in \(\shiftz\)-weak
  (resp.~\(\shiftz\)-unordered weak) Popov form.
\end{lemma}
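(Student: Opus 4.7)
The plan is to exploit the equivalent characterization of (unordered) weak Popov forms in terms of the row-echelon shape of the leading matrix, which is recalled in the paragraph immediately preceding the lemma: a matrix \(\mat{B}\in\pmatRing[\rdim][\cdim]\) is in \(\shifts\)-weak (resp.~\(\shifts\)-unordered weak) Popov form if and only if \(\lmat[\shifts]{\mat{B}}\) has no zero row and \(\revmat{\rdim}\lmat[\shifts]{\mat{B}}\revmat{\cdim}\) is in row echelon form (resp.~in row echelon form up to row permutation). The crucial point is that this characterization depends only on the leading matrix, so the whole question reduces to a statement about the constant matrix \(\mat{C} = \lmat[\shifts]{\pmat} \in \matRing[\rdim][\cdim]\).

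Applying this characterization to \(\pmat\) with shift \(\shifts\), the condition ``\(\pmat\) is in \(\shifts\)-weak (resp.~unordered weak) Popov form'' becomes: \(\mat{C}\) has no zero row and \(\revmat{\rdim}\mat{C}\revmat{\cdim}\) is in row echelon form (resp.~up to row permutation). Next I would apply the same characterization to the constant matrix \(\mat{C}\) itself, using the identity \(\lmat[\shiftz]{\mat{C}} = \mat{C}\) that holds for any matrix \(\mat{C}\in\matRing[\rdim][\cdim]\): this unfolds ``\(\lmat[\shifts]{\pmat}\) is in \(\shiftz\)-weak (resp.~unordered weak) Popov form'' into exactly the same statement about zero rows and the echelon shape of \(\revmat{\rdim}\mat{C}\revmat{\cdim}\). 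Both sides of the desired equivalence thus translate, term for term, into the same condition on the constant matrix \(\mat{C}\), so the equivalence holds.

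I do not anticipate a real obstacle here: once the echelon-form characterization is accepted, the proof is essentially a two-line unfolding of definitions, and the unordered case is handled uniformly by the same reasoning since only the echelon-shape clause changes (row echelon form up to row permutation on both sides). The only mild point to verify is that the properties ``has no zero row'' and ``\(\revmat{\rdim}(\cdot)\revmat{\cdim}\) is in (unordered) row echelon form'' are intrinsic to \(\mat{C}\) and do not depend on how one obtains \(\mat{C}\) as a leading matrix, which is immediate from the definitions.
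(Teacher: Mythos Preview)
Your proposal is correct and matches the paper's own argument exactly: the paper derives the lemma directly from the echelon-form characterization of (unordered) weak Popov forms together with the observation that \(\lmat[\shiftz]{\mat{C}} = \mat{C}\) for any constant matrix \(\mat{C}\), and then simply takes \(\mat{C} = \lmat[\shifts]{\pmat}\). There is nothing to add.
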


Furthermore, if \(\pmat\) is in \(\shifts\)-weak Popov form and
\((j_1,\ldots,j_\rdim)\) is the list of indices of pivot columns in the row
echelon form \(\revmat{\rdim}\lmat[\shifts]{\pmat}\revmat{\cdim}\) (in other
words, this list is the column rank profile of that matrix), then the
\(\shifts\)-pivot index of \(\pmat\) is equal to
\((\cdim+1-j_\rdim,\ldots,\cdim+1-j_1)\). This leads to the following lemma
which states that the \(\shifts\)-pivot profile is an invariant of
left-unimodularly equivalent \(\shifts\)-weak Popov forms
\cite{Kailath80,BeLaVi99,BeLaVi06}, generalizing the fact that for matrices
over \(\field\) the set of indices of pivot columns is an invariant of
left-equivalent row echelon forms.

\begin{lemma}
  \label{lem:popov_pivot}
  Let \(\shifts\in\ZZ^\cdim\) and let \(\pmat \in \pmatRing[\rdim][\cdim]\) be
  in \(\shifts\)-unordered weak Popov form with \(\shifts\)-pivot profile
  \((\pivInd_i,\pivDeg_i)_{1\le i\le \rdim}\). Then, the \(\shifts\)-pivot
  profile of the \(\shifts\)-Popov form of \(\pmat\) is
  \((\pivInd_{\sigma(i)},\pivDeg_{\sigma(i)})_{1\le i\le \rdim}\), where
  \(\sigma : \{ 1,\ldots,\rdim \} \to \{ 1,\ldots,\rdim \}\) is the permutation
  such that \((\pivInd_{\sigma(i)})_{1\le i\le\rdim}\) is strictly increasing.
\end{lemma}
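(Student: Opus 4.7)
The plan is to establish the pivot profile $(\pivInd_i^{\mat{P}},\pivDeg_i^{\mat{P}})_i$ of the $\shifts$-Popov form $\mat{P}$ of $\pmat$ in two stages: first that the pivot indices equal $(\pivInd_{\sigma(i)})_i$, and then that the pivot degrees equal $(\pivDeg_{\sigma(i)})_i$. Both stages rely on writing $\mat{P}$ as a unimodular left-multiple of $\pmat$ (or of a row permutation of it) and applying \cref{lem:product_of_lmats}, which is valid since weak Popov forms are reduced. The second stage is the heart of the proof.

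\emph{Pivot indices.} I would take $\mat{U}$ unimodular with $\mat{P} = \mat{U}\pmat$ and use \cref{lem:product_of_lmats} to obtain $\lmat[\shifts]{\mat{P}} = \lmat[\tuple{t}]{\mat{U}}\,\lmat[\shifts]{\pmat}$ with $\tuple{t} = \rdeg[\shifts]{\pmat}$. Since $\pmat$ and $\mat{P}$ are $\shifts$-reduced, both $\shifts$-leading matrices have full row rank $\rdim$, forcing the constant $\rdim\times\rdim$ factor $\lmat[\tuple{t}]{\mat{U}}$ to be invertible. Hence the row echelon matrices $\revmat{\rdim}\lmat[\shifts]{\pmat}\revmat{\cdim}$ and $\revmat{\rdim}\lmat[\shifts]{\mat{P}}\revmat{\cdim}$ are left-equivalent over $\field$ and share their column rank profiles. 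By the discussion just before the lemma, the multisets of $\shifts$-pivot indices of $\pmat$ and $\mat{P}$ coincide, and since $\mat{P}$ is in Popov form its pivot indices are strictly increasing, so $\pivInd_i^{\mat{P}} = \pivInd_{\sigma(i)}$.

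\emph{Pivot degrees.} Let $\pmat'$ be the row permutation of $\pmat$ defined by $\matrow{\pmat'}{i} = \matrow{\pmat}{\sigma(i)}$, so that $\pmat'$ is in $\shifts$-weak Popov form with pivot profile $(\pivInd_{\sigma(i)},\pivDeg_{\sigma(i)})_i$. Write $\mat{P} = \mat{V}\pmat'$ with $\mat{V}$ unimodular, and set $\tuple{t}' = \rdeg[\shifts]{\pmat'}$, $\tuple{d} = \rdeg[\shifts]{\mat{P}}$. Applying \cref{lem:product_of_lmats} again gives $\lmat[\shifts]{\mat{P}} = \mat{W}\,\lmat[\shifts]{\pmat'}$ with $\mat{W} = \lmat[\tuple{t}']{\mat{V}}$. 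The main obstacle is to show that $\mat{W}$ is lower triangular with nonzero diagonal entries; I plan to prove this by downward induction on the column index $k$, using that in both $\lmat[\shifts]{\pmat'}$ and $\lmat[\shifts]{\mat{P}}$ the column indexed by $\pivInd_{\sigma(k)}$ vanishes in rows $1,\dots,k-1$ (those rows' pivots being strictly to the left). Reading the identity $\lmat[\shifts]{\mat{P}}_{i,\pivInd_{\sigma(k)}} = \sum_l \mat{W}_{i,l}\,\lmat[\shifts]{\pmat'}_{l,\pivInd_{\sigma(k)}}$ for $i<k$, after the induction hypothesis has cleared all contributions from indices $l>k$, forces $\mat{W}_{i,k}=0$; then reading it for $i=k$ forces $\mat{W}_{k,k}\ne 0$.

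From the triangular structure of $\mat{W}$, the definition of the shifted leading matrix yields $\deg(\mat{V}_{i,i}) = d_i - t'_i \geq 0$ and, for $j>i$, $\deg(\mat{V}_{i,j}) < d_i - t'_j$. Expanding $\det(\mat{V})$ over permutations, the identity contribution has degree exactly $\sum_i(d_i - t'_i)$ with nonzero leading coefficient, while for any other permutation $\tau$ there exists $i$ with $\tau(i)>i$, yielding a product of strictly smaller degree; hence $\deg(\det(\mat{V})) = \sum_i(d_i - t'_i)$. Unimodularity of $\mat{V}$ forces this sum to vanish, and since every term is nonnegative I conclude $d_i = t'_i$ for every $i$. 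Translating via $d_i = \pivDeg_i^{\mat{P}} + s_{\pivInd_{\sigma(i)}}$ and $t'_i = \pivDeg_{\sigma(i)} + s_{\pivInd_{\sigma(i)}}$ gives $\pivDeg_i^{\mat{P}} = \pivDeg_{\sigma(i)}$, completing the proof.
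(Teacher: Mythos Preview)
Your proof is correct. The pivot-index stage is essentially identical to the paper's argument: both use \cref{lem:product_of_lmats} to obtain an invertible constant factor between the two \(\shifts\)-leading matrices and conclude via the column rank profile of the reversed echelon forms.

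For the pivot-degree stage you take a genuinely different route. The paper argues by contradiction via \cref{lem:minimality_implies_basis}: if some row of \(\mat{P}\) had strictly smaller \(\shifts\)-degree than the corresponding row of \(\pmat\), one could splice that row into \(\pmat\) to produce a rank-\(\rdim\) matrix in the row space with strictly smaller sorted \(\shifts\)-row degree, contradicting minimality; symmetry then gives equality. Your argument is instead a direct structural one: you show by downward induction on the pivot columns that \(\mat{W}=\lmat[\tuple{t}']{\mat{V}}\) is lower triangular with nonzero diagonal, deduce \(\deg(\mat{V}_{i,i})=d_i-t'_i\ge 0\) and \(\deg(\mat{V}_{i,j})<d_i-t'_j\) for \(j>i\), and then use the Leibniz expansion to force \(\deg(\det(\mat{V}))=\sum_i(d_i-t'_i)\), which unimodularity collapses to zero. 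Your approach is more self-contained (it does not rely on \cref{lem:minimality_implies_basis}) and makes the structure of the unimodular transformation explicit, at the price of a slightly longer computation; the paper's approach is shorter because the heavy lifting has already been packaged into the minimality lemma. Both are clean and valid.
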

\begin{proof}
  Without loss of generality we assume that \(\pmat\) is in \(\shifts\)-weak
  Popov form, implying also \(\sigma(i) = i\) for \(1 \le i \le \rdim\).
  Let \(\mat{P} \in \pmatRing[\rdim][\cdim]\) be the \(\shifts\)-Popov form of
  \(\pmat\): we want to prove that \(\pmat\) and \(\mat{P}\) have the same
  \(\shifts\)-pivot index and the same \(\shifts\)-pivot degree. Let
  \(\mat{U}\) be the unimodular matrix such that \(\mat{P} = \mat{U}\pmat\);
  then \cref{lem:product_of_lmats} yields \(\lmat[\shifts]{\mat{P}} =
  \lmat[\shiftt]{\mat{U}}\lmat[\shifts]{\pmat}\), where \(\shiftt =
  \rdeg[\shifts]{\pmat}\). Since both \(\lmat[\shifts]{\mat{P}}\) and
  \(\lmat[\shifts]{\pmat}\) have full row rank, \(\lmat[\shiftt]{\mat{U}} \in
  \matRing[\rdim]\) is invertible. Then
  \[
    \revmat{\rdim}\lmat[\shifts]{\mat{P}}\revmat{\cdim}
    =
    \revmat{\rdim}\lmat[\shiftt]{\mat{U}}\revmat{\rdim}\revmat{\rdim}\lmat[\shifts]{\pmat}\revmat{\cdim}
  \]
  holds, and thus the row echelon forms
  \(\revmat{\rdim}\lmat[\shifts]{\mat{P}}\revmat{\cdim}\) and
  \(\revmat{\rdim}\lmat[\shifts]{\pmat}\revmat{\cdim}\) have the same pivot
  columns since \(\revmat{\rdim}\lmat[\shiftt]{\mat{U}}\revmat{\rdim} \in
  \matRing[\rdim]\) is invertible. It follows from the discussion preceding
  this lemma that \(\mat{P}\) has the same \(\shifts\)-pivot index as
  \(\pmat\).
  
  As a consequence, \(\mat{P}\) has the same \(\shifts\)-pivot degree as
  \(\pmat\) if and only if \(\rdeg[\shifts]{\mat{P}} = \rdeg[\shifts]{\pmat}\).
  Suppose by contradiction that there exists an index \(i\) such that
  \(\rdeg[\shifts]{\matrow{\mat{P}}{i}} < \rdeg[\shifts]{\matrow{\pmat}{i}}\).
  Then, build the matrix \(\mat{B} \in \pmatRing[\rdim][\cdim]\) which is equal
  to \(\pmat\) except for its \(i\)th row which is replaced by
  \(\matrow{\mat{P}}{i}\). By construction, \(\mat{B}\) has rank \(\rdim\)
  (since it is in \(\shifts\)-weak Popov form) and its rows are in the
  row space of \(\pmat\). Writing \(\tuple{d}\) for the tuple
  \(\rdeg[\shifts]{\mat{B}}\) sorted in nondecreasing order, and \(\tuple{u}\)
  for the tuple \(\shiftt\) sorted in nondecreasing order, we have \(\tuple{d}
  \le \tuple{u}\) and \(\tuple{d} \neq \tuple{u}\), which contradicts
  \cref{lem:minimality_implies_basis}. Hence there is no such index \(i\), and
  since this proof by contradiction is symmetric in \(\pmat\) and \(\mat{P}\),
  there is no index \(i\) such that \(\rdeg[\shifts]{\matrow{\pmat}{i}} <
  \rdeg[\shifts]{\matrow{\mat{P}}{i}}\) either. Thus \(\rdeg[\shifts]{\pmat} =
  \rdeg[\shifts]{\mat{P}}\).
\end{proof}

We will also use the following folklore fact, which is a corollary of
\cref{lem:product_of_lmats}, and has often been used in algorithms for
approximant bases or kernel bases in order to preserve the reducedness of
matrices during the computation.

\begin{lemma}
  \label{lem:product_of_reduced}
  Let \(\pmat \in \pmatRing[\rdim][\cdim]\) and \(\mat{B} \in
  \pmatRing[k][\rdim]\), and let \(\shifts\in\ZZ^\cdim\) and \(\shiftt =
  \rdeg[\shifts]{\pmat} \in \ZZ^\rdim\). Then,
  \begin{itemize}
    \item if \(\pmat\) is \(\shifts\)-reduced and \(\mat{B}\) is
      \(\shiftt\)-reduced, then \(\mat{B}\pmat\) is \(\shifts\)-reduced;
    \item if \(\pmat\) is in \(\shifts\)-weak Popov form and \(\mat{B}\) is
      in \(\shiftt\)-weak Popov form, then \(\mat{B}\pmat\) is in
      \(\shifts\)-weak Popov form.
  \end{itemize}
\end{lemma}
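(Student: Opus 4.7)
The plan is to reduce both parts of the lemma to statements about constant matrices via the identity from \cref{lem:product_of_lmats} and, for the weak Popov case, the leading-matrix characterization from \cref{lem:weak_popov_lmat}. In both bullets the hypotheses imply that \(\pmat\) is \(\shifts\)-reduced (weak Popov form implies reduced form), so \cref{lem:product_of_lmats} applies with \(\mat{M} = \mat{B}\) and yields
\[
  \lmat[\shifts]{\mat{B}\pmat} = \lmat[\shiftt]{\mat{B}} \, \lmat[\shifts]{\pmat}.
\]
Thus everything reduces to checking a property of the right-hand side, a product of two constant matrices in \(\matRing[k][\rdim]\) and \(\matRing[\rdim][\cdim]\).

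For the first bullet, I would argue that \(\lmat[\shiftt]{\mat{B}}\) has full row rank \(k\) (since \(\mat{B}\) is \(\shiftt\)-reduced) and \(\lmat[\shifts]{\pmat}\) has full row rank \(\rdim\) (since \(\pmat\) is \(\shifts\)-reduced). Then \(\lmat[\shiftt]{\mat{B}} \, \lmat[\shifts]{\pmat}\) has full row rank \(k\): its rows are \(k\) linear combinations, with linearly independent coefficient vectors, of the \(\rdim\) linearly independent rows of \(\lmat[\shifts]{\pmat}\); equivalently, right-multiplication by \(\lmat[\shifts]{\pmat}\) defines an injection on row spaces. Hence \(\lmat[\shifts]{\mat{B}\pmat}\) has full row rank, which by definition means \(\mat{B}\pmat\) is \(\shifts\)-reduced.

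For the second bullet, applying \cref{lem:weak_popov_lmat} three times reduces the goal to: if two constant matrices \(\mat{C}_1 \in \matRing[k][\rdim]\) and \(\mat{C}_2 \in \matRing[\rdim][\cdim]\) are in \(\shiftz\)-weak Popov form, then so is \(\mat{C}_1 \mat{C}_2\). By definition this amounts to showing that if \(\revmat{k} \mat{C}_1 \revmat{\rdim}\) and \(\revmat{\rdim} \mat{C}_2 \revmat{\cdim}\) are in row echelon form with no zero row, then so is their product \(\revmat{k}\mat{C}_1\mat{C}_2\revmat{\cdim}\). I would verify this directly by tracking leading nonzero columns: if the leading nonzero entries of \(\revmat{k}\mat{C}_1\revmat{\rdim}\) lie in strictly increasing columns \(j_1 < \cdots < j_k\) and those of \(\revmat{\rdim}\mat{C}_2\revmat{\cdim}\) in strictly increasing columns \(l_1 < \cdots < l_\rdim\), then the \(i\)th row of the product has leading nonzero entry in column \(l_{j_i}\) with value equal to the product of the two leading entries (later rows of \(\revmat{\rdim}\mat{C}_2\revmat{\cdim}\) have their support strictly to the right). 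Since \(l_{j_1} < \cdots < l_{j_k}\), the product is in row echelon form with no zero row, as desired.

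There is no real obstacle: the statement is essentially a bookkeeping corollary of \cref{lem:product_of_lmats,lem:weak_popov_lmat}, and the only mild care required is the staircase/row-echelon tracking in the second bullet. I would present the two bullets in parallel so that the common reduction via \(\lmat[\shifts]{\mat{B}\pmat} = \lmat[\shiftt]{\mat{B}} \lmat[\shifts]{\pmat}\) is stated once and reused.
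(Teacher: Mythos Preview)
Your proposal is correct and follows essentially the same approach as the paper: both reduce via \cref{lem:product_of_lmats} to the identity \(\lmat[\shifts]{\mat{B}\pmat} = \lmat[\shiftt]{\mat{B}}\,\lmat[\shifts]{\pmat}\), deduce the first item from the rank of a product, and handle the second item by tracking pivot columns in the row-echelon forms of the constant leading matrices. The only cosmetic difference is that you make the \(\revmat{}\)-conjugation explicit via \cref{lem:weak_popov_lmat}, whereas the paper leaves this reversal implicit.
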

\begin{proof}
  Since \(\pmat\) is \(\shifts\)-reduced, \cref{lem:product_of_lmats} states
  that \(\lmat[\shifts]{\mat{B}\pmat} = \mat{M}\mat{L}\) where \(\mat{L} =
  \lmat[\shifts]{\pmat} \in \matRing[\rdim][\cdim]\) and \(\mat{M} =
  \lmat[\shiftt]{\mat{B}} \in \matRing[k][\rdim]\). The first item then follows
  from the fact that if \(\mat{M}\) has rank \(k\) and \(\mat{L}\) has rank
  \(\rdim\), then \(\mat{M}\mat{L}\) has rank \(k\). Similarly, the second item
  reduces to prove that, assuming \(\mat{M}\) and \(\mat{L}\) are in row
  echelon form with full row rank, then \(\mat{M}\mat{L}\) is also in row
  echelon form. Let \((a_1,\dots,a_k)\) (resp.~\((b_1,\dots,b_\rdim)\)) be the
  pivot indices of  \(\mat{M}\) (resp.~\(\mat{L}\)). Then the \(i\)th row of
  \(\mat{M}\mat{L}\) is a nonzero multiple of row \(a_i\) of \(\mat{L}\)
  combined with multiples of rows of \(\mat{L}\) of index greater than \(a_i\).
  Consequently, the pivot indices of the rows of \(\mat{M}\mat{L}\) are
  \(b_{a_1} < \cdots < b_{a_k}\), which proves that \(\mat{M}\mat{L}\) is in
  row echelon form.
\end{proof}

Finally, under assumptions that generalize the situation encountered in our
determinant algorithm below, we show that the pivot entries of a kernel basis
\([\kerbas_1 \;\; \kerbas_2]\) are located in its rightmost columns, that is,
in \(\kerbas_2\).

\begin{lemma}
  \label{lem:kernel_pivots}
  Let \(\shiftt\in\ZZ^\cdim\), let \(\mat{F} \in \pmatRing[\cdim]\) be in
  \(\shiftt\)-weak Popov form, and let \(\tuple{u} =
  \rdeg[\shiftt]{\mat{F}}\). Let \(\mat{G} \in \pmatRing[\rdim][\cdim]\) and
  \(\tuple{v} \in \ZZ^\rdim\) be such that \(\tuple{v} \ge
  \rdeg[\shiftt]{\mat{G}}\), and let \(\kerbas = [\kerbas_1 \;\; \kerbas_2] \in
  \pmatRing[\rdim][(\rdim+\cdim)]\) be a \((\tuple{u},\tuple{v})\)-weak
  Popov basis of \(\modKer{[\begin{smallmatrix} \mat{F} \\ \mat{G}
  \end{smallmatrix}]}\), where \(\kerbas_1\) and \(\kerbas_2\) have
  \(\rdim\) and \(\cdim\) columns, respectively. Then, the
  \((\tuple{u},\tuple{v})\)-pivot entries of \(\kerbas\) are all located in
  \(\kerbas_2\); in particular, \(\kerbas_2\) is in \(\tuple{v}\)-weak
  Popov form.
\end{lemma}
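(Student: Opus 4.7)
The plan is to combine two ingredients: the nonsingularity of $\mat{F}$ (which follows from $\mat{F}$ being square and in $\shiftt$-weak Popov form), together with the predictable degree property applied via the kernel relation. Since $\mat{F}$ is square and $\lmat[\shiftt]{\mat{F}}$ has full row rank, the leading matrix is invertible and $\mat{F}$ itself is nonsingular.

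For any row of $\kerbas$, write it as $[\row{a} \;\; \row{b}]$ conformally with the decomposition $\kerbas = [\kerbas_1 \;\; \kerbas_2]$, so that $\row{a}$ multiplies $\mat{F}$ and $\row{b}$ multiplies $\mat{G}$; the kernel relation then reads $\row{a}\mat{F} + \row{b}\mat{G} = \matz$. The heart of the argument is to establish
\[
  \rdeg[\tuple{u}]{\row{a}} \;\le\; \rdeg[\tuple{v}]{\row{b}}.
\]
Setting $\tuple{r} = \rdeg[\shiftt]{\mat{G}}$ for brevity, the predictable degree property applied to the $\shiftt$-reduced matrix $\mat{F}$ (whose $\shiftt$-row degree is $\tuple{u}$) gives $\rdeg[\tuple{u}]{\row{a}} = \rdeg[\shiftt]{\row{a}\mat{F}}$, while the standard matrix-vector degree estimate yields $\rdeg[\shiftt]{\row{b}\mat{G}} \le \rdeg[\tuple{r}]{\row{b}}$; the hypothesis $\tuple{v} \ge \tuple{r}$ further bounds this by $\rdeg[\tuple{v}]{\row{b}}$. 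Chaining through $\row{a}\mat{F} = -\row{b}\mat{G}$ delivers the displayed inequality.

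This inequality forces the $(\tuple{u},\tuple{v})$-row degree of $[\row{a} \;\; \row{b}]$ to be attained in the $\row{b}$ block. Since the $(\tuple{u},\tuple{v})$-pivot is by definition the rightmost entry realizing this maximum, and the $\row{b}$ block lies entirely to the right of $\row{a}$, the pivot entry lies in $\kerbas_2$; applied to every row, this proves the first assertion. For the second, no row of $\kerbas_2$ can be zero, since that would force the corresponding pivot to lie in $\kerbas_1$; and the strict monotonicity of $(\tuple{u},\tuple{v})$-pivot indices in $\kerbas$ descends, up to the constant offset between the two column numberings, to strict monotonicity of the $\tuple{v}$-pivot indices in $\kerbas_2$. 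Hence $\kerbas_2$ is in $\tuple{v}$-weak Popov form.

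I do not anticipate a serious obstacle. The only minor care is the degenerate case $\row{a} = \matz$, where $\rdeg[\tuple{u}]{\row{a}} = -\infty$ and the chain of equalities collapses; the key inequality then holds trivially, and the conclusion still goes through because $\row{b} \ne \matz$ (else $[\row{a} \;\; \row{b}]$ would be a zero row of $\kerbas$, contradicting its weak Popov form, using here that $\mat{F}$ is nonsingular).
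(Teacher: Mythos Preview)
Your proof is correct and follows essentially the same approach as the paper: both establish the row-wise inequality $\rdeg[\tuple{u}]{\kerbas_1} \le \rdeg[\tuple{v}]{\kerbas_2}$ by chaining the predictable degree property for the $\shiftt$-reduced matrix $\mat{F}$, the kernel relation $\kerbas_1\mat{F} = -\kerbas_2\mat{G}$, the generic bound $\rdeg[\shiftt]{\kerbas_2\mat{G}} \le \rdeg[{\rdeg[\shiftt]{\mat{G}}}]{\kerbas_2}$, and the hypothesis $\tuple{v} \ge \rdeg[\shiftt]{\mat{G}}$, then conclude that the rightmost maximizing entry lies in $\kerbas_2$. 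Your explicit treatment of the degenerate case $\row{a}=\matz$ and the remark on nonsingularity of $\mat{F}$ are harmless extra care; the paper handles these implicitly.
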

\begin{proof}
  Since the \((\tuple{u},\tuple{v})\)-pivot entry of a row is the rightmost
  entry of that row which reaches the \((\tuple{u},\tuple{v})\)-row degree, it
  is enough to prove that \(\rdeg[\tuple{v}]{\kerbas_2} \ge
  \rdeg[\tuple{u}]{\kerbas_1}\). First, from \(\tuple{v} \ge
  \rdeg[\shiftt]{\mat{G}}\), we obtain \(\rdeg[\tuple{v}]{\kerbas_2} \ge
  \rdeg[{\rdeg[\shiftt]{\mat{G}}}]{\kerbas_2}\). Now, by definition,
  \(\rdeg[{\rdeg[\shiftt]{\mat{G}}}]{\kerbas_2} \ge
  \rdeg[\shiftt]{\kerbas_2\mat{G}}\). Since the rows of \(\kerbas\) are in
  \(\modKer{[\begin{smallmatrix} \mat{F} \\ \mat{G}
  \end{smallmatrix}]}\), we have \(\kerbas_2\mat{G} = -\kerbas_1\mat{F}\),
  hence \(\rdeg[\shiftt]{\kerbas_2\mat{G}} =
  \rdeg[\shiftt]{\kerbas_1\mat{F}}\).  Since \(\mat{F}\) is
  \(\shiftt\)-reduced, we can apply the predictable degree property:
  \(\rdeg[\shiftt]{\kerbas_1\mat{F}} = \rdeg[\tuple{u}]{\kerbas_1}\). This
  proves the sought inequality. For the last point, note that the
  \((\tuple{u},\tuple{v})\)-pivot entries of \([\kerbas_1 \;\; \kerbas_2]\)
  located in \(\kerbas_2\) correspond to \(\tuple{v}\)-pivot entries in
  \(\kerbas_2\). Thus, since \([\kerbas_1 \;\; \kerbas_2]\) is in
  \((\tuple{u},\tuple{v})\)-weak Popov form with all
  \((\tuple{u},\tuple{v})\)-pivot entries in \(\kerbas_2\), it follows that the
  \(\tuple{v}\)-pivot index of \(\kerbas_2\) is increasing.
\end{proof}

\subsection{Basic subroutines and their complexity}
\label{sec:prelim:subroutines}

To conclude these preliminaries, we recall known fast algorithms for three
polynomial matrix subroutines used in our determinant algorithm: multiplication
with unbalanced degrees, minimal approximant bases, and minimal kernel bases;
we give the corresponding complexity estimates adapted to our context and in
particular using our framework stated in \cref{sec:intro:cost_bounds}.

\paragraph{Unbalanced multiplication}

Polynomial matrix algorithms often involve multiplication with matrix operands
whose entries have degrees that may be unbalanced but still satisfy properties
that can be exploited to perform the multiplication efficiently. Here we will
encounter products of reduced matrices with degree properties similar to those
discussed in \cite[Sec.\,3.6]{ZhLaSt12}, where an efficient approach for
computing such products was given.

\begin{lemma}
  \label{lem:subroutine:mult}
  There is an algorithm \algoname{UnbalancedMultiplication} which takes as
  input a matrix \(\mat{B} \in \pmatRing[k][\rdim]\) with \(k\le\rdim\), a
  matrix \(\pmat \in \pmatRing[\rdim][\cdim]\) with \(\cdim\le\rdim\), and an
  integer \(\degdet\) greater than or equal to both the sum of the positive
  entries of \(\rdeg[\shiftz]{\pmat}\) and that of
  \(\rdeg[{\rdeg[\shiftz]{\pmat}}]{\mat{B}}\),
  and returns the product \(\mat{B}\pmat\) using \(\bigO{\rdim^\expmm
  \timepm{\degdet/\rdim}}\) operations in \(\field\), assuming \(\hypsm\) and
  \(\hypeps\).
\end{lemma}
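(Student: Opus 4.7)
The plan is to reduce the computation of the unbalanced product $\mat{B}\pmat$ to a balanced multiplication of polynomial matrices of dimensions $\bigO{\rdim}$ and degree less than $D = \lceil \degdet/\rdim \rceil$, via partial linearization applied to both factors.

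First I would row-linearize $\pmat$: writing $\tuple{d} = \rdeg{\pmat}$, I split each nonzero row of $\pmat$, of degree $d_i$, into $\lceil(d_i+1)/D\rceil$ successive chunks of degree less than $D$. This produces a matrix $\pmat' \in \pmatRing[\rdim'][\cdim]$ with $\deg(\pmat') < D$ and $\rdim' \le \rdim + \degdet/D \le 2\rdim$, together with a sparse expansion matrix $\expandMat \in \pmatRing[\rdim][\rdim']$ whose nonzero entries lie in $\{1,\var^D,\var^{2D},\ldots\}$, and satisfying $\pmat = \expandMat\,\pmat'$.

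Next I would construct directly from $\mat{B}$ a matrix $\mat{B}' \in \pmatRing[k'][\rdim']$ with $\deg(\mat{B}') < D$ and $k' \le 2\rdim$, and a sparse matrix $\mat{R} \in \pmatRing[k][k']$ such that $\mat{B}\expandMat = \mat{R}\,\mat{B}'$. This amounts to row-linearizing $\mat{B}\expandMat$ without materializing it. The key observation is that the entries of $\mat{B}\expandMat$ take the form $\var^{lD}\mat{B}_{a,i}$, whose degrees satisfy $\deg(\mat{B}_{a,i}) + lD \le (e_a - d_i) + d_i = e_a$, where $e_a$ denotes the $a$th entry of $\rdeg[\tuple{d}]{\mat{B}}$. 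Hence the row degrees of $\mat{B}\expandMat$ sum to at most $\sumTuple{\rdeg[\tuple{d}]{\mat{B}}} \le \degdet$, so its row linearization introduces at most $k + \degdet/D \le 2\rdim$ rows. Moreover each chunk of $\mat{B}\expandMat$ is a shifted chunk of some entry of $\mat{B}$, and can be read off without performing any multiplication.

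With $\pmat'$ and $\mat{B}'$ in hand, $\mat{B}\pmat = \mat{R}\,(\mat{B}'\pmat')$, so it remains to compute the balanced product $\mat{B}'\pmat'$ of matrices of dimensions at most $2\rdim\times 2\rdim$ and $2\rdim\times\rdim$ and degree less than $D$. Decomposing into a constant number of square $\rdim\times\rdim$ subblocks and invoking the standing assumption on polynomial matrix multiplication recalled at the end of \cref{sec:intro:cost_bounds}, this costs $\bigO{\rdim^\expmm\timepm{D}} \subseteq \bigO{\rdim^\expmm\timepm{\degdet/\rdim}}$ field operations; left multiplication by the sparse $\mat{R}$ merely performs shifts and additions, contributing a negligible extra term. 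The hypotheses $\hypsm$ and $\hypeps$ come in through the refined chunking analyses of \cite{ZhLaSt12,JeNeScVi17} that one may follow for a uniform bound across regimes (e.g.~when $\degdet$ is small relative to $\rdim$, where a single chunk per row may already be preferable to the above rigid choice $D=\lceil\degdet/\rdim\rceil$). The main obstacle I anticipate is purely bookkeeping: verifying that the two partial linearizations compose correctly, so that $\mat{B}\pmat = \mat{R}\mat{B}'\pmat'$ holds exactly and that the dimension bounds $\rdim',k' \le 2\rdim$ survive in every regime.
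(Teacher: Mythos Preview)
Your double partial linearization is correct and takes a genuinely different route from the paper. The paper does not linearize: it pads $\mat{B}$ and $\pmat$ to square $\rdim\times\rdim$ matrices and invokes \cite[Prop.\,4.1]{JeNeScVi17}, whose underlying algorithm is the bucket-by-degree decomposition of \cite[Sec.\,3.6]{ZhLaSt12}. That yields a cost of the form $\sum_{0\le i\le\log_2\bar\rdim} 2^i (2^{-i}\bar\rdim)^\expmm \timepm{2^i d}$ with $\Theta(\log\rdim)$ summands, and $\hypsm$ together with $\hypeps$ are then used precisely to collapse this sum to $\bigO{\rdim^\expmm\timepm{d}}$. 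Your single-level linearization sidesteps the logarithmic sum entirely and lands directly on $\bigO{\rdim^\expmm\timepm{\lceil\degdet/\rdim\rceil}}$, which is cleaner; in fact it only needs $\hypsm$, to pass from $\timepm{\lceil\degdet/\rdim\rceil}$ to $\bigO{\timepm{\degdet/\rdim}}$ via $\timepm{2d}\le\timepm{2}\,\timepm{d}$, while $\hypeps$ plays no role in your argument. So your final paragraph is off: your own construction involves no ``refined chunking,'' and the small-$\degdet$ regime is already handled uniformly by your linearization (take $D=\max(1,\lceil\degdet/\rdim\rceil)$), with no need to defer to \cite{ZhLaSt12,JeNeScVi17}. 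The bookkeeping you flag as the main obstacle is indeed fine: since the shifts appearing in $\expandMat$ are multiples of $D$, the degree-$D$ chunks of $\mat{B}\expandMat$ align exactly with chunks of entries of $\mat{B}$, so $\mat{B}'$ is assembled by rearrangement alone and the identity $\mat{B}\pmat=\mat{R}\,\mat{B}'\pmat'$ holds on the nose.
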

\begin{proof}
  Zhou et al.~\cite[Sec.\,3.6]{ZhLaSt12} gave such an algorithm, yet with a
  cost analysis which hides logarithmic factors; because these factors are our
  main concern here we will rely on the version in \cite[Sec.\,4]{JeNeScVi17}.
  In this reference, Algorithm \algoname{UnbalancedMultiplication} was
  described for square matrices. One could adapt it to the case of rectangular
  \(\pmat\) and \(\mat{B}\) as in the statement above. However, for the sake of
  conciseness and with no impact on the asymptotic cost bound, we consider the
  more basic approach of forming the square \(\rdim\times\rdim\) matrices
  \(\mat{D} = [\begin{smallmatrix} \mat{B} \\ \matz
  \end{smallmatrix}]\)
  and \(\mat{C} = [\pmat \;\; \matz]\), computing \(\mat{D}\mat{C}\) using the
  above-cited algorithm, and retrieving \(\mat{B}\pmat\) from it. Now, by
  construction, both the sum of the positive entries of
  \(\rdeg[\shiftz]{\mat{C}}\) and that of
  \(\rdeg[{\rdeg[\shiftz]{\pmat}}]{\mat{D}}\) are at most \(\degdet\), hence
  \cite[Prop.\,4.1]{JeNeScVi17} applies: defining \(\bar{\rdim}\) and \(d\) as
  the smallest powers of \(2\) greater than or equal to \(\rdim\) and
  \(\degdet/\rdim\), it states that the computation of \(\mat{D}\mat{C}\) costs
  \(\bigO{\sum_{0 \le i \le \log_2(\bar{\rdim})} 2^i  (2^{-i}\bar\rdim)^\expmm
  \timepm{2^i d}}\) operations in \(\field\). Using \(\hypsm\) and \(\hypeps\),
  which ensure respectively that \(\timepm{2^i d} \le \timepm{2^i} \timepm{d}\)
  and \(\timepm{2^i} \in \bigO{2^{i(\expmm-1-\varepsilon)}}\) for some
  \(\varepsilon>0\), we obtain that this bound is in \(\bigO{\bar\rdim^\expmm
  \timepm{d} \sum_{0\le i \le \log_2(\bar{\rdim})}2^{-i\varepsilon}} \subseteq
  \bigO{\bar\rdim^\expmm \timepm{d}}\). This is in \(\bigO{\rdim^\expmm
  \timepm{\degdet/\rdim}}\), since \(\bar\rdim \in \Theta(\rdim)\) and \(d \in
  \Theta(1+\degdet/\rdim)\).
\end{proof}

\paragraph{Minimal approximant basis}

The second basic tool we will use is approximant bases computation; for this,
we will use the algorithm \algoname{PM-Basis}, originally described in
\cite{GiJeVi03}. Precisely, we rely on the slightly modified version presented
in \cite{JeaNeiVil20} which ensures that the computed basis is in shifted
weak Popov form.

\begin{lemma}
  \label{lem:subroutine:approx}
  There is an algorithm \algoname{PM-Basis} which takes as input a tuple
  \(\orders \in \ZZp^\cdim\), a matrix \(\sys \in \pmatRing[\rdim][\cdim]\)
  with \(\cdeg{\sys} < \orders\), and a shift \(\shifts \in \ZZ^\rdim\), and
  returns a basis of \(\modApp{\orders}{\sys}\) in \(\shifts\)-weak
  Popov form using \(\bigO{(\rdim+\cdim)\rdim^{\expmm-1}
  \timegcd{\max(\orders)}}\) operations in \(\field\).
\end{lemma}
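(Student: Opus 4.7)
The plan is to recall and adapt the classical divide-and-conquer scheme of Giorgi, Jeannerod, and Villard, where an approximant basis is built by splitting the order in half and combining two approximant bases via one polynomial matrix product. The modification of Jeannerod, Neiger, and Villard refines the base case and controls the invariant so that the output is not merely shifted reduced but in shifted weak Popov form.

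First I would reduce the problem to a uniform order. Setting \(\sigma = \max(\orders)\), one replaces \(\sys\) with a matrix whose \(j\)th column is \(\matcol{\sys}{j} \var^{\sigma - \order_j}\) (possible since \(\cdeg{\sys}<\orders\)), so that the approximant basis of the padded input at uniform order \(\sigma\) coincides with \(\modApp{\orders}{\sys}\). Then I would handle the base case \(\sigma \le 1\): the matrix \(\sys \bmod \var\) is a constant matrix in \(\matRing[\rdim][\cdim]\), and an \(\shifts\)-weak Popov basis of \(\modApp{(1,\ldots,1)}{\sys}\) can be obtained directly by a single \(\shifts\)-weighted Gaussian elimination on this constant matrix, producing \(\shiftz\)-weak Popov elimination certificates that, when stitched with suitable monomials \(\var\), yield the desired \(\shifts\)-weak Popov basis.

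For the recursive step I would split the order: compute an \(\shifts\)-weak Popov basis \(\appbas_1\) of \(\modApp{\sigma/2}{\sys \bmod \var^{\sigma/2}}\) by a recursive call, then form the residual \(\sys' = \var^{-\sigma/2} (\appbas_1 \sys) \bmod \var^{\sigma/2}\) (truncating explicitly so no Laurent polynomials arise), and compute an \(\shiftt\)-weak Popov basis \(\appbas_2\) of \(\modApp{\sigma/2}{\sys'}\) by a second recursive call, with \(\shiftt = \rdeg[\shifts]{\appbas_1}\). The product \(\appbas = \appbas_2 \appbas_1\) is then a basis of \(\modApp{\sigma}{\sys}\); by \cref{lem:product_of_reduced} applied to \(\appbas_2\) (\(\shiftt\)-weak Popov) and \(\appbas_1\) (\(\shifts\)-weak Popov), the product \(\appbas\) is in \(\shifts\)-weak Popov form, as required.

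For the complexity I would unfold the recurrence \(T(\sigma) = 2 T(\sigma/2) + P(\sigma)\), where \(P(\sigma)\) is the cost of the two matrix products (computing the residual, and combining \(\appbas_2\appbas_1\)); each such product multiplies \(\rdim\times\rdim\) and \(\rdim\times\cdim\) matrices of degree \(\bigO{\sigma}\), at cost \(\bigO{(\rdim+\cdim)\rdim^{\expmm-1}\timepm{\sigma}}\) using the assumed polynomial-matrix multiplication bound (treating the \(\rdim\times\cdim\) factor as \(\lceil\cdim/\rdim\rceil\) square blocks). Unrolling gives
\[
T(\sigma) \;\in\; \bigO{(\rdim+\cdim)\rdim^{\expmm-1}\sum_{0\le i\le \lceil\log_2\sigma\rceil} 2^i \timepm{\sigma/2^i}} \;=\; \bigO{(\rdim+\cdim)\rdim^{\expmm-1}\timegcd{\sigma}},
\]
which matches the claimed bound. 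The main obstacle, and the one requiring the Jeannerod--Neiger--Villard refinement rather than the original algorithm, is guaranteeing the shifted weak Popov form at every level: both the base case must deliver this exact form and not merely reducedness, and the multiplicative combination \(\appbas_2\appbas_1\) must preserve it, which is precisely what \cref{lem:product_of_reduced} ensures once the shift used for the second call is chosen as \(\shiftt=\rdeg[\shifts]{\appbas_1}\).
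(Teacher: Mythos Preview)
Your proposal is correct and describes the same \algoname{PM-Basis} algorithm that the paper invokes; the paper's own proof is simply a terse citation to \cite{JeaNeiVil20} for the algorithm, the non-uniform order reduction, and the cost bound, whereas you have unpacked the divide-and-conquer scheme and its analysis explicitly. The substance is the same: reduce to uniform order \(\sigma=\max(\orders)\) by column scaling, recurse on halved order, combine via one polynomial matrix product, preserve the \(\shifts\)-weak Popov form through \cref{lem:product_of_reduced}, and sum the geometric recurrence to obtain \(\timegcd{\sigma}\).
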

\begin{proof}
  The algorithm is \cite[Algo.\,2]{JeaNeiVil20}; to accommodate non-uniform
  order \(\orders\), it is called with input order \(\Gamma = \max(\orders)\)
  and input matrix \(\sys\xDiag{(\Gamma,\ldots,\Gamma)-\orders}\) as explained
  in \cite[Rmk.\,3.3]{JeaNeiVil20}. According to
  \cite[Prop.\,3.2]{JeaNeiVil20}, this costs \(\bigO{(1 + \frac{\cdim}{\rdim})
  \sum_{0\le i \le \lceil \log_2(\Gamma) \rceil} 2^i m^\expmm \timepm{2^{-i}
\Gamma}}\) operations in \(\field\), which is precisely the claimed bound by
definition of \(\timegcd{\cdot}\).
\end{proof}

\paragraph{Minimal kernel basis}

We will make use of the algorithm of Zhou et al.~\cite{ZhLaSt12}, which itself
relies on unbalanced products and approximant bases, and returns a kernel basis
in shifted reduced form efficiently for input matrices with small average row
degree.

\begin{lemma}
  \label{lem:subroutine:kernel}
  There is an algorithm \algoname{KernelBasis} which takes as input a full
  column rank matrix \(\sys\in\pmatRing[\rdim][\cdim]\) with \(\rdim \ge\cdim\)
  and \(\rdim \in  \bigO{n}\), and a shift \(\shifts\in\NN^\rdim\) such that
  \(\shifts \ge \rdeg[\shiftz]{\sys}\), and returns a basis of
  \(\modKer{\sys}\) in \(\shifts\)-reduced form using \(\bigO{\rdim^\expmm
  \timegcd{\degdet/\rdim}}\) operations in \(\field\), assuming
  \(\hypsl,\hypsm,\hypeps\).  Here \(\degdet\) is the sum of the entries of
  \(\shifts\), and the sum of the \(\shifts\)-row degree of this basis is at
  most \(\degdet\).
\end{lemma}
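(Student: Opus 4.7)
The plan is to use the divide-and-conquer left kernel basis algorithm of Zhou, Labahn, and Storjohann~\cite[Sec.\,4]{ZhLaSt12}, analyzed within the framework of \cref{sec:intro:cost_bounds} so as to keep track of logarithmic factors. Given \((\sys,\shifts)\), split \(\sys = [\sys_1 \;\; \sys_2]\) into two column blocks of widths roughly \(\cdim/2\). Recursively compute an \(\shifts\)-reduced left kernel basis \(\kerbas_1\) of \(\sys_1\); form \(\mat{T} = \kerbas_1 \sys_2\) via \algoname{UnbalancedMultiplication}; recursively compute a \(\shiftt\)-reduced left kernel basis \(\kerbas_2\) of \(\mat{T}\) with \(\shiftt = \rdeg[\shifts]{\kerbas_1}\); and return \(\kerbas_2 \kerbas_1\), again computed by \algoname{UnbalancedMultiplication}. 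At the base case \(\cdim=1\), one call to \algoname{PM-Basis} with shift \(\shifts\) at sufficiently large order yields the kernel basis as the rows of nonpositive \(\shifts\)-row degree in the resulting approximant basis.

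Correctness is the standard composition argument: the rows of \(\kerbas_2\kerbas_1\) lie in \(\modKer{\sys}\), and a rank count shows that they span it. That the output is \(\shifts\)-reduced follows from \cref{lem:product_of_reduced} applied to \(\kerbas_1\) (\(\shifts\)-reduced by induction) and \(\kerbas_2\) (\(\shiftt\)-reduced with \(\shiftt=\rdeg[\shifts]{\kerbas_1}\)). The bound \(\sumTuple{\rdeg[\shifts]{\kerbas_2\kerbas_1}}\leq\degdet\) reduces, via the predictable degree property, to the fact that for any \(\shifts\)-reduced left kernel basis \(\kerbas\) of a full column rank matrix \(\sys\) with \(\shifts\geq\rdeg[\shiftz]{\sys}\), one has \(\sumTuple{\rdeg[\shifts]{\kerbas}}\leq\sumTuple{\shifts}\); this is a leading-matrix argument, already present in \cite{ZhLaSt12}, that uses the hypothesis \(\shifts\geq\rdeg[\shiftz]{\sys}\) to ensure nonnegativity of the quantities involved and relates the sum of \(\shifts\)-pivot degrees of \(\kerbas\) to the degree of the determinant of a suitable \(\cdim\times\cdim\) submatrix of \(\sys\).

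The complexity analysis is the crux. By \cref{lem:subroutine:mult}, using \(\hypsm\) and \(\hypeps\), each \algoname{UnbalancedMultiplication} call at an internal node costs \(\bigO{\rdim^\expmm \timepm{\degdet_i/\rdim}}\), where \(\degdet_i\) is the sum of shifted row degrees of the corresponding kernel factor and is at most the budget \(\degdet\) available at that node. When the recursion splits, the degree budget splits as well: with \(\degdet_1 = \sumTuple{\rdeg[\shifts]{\kerbas_1}}\) and \(\degdet_2 = \sumTuple{\rdeg[\shiftt]{\kerbas_2}}\), the invariant above applied recursively gives \(\degdet_1 + \degdet_2 \le \degdet\). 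Unrolling the binary recursion on columns, and using the superlinearity \(\hypsl\) together with the defining identity \(\timegcd{2d}=2\timegcd{d}+\timepm{2d}\), the total cost of the internal-node multiplications sums to \(\bigO{\rdim^\expmm \timegcd{\degdet/\rdim}}\). The contribution of the \(\cdim\) leaves, each calling \algoname{PM-Basis} whose cost is already in \(\timegcd\) form by \cref{lem:subroutine:approx}, telescopes to the same bound using \(\hypsl\) and \(\hypsm\).

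The main obstacle is precisely avoiding a spurious \(\log(\cdim)\) factor: a naive level-by-level bound would multiply \(\bigO{\rdim^\expmm \timepm{\degdet/\rdim}}\) by the number \(\log_2(\cdim)\) of recursion levels. The remedy is to match the binary recursion to the recurrence defining \(\timegcd{\cdot}\): the fact that the shifted degree budget splits between the two children, rather than being duplicated, converts a per-level sum of the form \(\sum 2^i \timepm{2^{-i}\degdet/\rdim}\) into \(\timegcd{\degdet/\rdim}\). Verifying this bookkeeping carefully, together with the tracking of the shifts \(\shifts\mapsto\shiftt\) through the recursion and the invocation of the invariant on sums of shifted row degrees at every recursive level, constitutes the technical content of the proof.
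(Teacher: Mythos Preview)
Your complexity analysis rests on the claim that the degree budget splits as $D_1 + D_2 \le \degdet$, and this is false. With $D_1 = \sumTuple{\rdeg[\shifts]{\kerbas_1}} = \sumTuple{\shiftt}$ and $D_2 = \sumTuple{\rdeg[\shiftt]{\kerbas_2}}$, the invariant from \cite[Thm.\,3.4]{ZhLaSt12} yields only $D_1 \le \degdet$ and $D_2 \le \sumTuple{\shiftt} = D_1$: the budgets are \emph{nested}, not additive. A concrete instance is $\sys = \left[\begin{smallmatrix} x & 0 \\ 1 & 0 \\ 0 & x \\ 0 & 1 \end{smallmatrix}\right]$ with $\shifts=(1,1,1,1)$, where $\degdet=4$ yet $D_1=D_2=4$. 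Since the degree parameter does not halve between the two children, your matching of the column recursion to the recurrence $\timegcd{2d}=2\timegcd{d}+\timepm{2d}$ breaks down, and the argument as written reintroduces the $\log(\cdim)$ factor the lemma is meant to exclude. Your base case is also problematic: at the leftmost leaf the single-column problem still has row dimension $\rdim$ and a \algoname{PM-Basis} call there needs order governed by $\max(\shifts)$ rather than $\degdet/\rdim$; the algorithm of \cite{ZhLaSt12} in fact interleaves approximant-basis calls at carefully chosen orders throughout the recursion rather than only at the leaves.

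The mechanism that actually controls the cost is dimensional, not a degree split: while every node keeps a degree budget $\le \degdet$, the row dimension of the second subproblem drops from $\rdim$ to roughly $\rdim-\cdim/2$, so at depth $j$ one has $2^j$ subproblems of dimension $\sim\rdim/2^j$ and average degree $\sim 2^j\degdet/\rdim$. The paper does not redo this analysis; it cites the explicit double sum from \cite[Prop.\,B.1]{JeNeScVi17}, bounds each inner sum using $\hypsm$ and $\hypeps$ exactly as in the proof of \cref{lem:subroutine:mult}, and then uses the submultiplicativity of $\timegcd{\cdot}$ (from $\hypsl$ and $\hypsm$) together with $\hypeps$ to collapse the remaining outer sum $\sum_j 2^{j(1-\expmm)}\timegcd{2^j}$ into a convergent geometric series.
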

\begin{proof}
  The algorithm of Zhou et al.~\cite{ZhLaSt12} computes an \(\shifts\)-reduced
  basis \(\kerbas \in \pmatRing[k][\rdim]\) of \(\modKer{\sys}\); precisely,
  this reference is about computing a basis of the right kernel in column
  reduced form, yet this naturally translates into left kernels and row reduced
  forms by taking suitable transposes. Furthermore, the last claim in the lemma
  follows from \cite[Thm.\,3.4]{ZhLaSt12}, which states that any such basis
  \(\kerbas\) is such that \(\sumTuple{\rdeg[\shifts]{\kerbas}} \le
  \sumTuple{\shifts} = \degdet\). For the complexity, we rely on the analysis
  in \cite[Prop.\,B.1]{JeNeScVi17} which shows that, defining \(\bar{\rdim}\)
  and \(d\) as the smallest powers of \(2\) greater than or equal to \(\rdim\)
  and \(\degdet/\rdim\), this computation costs
  \[
    \bigO{\sum_{j=0}^{\log_2(\bar\rdim)} 2^j \left( \sum_{i=0}^{\log_2(2^{-j}\bar\rdim)} 2^i (2^{-i-j}\bar\rdim)^\expmm \timepm{2^{i+j}d} + \sum_{i=0}^{\log_2(2^jd)} 2^i (2^{-j}\bar\rdim)^\expmm \timepm{2^{j-i}d}  \right)  }
  \]
  operations in \(\field\). Now the same analysis as in the proof of
  \cref{lem:subroutine:mult} shows that, assuming \(\hypsm\) and \(\hypeps\),
  the first inner sum is in \(\bigO{(2^{-j}\bar\rdim)^\expmm \timepm{2^jd}}\),
  and by definition of \(\timegcd{\cdot}\) the second inner sum is in
  \(\bigO{(2^{-j}\bar\rdim)^\expmm \timegcd{2^j d}}\). Thus the total cost is
  in \(\bigO{\sum_{0 \le j \le \log_2(\bar\rdim)} 2^j (2^{-j}\bar\rdim)^\expmm
    \timegcd{2^jd}}\), which is in \(\bigO{\bar\rdim^\expmm \timegcd{d} \sum_{0
  \le j \le \log_2(\bar\rdim)} 2^{j(1-\expmm)} \timegcd{2^j}}\) since
  \(\hypsl\) and \(\hypsm\) ensure that \(\timegcd{\cdot}\) is
  submultiplicative. Similarly to the proof of \cref{lem:subroutine:mult}, this
  bound is in \(\bigO{\rdim^\expmm \timegcd{\degdet/\rdim}}\) thanks to
  \(\hypeps\).
\end{proof}

\section{Determinant algorithm for reduced matrices}
\label{sec:determinant}

In this section, we present the main algorithm in this paper, which computes
the determinant of a matrix in reduced form using the subroutines listed in
\cref{sec:prelim:subroutines} as well as the algorithms
\algoname{ReducedToWeakPopov} and \algoname{WeakPopovToPopov} from
\cref{thm:reduced_to_weak_popov,thm:weak_popov_to_popov}. Taking for granted
the proof of these theorems in
\cref{sec:reduced_to_weak_popov,sec:weak_popov_to_popov}, here we prove the
correctness of our determinant algorithm in \cref{sec:determinant:algo} and
analyse its complexity in \cref{sec:determinant:complexity}, thus proving
\cref{thm:polmat_det}.

\subsection{Two properties of determinants of reduced matrices}
\label{sec:determinant:properties}

\paragraph{Leading coefficient of the determinant}

All bases of a given submodule of \(\pmatRing[1][\cdim]\) of rank \(\cdim\) have the
same determinant up to a constant factor, i.e.~up to multiplication by an
element of \(\field\setminus\{0\}\). Many algorithms operating on polynomial
matrices such as \algoname{PM-Basis} and \algoname{KernelBasis} compute such
bases, so that their use in a determinant algorithm typically leads to
obtaining the sought determinant up to a constant factor; then finding the
actual determinant requires to efficiently recover this constant \cite[see
e.g.][Sec.\,4]{LaNeZh17}. Since in this paper we seek determinants of matrices
in reduced form, this issue is easily handled using the next result.

\begin{lemma}
  \label{lem:leading_coeff_det}
  Let \(\shifts \in\ZZ^\cdim\) and \(\pmat\in\pmatRing[\cdim]\). If \(\pmat\)
  is in \(\shifts\)-reduced form, the leading coefficient of \(\det(\pmat)\) is
  \(\det(\lmat[\shifts]{\pmat})\). In particular, if \(\pmat\) is in
  \(\shifts\)-weak Popov form, the leading coefficient of
  \(\det(\pmat)\) is the product of the leading coefficients of the diagonal
  entries of \(\pmat\).
\end{lemma}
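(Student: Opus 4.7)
The plan is to exploit the identity $\lmat[\shifts]{\pmat} = [x^0] \bigl( \xDiag{-\shiftt} \pmat \xDiag{\shifts} \bigr)$ already recorded in \cref{sec:prelim:reduced}, where $\shiftt = \rdeg[\shifts]{\pmat}$ and $[x^0]$ extracts the constant coefficient in $\field[x^{-1}]$. Taking determinants in the identity $\xDiag{-\shiftt} \pmat \xDiag{\shifts}$ yields the relation
\[
  \det\bigl( \xDiag{-\shiftt} \pmat \xDiag{\shifts} \bigr) = x^{\sumTuple{\shifts} - \sumTuple{\shiftt}} \det(\pmat),
\]
so that the constant coefficient of the left-hand side (which is a polynomial in $x^{-1}$) equals the coefficient of $x^{\sumTuple{\shiftt} - \sumTuple{\shifts}}$ in $\det(\pmat)$. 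On the other hand, since $\xDiag{-\shiftt} \pmat \xDiag{\shifts}$ has entries in $\field[x^{-1}]$, the multilinearity of the determinant gives $[x^0] \det\bigl( \xDiag{-\shiftt} \pmat \xDiag{\shifts} \bigr) = \det(\lmat[\shifts]{\pmat})$.

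It remains to check that $\sumTuple{\shiftt} - \sumTuple{\shifts}$ is genuinely the degree of $\det(\pmat)$, i.e.~that this coefficient is nonzero. First, the inequality $\deg(\pmat_{i,j}) \le t_i - s_j$ combined with the Leibniz formula yields $\deg(\det(\pmat)) \le \sumTuple{\shiftt} - \sumTuple{\shifts}$, so the candidate leading coefficient is well-defined. Second, \(\shifts\)-reducedness means that $\lmat[\shifts]{\pmat} \in \matRing[\cdim]$ has full row rank, hence is invertible, so $\det(\lmat[\shifts]{\pmat}) \ne 0$. Combining, this coefficient is indeed the leading coefficient, proving the first assertion.

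For the second assertion, note that since $\pmat$ is square and its $\shifts$-pivot index is a strictly increasing tuple in $\{1,\ldots,\cdim\}^\cdim$, it must equal $(1,\ldots,\cdim)$; hence the $\shifts$-pivot of row $i$ is $\pmat_{i,i}$. By definition of the pivot as the \emph{rightmost} entry reaching the $\shifts$-row degree, one has $\deg(\pmat_{i,j}) + s_j < t_i$ for all $j>i$, so the $(i,j)$-entry of $\lmat[\shifts]{\pmat}$ vanishes for $j>i$. Thus $\lmat[\shifts]{\pmat}$ is lower triangular with diagonal entries equal to the leading coefficients of $\pmat_{1,1},\ldots,\pmat_{\cdim,\cdim}$, and its determinant---which by the first part is the leading coefficient of $\det(\pmat)$---is their product. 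No step presents a real obstacle; the only care needed is in tracking the shifts of degree in the diagonal conjugation.
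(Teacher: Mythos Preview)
Your proof is correct but takes a somewhat different route from the paper. The paper first invokes the classical unshifted statement (citing \cite[Sec.\,6.3.2]{Kailath80}) and then reduces the general case to it by replacing \(\pmat\) with \(\pmat\xDiag{\shifts - (\min\shifts,\ldots,\min\shifts)}\), observing that this does not change the leading coefficient of the determinant and that \(\lmat[\shifts]{\pmat} = \lmat[\shiftz]{\pmat\xDiag{\shifts - (\min\shifts,\ldots,\min\shifts)}}\). By contrast, you give a self-contained direct argument: taking determinants in the Laurent conjugation \(\xDiag{-\shiftt}\pmat\xDiag{\shifts}\) (with \(\shiftt=\rdeg[\shifts]{\pmat}\)) and reading off the \(x^0\) coefficient on both sides. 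Your route avoids the external citation and handles the shifted case in one stroke; the paper's route is terser because it delegates the core computation. The second assertion is handled identically in both: \(\lmat[\shifts]{\pmat}\) is lower triangular with the leading coefficients of the diagonal entries of \(\pmat\) on its diagonal.
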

\begin{proof}
  The second claim is a direct consequence of the first, since for \(\pmat\) in
  \(\shifts\)-weak Popov form, \(\lmat[\shifts]{\pmat}\) is lower
  triangular with diagonal entries equal to the leading coefficients of the
  diagonal entries of \(\pmat\xDiag{\shifts}\), which are
  the leading coefficients of the diagonal entries of \(\pmat\). For the first claim in the case
  \(\shifts=\shiftz\), we refer to \cite[Sec.\,6.3.2]{Kailath80}, and in
  particular Eq.~(23) therein. Now, for an arbitrary \(\shifts\) and \(\pmat\)
  in \(\shifts\)-reduced form, we consider the nonnegative shift \(\shiftt =
  \shifts - (\min(\shifts),\ldots,\min(\shifts))\) and observe that
  \(\lmat[\shifts]{\pmat} = \lmat[\shiftz]{\pmat\xDiag{\shiftt}}\), hence
  \(\pmat\xDiag{\shiftt}\) is \(\shiftz\)-reduced, and thus the leading
  coefficient of \(\det(\pmat\xDiag{\shiftt}) = \det(\pmat)
  \det(\xDiag{\shiftt})\) (which is the same as that of \(\det(\pmat)\)) is
  equal to \(\det(\lmat[\shiftz]{\pmat\xDiag{\shiftt}}) =
  \det(\lmat[\shifts]{\pmat})\).
\end{proof}

\paragraph{From shifted to non-shifted}

In \cref{sec:intro:new_tools}, we explained that one step of our algorithm
consists in finding \(\det(\kerbas)\) for a matrix \(\kerbas\) in
\(\tuple{v}\)-weak Popov form, and that it achieves this by computing the
\(-\tuple{d}\)-Popov form of \(\trsp{\kerbas}\), which is already in
\(-\tuple{d}\)-weak Popov form. The next lemma substantiates this; note that in
\cref{sec:intro:new_tools} we had left out the reversal matrix
\(\revmat{\cdim}\) for the sake of exposition.

\begin{lemma}
  \label{lem:row_column_reduced}
  Let \(\tuple{v} \in\ZZ^\cdim\), let \(\mat{K}\in\pmatRing[\cdim]\), and let
  \(\tuple{d} = \rdeg[\tuple{v}]{\mat{K}}\).
  \begin{enumerate}[(a)]
    \item\label{lem:transposition:leadingmat} if \(\lmat[\tuple{v}]{\mat{K}}\) has no zero column, then
      \(\lmat[-\tuple{d}]{\trsp{\mat{K}}} = \trsp{\lmat[\tuple{v}]{\mat{K}}}\)
      and \(\rdeg[-\tuple{d}]{\trsp{\mat{K}}} = -\tuple{v}\);
    \item\label{lem:transposition:reduced} if \(\mat{K}\) is in \(\tuple{v}\)-reduced form, then
      \(\trsp{\mat{K}}\) is in \(-\tuple{d}\)-reduced form;
    \item\label{lem:transposition:owpopov} if \(\mat{K}\) is in \(\tuple{v}\)-weak Popov form, then
      \(\revmat{\cdim}\trsp{\mat{K}}\revmat{\cdim}\) is in
      \(-\tuple{d}\revmat{\cdim}\)-weak Popov form;
    \item\label{lem:transposition:popov} if furthermore \(\mat{P}\) is the \(-\tuple{d}\revmat{\cdim}\)-Popov
      form of \(\revmat{\cdim}\trsp{\mat{K}}\revmat{\cdim}\), then
      \(\trsp{\mat{P}}\) is in \(\shiftz\)-weak Popov form and
      \(\det(\mat{K})=\det(\lmat[\tuple{v}]{\mat{K}})\det(\trsp{\mat{P}})\).
  \end{enumerate} 
\end{lemma}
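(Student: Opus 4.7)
The plan is to prove the four items in order, since each rests on the previous. For (a), I would proceed by a direct entry-wise degree computation. Setting $d_i = \max_j(\deg(\mat{K}_{i,j}) + v_j)$, entry $(i,j)$ of $\lmat[\tuple{v}]{\mat{K}}$ is nonzero iff $\deg(\mat{K}_{i,j}) = d_i - v_j$. If column $j$ of $\lmat[\tuple{v}]{\mat{K}}$ has at least one nonzero entry, then $\max_i(\deg(\mat{K}_{i,j}) - d_i) = -v_j$, which is precisely the $(-\tuple{d})$-row degree of row $j$ of $\trsp{\mat{K}}$. Similarly, entry $(j,i)$ of $\lmat[-\tuple{d}]{\trsp{\mat{K}}}$ is the coefficient of degree $d_i - v_j$ of $\mat{K}_{i,j}$, which equals entry $(i,j)$ of $\lmat[\tuple{v}]{\mat{K}}$. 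For (b), $\tuple{v}$-reducedness of the square matrix $\mat{K}$ means $\lmat[\tuple{v}]{\mat{K}}$ has full row rank, hence is invertible and has no zero column; (a) then gives $\lmat[-\tuple{d}]{\trsp{\mat{K}}} = \trsp{\lmat[\tuple{v}]{\mat{K}}}$, which is invertible as well.

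For (c), applying (a) and then carefully tracking indices under the two reversal matrices, I would compute $\rdeg[-\tuple{d}\revmat{\cdim}]{\revmat{\cdim}\trsp{\mat{K}}\revmat{\cdim}} = -\tuple{v}\revmat{\cdim}$ and $\lmat[-\tuple{d}\revmat{\cdim}]{\revmat{\cdim}\trsp{\mat{K}}\revmat{\cdim}} = \revmat{\cdim}\trsp{\mat{L}}\revmat{\cdim}$, where $\mat{L} = \lmat[\tuple{v}]{\mat{K}}$. By \cref{lem:weak_popov_lmat}, the $(-\tuple{d}\revmat{\cdim})$-weak Popov property of $\revmat{\cdim}\trsp{\mat{K}}\revmat{\cdim}$ then reduces to checking that $\trsp{\mat{L}}$ is in row echelon form. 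Since $\mat{K}$ is in $\tuple{v}$-weak Popov form and square, $\mat{L}$ is invertible and $\revmat{\cdim}\mat{L}\revmat{\cdim}$ is an invertible constant matrix in row echelon form, hence upper triangular with nonzero diagonal; consequently $\mat{L}$ is lower triangular with nonzero diagonal, making $\trsp{\mat{L}}$ upper triangular with nonzero diagonal, which is a row echelon form.

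For (d), the triangular structure obtained in (c) shows that the $(-\tuple{d}\revmat{\cdim})$-pivot index of $\revmat{\cdim}\trsp{\mat{K}}\revmat{\cdim}$ is $(1,\ldots,\cdim)$, and \cref{lem:popov_pivot} transfers this pivot index to $\mat{P}$. Thus the monic pivots of $\mat{P}$ lie on its diagonal, and the Popov column-domination property gives $\deg(\mat{P}_{j,i}) < \deg(\mat{P}_{i,i})$ for $j \ne i$. Transposing, the diagonal entry $\mat{P}_{i,i}$ becomes the unique maximal-degree entry in row $i$ of $\trsp{\mat{P}}$, so $\trsp{\mat{P}}$ is in $\shiftz$-weak Popov form with diagonal pivots. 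For the determinant identity, writing $\mat{P} = \mat{U}\revmat{\cdim}\trsp{\mat{K}}\revmat{\cdim}$ with $\mat{U}$ unimodular gives $\det(\mat{P}) = u\det(\mat{K})$ with $u = \det(\mat{U}) \in \field\setminus\{0\}$ (using $\det(\revmat{\cdim})^2 = 1$). Now \cref{lem:leading_coeff_det} applied to $\trsp{\mat{P}}$ shows that $\det(\trsp{\mat{P}}) = \det(\mat{P})$ is monic (its leading coefficient is the product of the monic diagonal leading coefficients), while the same lemma applied to $\mat{K}$ gives leading coefficient $\det(\lmat[\tuple{v}]{\mat{K}})$. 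Matching leading coefficients yields $u = 1/\det(\lmat[\tuple{v}]{\mat{K}})$, and the claimed identity follows.

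The calculations are elementary degree arithmetic, so the main source of potential mistakes is the bookkeeping in (c): the two reversal matrices and the transposition must compose correctly to convert the echelon property of $\revmat{\cdim}\mat{L}\revmat{\cdim}$ into that of $\trsp{\mat{L}}$. Once this identification is set up, (d) falls out cleanly because the Popov column-domination property of $\mat{P}$ becomes a row-domination property after transposition, and the leading-coefficient bookkeeping for the determinant identity is handled uniformly by \cref{lem:leading_coeff_det}.
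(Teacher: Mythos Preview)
Your proposal is correct and follows essentially the same approach as the paper. The only cosmetic difference is in part~(a): you carry out an explicit entry-wise degree computation, whereas the paper packages the same argument via the Laurent polynomial identity \(\trsp{(\xDiag{-\tuple{d}}\mat{K}\xDiag{\tuple{v}})} = \xDiag{\tuple{v}}\trsp{\mat{K}}\xDiag{-\tuple{d}}\); for (c) and (d) your arguments (via \cref{lem:weak_popov_lmat}, \cref{lem:popov_pivot}, and \cref{lem:leading_coeff_det}) match the paper's almost step for step.
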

\begin{proof}
  By definition, \(\trsp{\lmat[\tuple{v}]{\mat{K}}}\) is the coefficient of
  degree \(0\) of \(\trsp{(\xDiag{-\tuple{d}} \mat{K} \xDiag{\tuple{v}})} =
  \xDiag{\tuple{v}} \trsp{\mat{K}} \xDiag{-\tuple{d}}\), which is a matrix over
  \(\field[x^{-1}]\). The assumption on \(\lmat[\tuple{v}]{\mat{K}}\) implies
  that this coefficient of degree \(0\) has no zero row. It follows that
  \(\rdeg[-\tuple{d}]{\trsp{\mat{K}}} = -\tuple{v}\) and that this coefficient
  of degree \(0\) is \(\lmat[-\tuple{d}]{\trsp{\mat{K}}}\).
  \cref{lem:transposition:reduced} follows from
  \cref{lem:transposition:leadingmat} by definition of shifted reduced forms.

  From now on, we assume that \(\mat{K}\) is in \(\tuple{v}\)-weak
  Popov form. Then \(\lmat[\tuple{v}]{\mat{K}}\) is invertible and lower
  triangular, and in particular \(\lmat[-\tuple{d}]{\trsp{\mat{K}}} =
  \trsp{\lmat[\tuple{v}]{\mat{K}}}\). Since \(\revmat{\cdim}\) is a permutation
  matrix, we obtain
  \(\lmat[-\tuple{d}\revmat{\cdim}]{\revmat{\cdim}\trsp{\mat{K}}\revmat{\cdim}}
  = \revmat{\cdim}\lmat[-\tuple{d}]{\trsp{\mat{K}}}\revmat{\cdim} =
  \revmat{\cdim}\trsp{\lmat[\tuple{v}]{\mat{K}}}\revmat{\cdim}\), which is
  invertible and lower triangular. Hence
  \(\revmat{\cdim}\trsp{\mat{K}}\revmat{\cdim}\) is in
  \(-\tuple{d}\revmat{\cdim}\)-weak Popov form.

  For \cref{lem:transposition:popov}, since \(\mat{P}\) is \(\cdim\times\cdim\) and in
  \(-\tuple{d}\revmat{\cdim}\)-Popov form, we have
  \(\lmat[\shiftz]{\trsp{\mat{P}}} = \idMat{\cdim}\), hence \(\trsp{\mat{P}}\)
  is in \(\shiftz\)-weak Popov form.  Furthermore, since
  \(\revmat{\cdim}\trsp{\mat{K}}\revmat{\cdim}\) is unimodularly equivalent to
  \(\mat{P}\), its determinant is \(\det(\mat{K}) =
  \det(\revmat{\cdim}\trsp{\mat{K}}\revmat{\cdim}) = \lambda\det(\mat{P})\) for
  some \(\lambda\in\field\setminus\{0\}\). Applying
  \cref{lem:leading_coeff_det} to \(\mat{P}\) shows that \(\det(\mat{P})\) is
  monic, hence \(\lambda\) is the leading coefficient of \(\det(\mat{K})\);
  applying the same lemma to \(\mat{K}\) yields \(\lambda =
  \det(\lmat[\tuple{v}]{\mat{K}})\).
\end{proof}

\subsection{Algorithm and correctness}
\label{sec:determinant:algo}

Our main determinant algorithm is \algoname{DeterminantOfWeakPopov}
(\cref{algo:determinant_weakpopov}), which takes as input a matrix in
\(\shiftz\)-weak Popov form and computes its determinant using
recursive calls on matrices of smaller dimension. Then, we compute the
determinant of a \(\shiftz\)-reduced matrix by first calling
\algoname{ReducedToWeakPopov} to find a \(\shiftz\)-weak Popov matrix which has
the same determinant up to a nonzero constant, and then calling the previous
algorithm on that matrix. This is detailed in \cref{algo:determinant_reduced}.

\begin{algorithm}
  \caption{\algoname{DeterminantOfReduced}$(\pmat)$}
  \label{algo:determinant_reduced}

  \begin{algorithmic}[1]
  \Require{a matrix \(\pmat\in\pmatRing[\rdim]\) in \(\shiftz\)-reduced form.}
  \Ensure{the determinant of \(\pmat\).}

  \State \(\mat{P} \in \pmatRing[\rdim][\rdim] \assign \algoname{ReducedToWeakPopov}(\pmat,\shiftz)\)
  \State \(\Delta \assign \Call{DeterminantOfWeakPopov}{\mat{P}}\);
          ~~\(\ell_\Delta \in \field\setminus\{0\} \assign\) leading coefficient of \(\Delta\)
  \State \Return \(\det(\lmat[\shiftz]{\pmat}) \,\Delta / \ell_\Delta\)
  \end{algorithmic}
\end{algorithm}

The correctness of \cref{algo:determinant_reduced} is obvious: according to
\cref{thm:reduced_to_weak_popov,pro:algo:determinant}, \(\mat{P}\) is a
\(\shiftz\)-weak Popov form of \(\pmat\), and \(\Delta\) is the determinant of
\(\mat{P}\) up to multiplication by some element of \(\field\setminus\{0\}\).
Thus \(\det(\pmat) = \ell \Delta\) for some \(\ell \in \field\setminus\{0\}\), and
\cref{lem:leading_coeff_det} yields \(\ell = \det(\lmat[\shiftz]{\pmat}) /
\ell_\Delta\).

Concerning the cost bound, \cref{thm:reduced_to_weak_popov} states that the
first step uses \(\bigO{\rdim^\expmm (1 + \degdet/\rdim)}\) operations in
\(\field\), where \(\degdet = \sumTuple{\rdeg[\shiftz]{\pmat}}\); since
\(\pmat\) is \(\shiftz\)-reduced, this is \(\degdet = \deg(\det(\pmat))\)
\cite[Sec.\,6.3.2]{Kailath80}. In the last step, the determinant computation
costs \(\bigO{\rdim^\expmm}\) operations, while scaling \(\Delta\) by a
constant costs \(\bigO{\degdet}\) operations. The second step uses
\(\bigO{\rdim^\expmm \timegcd{\degdet/\rdim}}\) operations according to
\cref{pro:algo:determinant}, under the assumptions \(\hypsl\), \(\hypsm\), and
\(\hypeps\).

We now describe the main algorithm of this paper
(\cref{algo:determinant_weakpopov}) and focus on its correctness. We also
mention cost bounds for all steps of the algorithm that are not recursive
calls, but we defer the core of the complexity analysis to
\cref{sec:determinant:complexity}.

\begin{algorithm}
  \caption{\algoname{DeterminantOfWeakPopov}$(\pmat)$}
  \label{algo:determinant_weakpopov}
  \begin{algorithmic}[1]

  \Require{a matrix \(\pmat\in\pmatRing[\rdim]\) in \(\shiftz\)-weak Popov form.}
  \Ensure{the determinant of \(\pmat\), up to multiplication by some element of \(\field\setminus\{0\}\).}

  \State \(\shifts = (\shift_1,\ldots,\shift_\rdim) \in \NN^\rdim \assign \rdeg[\shiftz]{\pmat}\);
            ~~\(\degdet \assign \shift_1 + \cdots + \shift_\rdim\)
          \Comment{\(\degdet = \) degree of \(\det(\pmat)\)}
          \label{algo:determinant:diagdeg_degdet} 

  \State\InlineIf{\(\rdim=1\)}{\Return the polynomial \(f\) such that \(\pmat = [f]\)}
      \Comment{base case: \(1\times1\) matrix}
      \label{algo:determinant:basestep_dim}

  \State\InlineIf{\(\degdet=0\)}{\Return the product of diagonal entries of \(\pmat\)}
      \Comment{base case: matrix over \(\field\)}
      \label{algo:determinant:basestep_deg}

  \If{\(\degdet<\rdim\)} \Comment{handle constant rows to reduce to dimension \(\le\degdet\)\label{algo:determinant:constant_rows_if}}
    \State \(\mat{B} \assign \pmat \: \lmat[0]{\pmat}^{-1}\) from which rows
    and columns with indices in \(\{i \mid \shift_i = 0\}\) are removed
          \label{algo:determinant:constant_rows}
    \State \Return \algoname{DeterminantOfWeakPopov}\((\mat{B})\)
  \EndIf

  \State\InlineIf{\(\shift_1+\cdots+\shift_{\lfloor \rdim/2 \rfloor} > \degdet/2\)}{%
        \Return \algoname{DeterminantOfWeakPopov}\((\revmat{\rdim} \pmat \: \lmat[\shiftz]{\pmat}^{-1}\revmat{\rdim})\)}
      \label{algo:determinant:reduce_to_smaller}
  \State Write \(\pmat = [\begin{smallmatrix} \pmat_1 & \pmat_2 \\ \pmat_3 & \pmat_4 \end{smallmatrix}]\),
  with \(\pmat_1\) of size \(\lfloor \rdim/2 \rfloor \times \lfloor \rdim/2 \rfloor\) and 
      \(\pmat_4\) of size \(\lceil \rdim/2 \rceil \times \lceil \rdim/2 \rceil\)
      \label{algo:determinant:blocks}

  \State \(\kerbas  \in \pmatRing[\lceil \rdim/2 \rceil][\rdim] \assign \algoname{KernelBasis}([\begin{smallmatrix} \pmat_1 \\ \pmat_3 \end{smallmatrix}],\shifts)\)
      \label{algo:determinant:kernel} 

  \State \([\kerbas_1 \;\; \kerbas_2] \in \pmatRing[\lceil \rdim/2 \rceil][\rdim] \assign \algoname{ReducedToWeakPopov}(\kerbas,\shifts)\), where \(\kerbas_2\) is \(\lceil \rdim/2 \rceil \times \lceil \rdim/2 \rceil\)
      \label{algo:determinant:to_weak_popov} 

  \State \(\mat{B} \assign \algoname{UnbalancedMultiplication}([\kerbas_1 \;\; \kerbas_2], [\begin{smallmatrix} \pmat_2 \\ \pmat_4 \end{smallmatrix}], \degdet)\)
      \Comment{\(\mat{B} = \kerbas_1 \pmat_2 + \kerbas_2 \pmat_4\)}
      \label{algo:determinant:unbalprod}

  \State \(\Delta_1 \assign \algoname{DeterminantOfWeakPopov}(\mat{B})\)
      \Comment{first recursive call}
      \label{algo:determinant:callone} 

  \State \(\Delta_2 \assign \Call{DeterminantOfWeakPopov}{\pmat_1}\)
      \Comment{second recursive call}
      \label{algo:determinant:calltwo}

  \State \(\mat{P} \assign \Call{WeakPopovToPopov}{\revmat{\lceil \rdim/2 \rceil}\trsp{\kerbas_2}\revmat{\lceil \rdim/2 \rceil},\rdeg[(\shift_{\lfloor \rdim/2 \rfloor + 1},\ldots,\shift_\rdim)]{\kerbas_2}\revmat{\lceil \rdim/2 \rceil}}\)
      \label{algo:determinant:weak_to_popov}

  \State \(\Delta_3 \assign \Call{DeterminantOfWeakPopov}{\trsp{\mat{P}}}\)
      \Comment{third recursive call}
      \label{algo:determinant:callthree}

  \State \Return \(\Delta_1 \Delta_2 / \Delta_3\)
      \label{algo:determinant:main_formula}
  \end{algorithmic}
\end{algorithm}

\begin{proposition}
  \label{pro:algo:determinant}
  \cref{algo:determinant_weakpopov} is correct, and assuming that \(\hypsl\),
  \(\hypsm\), and \(\hypeps\) hold (hence in particular \(\expmm>2\)), it uses
  \(\bigO{\rdim^\expmm \timegcd{\degdet/\rdim}}\) operations in \(\field\).
\end{proposition}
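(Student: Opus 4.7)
I would prove the proposition by induction on $\rdim$, splitting into correctness and complexity. For correctness I would check each branch. The base cases are immediate: a $1\times 1$ matrix is its entry, and a $\shiftz$-weak Popov matrix of total row-degree zero is lower-triangular with constant diagonal, so \cref{lem:leading_coeff_det} applies. For the constant-rows branch, if $\shift_i=0$ then row $i$ of $\pmat$ is a constant row, and the identity $\lmat[\shiftz]{(\pmat\lmat[\shiftz]{\pmat}^{-1})}=\idMat{\rdim}$ forces the off-diagonal entries of that row in $\pmat\lmat[\shiftz]{\pmat}^{-1}$ to have negative degree, i.e., to vanish; after permuting so that $I=\{i\mid\shift_i>0\}$ comes first, this reveals a block structure whose $I\times I$ block has determinant $\det(\pmat)/\det(\lmat[\shiftz]{\pmat})$. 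For the swap branch, the same normalization identity shows that $\revmat{\rdim}\pmat\lmat[\shiftz]{\pmat}^{-1}\revmat{\rdim}$ is again in $\shiftz$-weak Popov form with reversed row-degree tuple $(\shift_\rdim,\ldots,\shift_1)$, and the branch condition guarantees that the new top-half row-degree sum is at most $\degdet/2$. For the main branch, the block triangularization of \cref{eqn:nonsingular_triangularization} yields $\det(\pmat)\det(\kerbas_2)=\det(\pmat_1)\det(\mat{B})$; I would verify that each recursive input is in $\shiftz$-weak Popov form: $\pmat_1$ trivially as a leading principal submatrix, $\mat{B}$ via \cref{lem:product_of_reduced} applied to $[\kerbas_1\;\;\kerbas_2]\pmat=[\matz\;\;\mat{B}]$ using that \algoname{ReducedToWeakPopov} returns $[\kerbas_1\;\;\kerbas_2]$ in $\shifts$-weak Popov form with $\shifts=\rdeg[\shiftz]{\pmat}$, and $\trsp{\mat{P}}$ via \cref{lem:row_column_reduced}\ref{lem:transposition:popov}, which also gives $\det(\kerbas_2)=\det(\lmat[\tuple{v}]{\kerbas_2})\det(\trsp{\mat{P}})$. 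Combining these, the returned quantity $\Delta_1\Delta_2/\Delta_3$ equals $\det(\pmat)$ up to a nonzero scalar by induction.

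For the complexity I first account for the non-recursive costs. In the main branch, where $\degdet\ge\rdim$ so that $\timegcd{\degdet/\rdim}\ge 1$, the calls to \algoname{KernelBasis}, \algoname{ReducedToWeakPopov}, \algoname{UnbalancedMultiplication}, and \algoname{WeakPopovToPopov} each contribute $\bigO{\rdim^\expmm\timegcd{\degdet/\rdim}}$: the sum of shifts passed to \algoname{KernelBasis} is $\degdet$; the sum of shifted row degrees of $\kerbas$ fed to \algoname{ReducedToWeakPopov} is at most $\degdet$, so the bound $\bigO{\rdim^{\expmm-1}\degdet+\rdim^\expmm}$ from \cref{thm:reduced_to_weak_popov} is absorbed; the degree parameter of \algoname{UnbalancedMultiplication} is at most $\degdet$; and the shift-sum parameter of \algoname{WeakPopovToPopov} is at most $\degdet$. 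In the non-main branches, the polynomial-by-constant product $\pmat\lmat[\shiftz]{\pmat}^{-1}$ is computed through \algoname{UnbalancedMultiplication} within the same budget. The three sub-problem degree bounds are then $\deg(\det(\pmat_1))=\shift_1+\cdots+\shift_{\lfloor\rdim/2\rfloor}\le\degdet/2$ (ensured by the swap branch); $\deg(\det(\kerbas_2))\le\degdet/2$, obtained by combining \cref{lem:kernel_pivots} (pivots of $\kerbas$ lie in $\kerbas_2$), the kernel-basis inequality $\sumTuple{\rdeg[\shifts]{\kerbas}}\le\sumTuple{\shifts}=\degdet$, and the $\tuple{v}$-reducedness of $\kerbas_2$; and $\deg(\det(\mat{B}))\le\degdet$ from the determinant identity.

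The recursion thus becomes
\begin{equation*}
  T(\rdim,\degdet) \le T(\lceil\rdim/2\rceil,\degdet_B) + T(\lfloor\rdim/2\rfloor,\degdet_1) + T(\lceil\rdim/2\rceil,\degdet_K) + c_0\,\rdim^\expmm\timegcd{\degdet/\rdim}
\end{equation*}
with $\degdet_B\le\degdet$ and $\degdet_1,\degdet_K\le\degdet/2$. I would solve this by unrolling it as a recursion tree: a node at depth $k$ lying on a path with $k_1$ degree-halving steps (i.e., $\pmat_1$- or $\trsp{\mat{P}}$-calls) has size $\rdim/2^k$ and determinant-degree at most $\degdet/2^{k_1}$, and there are $\binom{k}{k_1}2^{k_1}$ such nodes. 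Summing the non-recursive costs over the tree, applying submultiplicativity of $\timegcd{\cdot}$ (from $\hypsl$ and $\hypsm$) together with the bound $\timegcd{2^j}\le C\cdot 2^{j(\expmm-1-\varepsilon)}$ implied by $\hypeps$, the inner sum over $k_1$ collapses by the binomial formula to $(2+2^{\expmm-1-\varepsilon})^k$, and the outer sum becomes a geometric series with ratio $2^{1-\expmm}+2^{-1-\varepsilon}$. This ratio is strictly less than $1$ precisely because $\expmm>2$ and $\varepsilon>0$, so the series converges to a constant and $T(\rdim,\degdet)\in\bigO{\rdim^\expmm\timegcd{\degdet/\rdim}}$.

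The main obstacle is the third recursive call on $\mat{B}$, whose determinant-degree may be as large as $\degdet$ rather than $\degdet/2$; a naive master-theorem argument using only submultiplicativity would require $\timegcd{2}<2^\expmm-2$, which can fail for $\expmm$ close to $2$. The assumption $\hypeps$ is precisely what makes $\timegcd{\cdot}$ grow slightly slower than $d^{\expmm-1}$, so that the unrolled tree sum converges as a geometric series for every $\expmm>2$.
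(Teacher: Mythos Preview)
Your proof is correct and follows essentially the same route as the paper: the same case analysis for correctness, the same use of \cref{lem:product_of_reduced}, \cref{lem:kernel_pivots}, and \cref{lem:row_column_reduced} to validate the three recursive inputs, and the same recursion-tree analysis with node counts $2^{k_1}\binom{k}{k_1}$ (which the paper writes as $a_{k,j}=2^j\binom{k}{j}$). Your summation via the binomial identity, giving the geometric ratio $2^{1-\expmm}+2^{-1-\varepsilon}<1$, is a clean variant of the paper's bound; the paper instead crudely bounds $2^j/2^{j(\expmm-1-\varepsilon)}\le 1$ (implicitly choosing $\varepsilon\le\expmm-2$) to reach the ratio $2^{-\varepsilon}$.

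One bookkeeping point you should make explicit: when $\degdet<\rdim$, the target bound reads $\bigO{\rdim^\expmm}$ since $\timegcd{\degdet/\rdim}=1$, so you must check that the recursive call on the trimmed matrix $\mat{B}$ of dimension $\hat\rdim\le\degdet$ stays within this budget. The paper does this in a short closing paragraph, using $\timegcd{d}\in\bigO{d^{\expmm-1}}$ (a consequence of $\hypsl$ and $\hypeps$) to get $\hat\rdim^{\expmm}\timegcd{\degdet/\hat\rdim}\in\bigO{\hat\rdim\,\degdet^{\expmm-1}}\subseteq\bigO{\rdim^\expmm}$. Your write-up mentions the non-recursive cost of that branch but not this absorption step; add it and the argument is complete.
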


\begin{proof}[Proof of correctness]
  The fact that \(\pmat\) is in \(\shiftz\)-weak Popov form has two
  consequences on the tuple \(\shifts\) computed at
  \cref{algo:determinant:diagdeg_degdet}: first, it is the \(\shiftz\)-pivot
  degree of \(\pmat\) (i.e.~its diagonal degrees), and second, the sum
  \(\degdet = \sumTuple{\shifts}\) is equal to the degree of the determinant of
  \(\pmat\) \cite[Sec.\,6.3.2]{Kailath80}.

  The main base case of the recursion is when \(\rdim=1\) and is handled at
  \cref{algo:determinant:basestep_dim}; it uses no operation in \(\field\). We
  use a second base case at \cref{algo:determinant:basestep_deg}: if
  \(\degdet=0\), then \(\pmat\) is an $\rdim\times\rdim$ matrix over
  \(\field\). Since it is in \(\shiftz\)-weak Popov, it is invertible
  and lower triangular, hence \(\det(\pmat)\) is the product of its diagonal
  entries, which is computed in \(\bigO{\rdim}\) multiplications in \(\field\).
  This base case is not necessary for obtaining the correctness and the cost
  bound in \cref{pro:algo:determinant}; still, not using it would incur a cost
  of \(\bigO{\rdim^\expmm}\) operations in the case \(\degdet=0\).

  For the recursion, we proceed inductively: we assume that the algorithm
  correctly computes the determinant for all \(\shiftz\)-weak Popov
  matrices of dimension less than \(\rdim\), and based on this we show that it
  is also correct for any \(\shiftz\)-weak Popov matrix \(\pmat\) of
  dimension \(\rdim\).

  \emph{Case 1: \(\degdet < \rdim\)}. Then \(\pmat\) has at least one constant
  row; using linear algebra we reduce to the case of a matrix \(\mat{B}\) of
  dimension at most \(\degdet\) with all rows of degree at least \(1\).  Since
  \(\shifts = \rdeg[\shiftz]{\pmat}\), we can write \(\pmat = \xDiag{\shifts}
  \lmat[\shiftz]{\pmat} + \mat{R}\) for a matrix \(\mat{R} \in
  \pmatRing[\rdim][\rdim]\) such that \(\rdeg[\shiftz]{\mat{R}} < \shifts\).
  Since \(\pmat\) is \(\shiftz\)-reduced, \(\lmat[\shiftz]{\pmat}\) is
  invertible and \(\pmat \: \lmat[\shiftz]{\pmat}^{-1} = \xDiag{\shifts} + \mat{R}
  \: \lmat[\shiftz]{\pmat}^{-1}\) with \(\rdeg[\shiftz]{\mat{R}
  \: \lmat[\shiftz]{\pmat}^{-1}} < \shifts\), which implies
  \(\lmat[\shiftz]{\pmat\: \lmat[\shiftz]{\pmat}^{-1}} = \idMat{\rdim}\). In
  particular, \(\pmat \: \lmat[\shiftz]{\pmat}^{-1}\) is in \(\shiftz\)-weak Popov
  form, and for each \(i\) such that the row \(\matrow{\pmat}{i}\)
  is constant, i.e.~\(\shift_i=0\), the \(i\)th row of
  \(\pmat\: \lmat[\shiftz]{\pmat}^{-1}\) is the \(i\)th row of
  the identity matrix. Therefore the matrix \(\mat{B}\) at
  \cref{algo:determinant:constant_rows} is in \(\shiftz\)-weak Popov
  form, has the same determinant as \(\pmat\) up to a constant, and has
  dimension \(\card{\{i \mid \shift_i\neq0\}} \le \degdet\). Hence the
  correctness in this case. In terms of complexity, computing \(\mat{B}\)
  essentially amounts to computing the product \(\pmat
  \: \lmat[\shiftz]{\pmat}^{-1}\), which is done by row-wise expanding \(\pmat\)
  into a \((\rdim + \degdet) \times \rdim\) matrix over \(\field\),
  right-multiplying by \(\lmat[\shiftz]{\pmat}^{-1}\), and compressing the
  result back into a polynomial matrix: this costs \(\bigO{\rdim^\expmm (1 +
  \degdet/\rdim)} \subseteq \bigO{\rdim^\expmm}\) operations.

  \emph{Case 2: \(\shift_1+\cdots+\shift_{\lfloor \rdim/2 \rfloor} >
  \degdet/2\)}. Then we modify the input \(\pmat\) so as to reduce to
  \emph{Case 3}. As we have seen above,
  \(\lmat[\shiftz]{\pmat\: \lmat[\shiftz]{\pmat}^{-1}} = \idMat{\rdim}\). We now
  reverse the diagonal entries by reversing the order of rows and columns: let
  \(\mat{B} = \revmat{\rdim} \pmat \: \lmat[\shiftz]{\pmat}^{-1}\revmat{\rdim}\).
  Then \(\lmat[\shiftz]{\mat{B}} = \revmat{\rdim}\lmat[\shiftz]{\pmat
  \: \lmat[\shiftz]{\pmat}^{-1}} \revmat{\rdim}= \idMat{\rdim}\), hence
  \(\mat{B}\) is in \(\shiftz\)-weak Popov form:
  \cref{algo:determinant:reduce_to_smaller} calls the algorithm on this matrix
  to obtain \(\det(\mat{B})\) up to a constant, and this yields \(\det(\pmat)\)
  since it is equal to \(\det(\lmat[\shiftz]{\pmat})\det(\mat{B})\). To
  conclude the proof of correctness in that case (assuming correctness in
  \emph{Case 3}), it remains to observe that \(\mat{B}\) has the same matrix
  dimension \(\rdim\) as \(\pmat\), and that the matrix \(\mat{B}\) has degrees
  such that calling the algorithm with input \(\mat{B}\) does not enter
  \emph{Case 2} but \emph{Case 3}. Indeed, we have \(\rdeg[\shiftz]{\mat{B}} =
  \shifts\revmat{\rdim}\), hence the sum of the first \(\lfloor \rdim/2 \rfloor\)
  entries of the tuple \(\rdeg[\shiftz]{\mat{B}}\) is \(\shift_{\rdim} +
  \cdots + \shift_{\lceil \rdim/2 \rceil+1} = \degdet - (\shift_1 + \cdots +
  \shift_{\lceil \rdim/2 \rceil})\), which is at most \(\degdet/2\) by
  assumption. In terms of complexity, the main step is to compute the product
  \(\pmat \: \lmat[\shiftz]{\pmat}^{-1}\), which costs \(\bigO{\rdim^\expmm (1 +
  \degdet/\rdim)}\) operations as we have seen above; this is in
  \(\bigO{\rdim^\expmm \timegcd{\degdet/\rdim}}\).

  \emph{Case 3: \(\shift_1+\cdots+\shift_{\lfloor \rdim/2 \rfloor} \le
  \degdet/2\)}. Then, \cref{algo:determinant:reduce_to_smaller} performs no
  action, and \cref{algo:determinant:blocks} defines submatrices of \(\pmat\).
  By construction, \([\begin{smallmatrix} \pmat_1 \\ \pmat_3
  \end{smallmatrix}]\) has full column rank and \(\shifts \ge
  \rdeg[\shiftz]{[\begin{smallmatrix} \pmat_1 \\ \pmat_3 \end{smallmatrix}]}\)
  holds. Thus, according to \cref{lem:subroutine:kernel},
  \cref{algo:determinant:kernel} uses \(\bigO{\rdim^\expmm
  \timegcd{\degdet/\rdim}}\) operations to compute an \(\shifts\)-reduced
  basis \(\kerbas\) of \(\modKer{[\begin{smallmatrix} \pmat_1 \\ \pmat_3
  \end{smallmatrix}]}\), with \(\sumTuple{\rdeg[\shifts]{\kerbas}} \le
  \degdet\). Then, \cref{thm:reduced_to_weak_popov} states that
  \cref{algo:determinant:to_weak_popov} transforms \(\kerbas\) into an
  \(\shifts\)-weak Popov basis \([\kerbas_1 \;\; \kerbas_2]\) of this
  kernel at a cost of \(\bigO{\rdim^\expmm (1 + \degdet/\rdim)}\) operations,
  since \(\sumTuple{\rdeg[\shifts]{\kerbas}} \le \degdet\) and \(\min(\shifts)
  \ge 0\). Since all \(\shifts\)-reduced bases of \(\modKer{\sys}\) have the
  same \(\shifts\)-row degree up to permutation,
  \(\sumTuple{\rdeg[\shifts]{[\kerbas_1 \;\; \kerbas_2]}} \le \degdet\) holds,
  hence the assumptions of \cref{lem:subroutine:mult} are satisfied and
  \cref{algo:determinant:unbalprod} uses
  \(\bigO{\rdim^\expmm \timepm{\degdet/\rdim}}\) operations to compute
  \(\mat{B} = \kerbas_1 \pmat_2 + \kerbas_2 \pmat_4\).

  The important observation at this stage is the identity
  \begin{equation}
    \label{eqn:product_recursion}
    \begin{bmatrix}
      \idMat{\lceil \rdim/2 \rceil} & \matz \\
      \kerbas_1 & \kerbas_2
    \end{bmatrix}
    \begin{bmatrix}
      \pmat_1 & \pmat_2 \\
      \pmat_3 & \pmat_4
    \end{bmatrix}
    =
    \begin{bmatrix}
      \pmat_1 & \pmat_2 \\
      \matz & \mat{B}
    \end{bmatrix}
  \end{equation}
  which, provided that \(\kerbas_2\) is nonsingular, implies \(\det(\pmat) =
  \det(\mat{B}) \det(\pmat_1) / \det(\kerbas_2)\). We are going to show that
  this is the formula used in \cref{algo:determinant:main_formula} to compute
  \(\det(\pmat)\).

  First, \(\pmat_1\) has dimension less than \(\rdim\) and, being a principal
  submatrix of the \(\shiftz\)-weak Popov matrix \(\pmat\), it is also
  in \(\shiftz\)-weak Popov form. Hence the recursive call at
  \cref{algo:determinant:calltwo} is sound and \(\Delta_2\) is equal to
  \(\det(\pmat_1)\) up to a constant.

  Since \(\pmat\) is in \(\shiftz\)-weak Popov form and \([\begin{smallmatrix}
    \idMat{\lceil \rdim/2 \rceil} & \matz \\ \kerbas_1 & \kerbas_2
    \end{smallmatrix}]\) is in \(\rdeg[\shiftz]{\pmat}\)-weak Popov form, their
    product \([\begin{smallmatrix} \pmat_1 & \pmat_2 \\ \matz & \mat{B}
      \end{smallmatrix}]\) is in \(\shiftz\)-weak Popov form; see
      \cref{lem:product_of_reduced}, or note that
      \(\lmat[\shiftz]{[\begin{smallmatrix} \pmat_1 & \pmat_2 \\ \matz &
      \mat{B} \end{smallmatrix}]}\) is invertible and lower triangular
      according to \cref{lem:product_of_lmats}. It follows that \(\mat{B}\) is
      in \(\shiftz\)-weak Popov form and has dimension less than \(\rdim\):
      \cref{algo:determinant:callone} recursively computes \(\Delta_1\), equal
      to \(\det(\mat{B})\) up to a constant.

  It remains to prove that \(\Delta_3\) computed at
  \cref{algo:determinant:weak_to_popov,algo:determinant:callthree} is equal to
  \(\det(\kerbas_2)\) up to a constant. Let \(\tuple{v} =
  \rdeg[\shiftz]{\pmat_4} = (\shift_{\lfloor \rdim/2 \rfloor +
  1},\ldots,\shift_\rdim)\) be the shift used at
  \cref{algo:determinant:weak_to_popov}, and let \(\tuple{d} =
  \rdeg[\tuple{v}]{\kerbas_2} = \rdeg[\shifts]{[\kerbas_1 \;\; \kerbas_2]}\).
  Applying \cref{lem:kernel_pivots} (with \(\mat{F} = \pmat_1\), \(\mat{G} =
  \pmat_3\), \(\shiftt=\shiftz\), and \(\tuple{v}\) as above) shows that
  \([\kerbas_1 \;\; \kerbas_2]\) has all its \(\shifts\)-pivot entries in
  \(\kerbas_2\), and in particular \(\kerbas_2\) is in \(\tuple{v}\)-weak
  Popov form. Let \(\pivDegs \in \NN^{\cdim}\) be the \(\tuple{v}\)-pivot
  degree of \(\kerbas_2\), where \(\cdim = \lceil \rdim/2 \rceil\), and note
  that \(\tuple{d} = \pivDegs + \tuple{v} \ge \pivDegs\) since
  \(\tuple{v}\ge\shiftz\). Then, \cref{lem:row_column_reduced} states that
  \(\revmat{\cdim}\trsp{\kerbas_2}\revmat{\cdim}\) is in
  \(-\tuple{d}\revmat{\cdim}\)-weak Popov form; its
  \(-\tuple{d}\revmat{\cdim}\)-pivot degree is the list of degrees of its
  diagonal entries, that is, \(\pivDegs\revmat{\cdim}\). Since
  \(\tuple{d}\revmat{\cdim} \ge \pivDegs\revmat{\cdim}\), we can apply
  \cref{thm:weak_popov_to_popov}, which implies that
  \cref{algo:determinant:weak_to_popov} computes the
  \(-\tuple{d}\revmat{\cdim}\)-Popov form \(\mat{P}\) of
  \(\revmat{\cdim}\trsp{\kerbas_2}\revmat{\cdim}\) using \(\bigO{\rdim^\expmm
    \timegcd{\sumTuple{\tuple{d}}/\rdim}}\) operations; as we have seen
    above, \(\sumTuple{\tuple{d}} = \sumTuple{\rdeg[\shifts]{[\kerbas_1 \;\;
    \kerbas_2]}} \le \degdet\). Then, from the last item of
    \cref{lem:row_column_reduced}, \(\trsp{\mat{P}}\) is in \(\shiftz\)-weak
    Popov form and
    \(\det(\mat{K})=\det(\lmat[\tuple{v}]{\mat{K}})\det(\trsp{\mat{P}})\),
    hence \cref{algo:determinant:callthree} correctly computes
    \(\det(\mat{K})\) up to a constant.
\end{proof}

To conclude this presentation of our determinant algorithm, we note that it
would be beneficial, in a practical implementation, to add an early exit.
Precisely, just after computing \(\det(\mat{B})\) at
\cref{algo:determinant:callone}, one could perform the following action before
(possibly) proceeding to the next steps:

\begin{algorithm*}
  \begin{algorithmic}[1]
    \Statex {\footnotesize 12b:} \InlineIf{\(\deg(\Delta_1) = \degdet\)}{\Return \(\Delta_1\)} \Comment{early exit}
  \end{algorithmic}
\end{algorithm*}

Indeed, recall that \(\Delta_1\) is \(\det(\mat{B})\) up to a constant;
furthermore we claim that
\begin{itemize}
  \item for a generic \(\pmat\), we have \(\deg(\Delta_1) = \degdet\),
  \item if \(\deg(\Delta_1) = \degdet\) (i.e.~\(\deg(\det(\mat{B})) = \degdet\)),
    then \(\det(\mat{B})\) is \(\det(\pmat)\) up to a constant.
\end{itemize}
It follows that for a generic matrix \(\pmat\), then \(\Delta_1\) is
\(\det(\pmat)\) up to a constant, hence the correctness of this early exit (see
also \cite[Sec.\,4.2.2]{GiJeVi03} for similar considerations). To prove the
above claim, first note that since \([\kerbas_1 \;\; \kerbas_2]\) is a
kernel basis, it has unimodular column bases \cite[Lem.\,2.2]{GioNei18}, and
thus it can be completed into a unimodular matrix \(\mat{U} =
[\begin{smallmatrix} \mat{U}_1 & \mat{U}_2 \\ \kerbas_1 & \kerbas_2
\end{smallmatrix}] \in \pmatRing[\rdim]\)
\cite[Lem.\,2.10]{ZhoLab14}. Therefore
\[
  \mat{U} \pmat =
  \begin{bmatrix}
    \mat{U}_1 & \mat{U}_2 \\
    \kerbas_1 & \kerbas_2
  \end{bmatrix}
  \begin{bmatrix}
    \pmat_1 & \pmat_2 \\
    \pmat_3 & \pmat_4
  \end{bmatrix}
  =
  \begin{bmatrix}
    \mat{B}_1 & \mat{B}_2 \\
    \matz & \mat{B}
  \end{bmatrix}
\]
where \([\mat{B}_1 \;\; \mat{B}_2] = [\mat{U}_1 \;\; \mat{U}_2] \pmat\). Since
\(\det(\mat{U})\) is in \(\field\setminus\{0\}\), \(\det(\pmat)\) is
\(\det(\mat{B}_1) \det(\mat{B})\) up to a constant. For the second item of the
claim, \(\deg(\Delta_1) = \degdet\) implies \(\deg(\det(\mat{B})) = \degdet =
\deg(\det(\pmat))\), hence \(\det(\mat{B}_1)\) is in \(\field\setminus\{0\}\).
The first item follows from the fact that \(\mat{B}_1\) is a row basis of
\([\begin{smallmatrix} \pmat_1 \\ \pmat_3 \end{smallmatrix}]\)
\cite[Lem.\,3.1]{ZhoLab13}; since the latter matrix has more rows than columns,
if \(\pmat_1\) and \(\pmat_3\) have generic entries, then such a row basis
\(\mat{B}_1\) is unimodular which means \(\det(\mat{B}_1) \in
\field\setminus\{0\}\) and thus \(\deg(\Delta_1) = \degdet\).

\subsection{Complexity analysis}
\label{sec:determinant:complexity}

We have seen above that all computations in \cref{algo:determinant_weakpopov}
other than recursive calls have an arithmetic cost in \(\bigO{\rdim^\expmm
\timegcd{\degdet/\rdim}}\) operations in \(\field\); here, we complete the
proof of the cost bound in \cref{prop:algo:knowndegker}. In this section, we
use the assumptions \(\hypsl\), \(\hypsm\), and \(\hypeps\) as well as their
consequences stated in \cref{sec:intro:cost_bounds}.

Let \(\mathcal{C}(\rdim,\degdet)\) denote the arithmetic cost of
\cref{algo:determinant_weakpopov}; recall that \(\degdet\) is the degree of the
determinant of the input, which is also the sum of its row degrees. First
consider the case \(\rdim \le \degdet\). If \(\shift_1+\cdots+\shift_{\lfloor
\rdim/2 \rfloor} > \degdet/2\), the reduction to the case
\(\shift_1+\cdots+\shift_{\lfloor \rdim/2 \rfloor} \le \degdet/2\) with the
same \(\rdim\) and \(\degdet\) performed at
\cref{algo:determinant:reduce_to_smaller} costs \(\bigO{\rdim^\expmm
(1+\degdet/\rdim)}\). Once we are in the latter case, there are three recursive
calls with input matrices having the following dimensions and degrees:
\begin{itemize}
  \item At \cref{algo:determinant:callone}, the matrix \(\mat{B}\) is
    \(\lceil \rdim/2 \rceil \times \lceil \rdim/2 \rceil\), and applying the
    predictable degree property on \cref{eqn:product_recursion} gives in
    particular \(\rdeg[\shiftz]{\mat{B}} = \rdeg[\shifts]{[\kerbas_1 \;\;
    \kerbas_2]}\), hence \(\sumTuple{\rdeg[\shiftz]{\mat{B}}} \le \degdet\).
  \item At \cref{algo:determinant:calltwo}, the matrix \(\mat{\pmat}_1\)
    is \(\lfloor \rdim/2 \rfloor \times \lfloor \rdim/2 \rfloor\) and the sum
    of its row degrees is \(\shift_1+\cdots+\shift_{\lfloor \rdim/2 \rfloor}\),
    which is at most \(\degdet/2\) by assumption.
  \item At \cref{algo:determinant:callthree}, the matrix \(\trsp{\mat{P}}\) is
    \(\lceil \rdim/2 \rceil \times \lceil \rdim/2 \rceil\) and its
    \(\shiftz\)-pivot degree is \(\rdeg[\shiftz]{\trsp{\mat{P}}} =
    \pivDegs\revmat{\rdim}\). Recall indeed that this is the list of diagonal
    degrees of \(\trsp{\mat{P}}\), which is the same as that of \(\mat{P}\),
    and thus the same as that of
    \(\revmat{\cdim}\trsp{\kerbas_2}\revmat{\cdim}\) according to
    \cref{lem:popov_pivot}. Now, from \(\sumTuple{\pivDegs+\tuple{v}} =
    \sumTuple{\pivDegs}+\sumTuple{\tuple{v}} \le \degdet\) and the assumption
    \(\sumTuple{\tuple{v}} = \shift_{\lfloor \rdim/2
    \rfloor+1}+\ldots+\shift_\rdim > \degdet/2\), we obtain
    \(\sumTuple{\rdeg[\shiftz]{\trsp{\mat{P}}}} = \sumTuple{\pivDegs} \le
    \degdet/2\).
\end{itemize}
We assume without loss of generality that $\rdim$ is a power of 2. If it is
not, a given input matrix can be padded with zeros, and ones on the main
diagonal, so as to form a square matrix with dimension the next power of two
and the same determinant. According to the three items above, the cost bound
then satisfies:
\[
  \mathcal{C}(\rdim,\degdet) \le
      2\mathcal{C}(\rdim/2,\lfloor \degdet/2\rfloor) +
      \mathcal{C}(\rdim/2,\degdet) +
      \bigO{\rdim^\expmm\timegcd{\degdet/\rdim}}.
\]
Letting the \(\bigO{\cdot}\) term aside, we illustrate this recurrence relation
in \cref{fig:dag_calls}.
\begin{figure}[htbp]
  \centering
  \begin{tikzpicture}[->,>=stealth',shorten >=1pt,auto,node distance=2.05cm,semithick]

    \node[dimdegdet] (C00)                    {\(\rdim,\degdet\)};

    \node (C10) [dimdegdet,below left of=C00]{\(\rdim/2,\degdet\)};
    \node (C11) [dimdegdet,below right of=C00]{\(\rdim/2,\lfloor \degdet/2 \rfloor\)};

    \node (C20) [dimdegdet,below left of=C10]{\(\rdim/4,\degdet\)};
    \node (C21) [dimdegdet,below right of=C10]{\(\rdim/4,\lfloor \degdet/2 \rfloor\)};
    \node (C22) [dimdegdet,below right of=C11]{\(\rdim/4,\lfloor \degdet/4 \rfloor\)};

    \node (C30) [dimdegdet,below left of=C20]{\(\rdim/8,\degdet\)};
    \node (C31) [dimdegdet,below right of=C20]{\(\rdim/8,\lfloor \degdet/2 \rfloor\)};
    \node (C32) [dimdegdet,below right of=C21]{\(\rdim/8,\lfloor \degdet/4 \rfloor\)};
    \node (C33) [dimdegdet,below right of=C22]{\(\rdim/8,\lfloor \degdet/8 \rfloor\)};

    \node (Cmu0) [dimdegdet,below left of=C30,yshift=-1cm]{\(1,\degdet\)};
    \node (Cmu1) [below right of=C30,yshift=-1cm]{\(\ldots\)};
    \node (Cmu2) [dimdegdet,below right of=C31,yshift=-1cm]{\(1,\lfloor \degdet/2^{j} \rfloor\)};
    \node (Cmu3) [below right of=C32,yshift=-1cm]{\(\ldots\)};
    \node (Cmu4) [dimdegdet,below right of=C33,yshift=-1cm]{\(1,\lfloor \degdet/2^\mu \rfloor\)};

    \node[above=-0.1cm of C00] (C00nb) {\scriptsize \(1\)};

    \node (C10nb) [above=-0.1cm of C10]{\scriptsize \(1\)};
    \node (C11nb) [above=-0.1cm of C11]{\scriptsize \(2\)};

    \node (C20nb) [above=-0.1cm of C20]{\scriptsize \(1\)};
    \node (C21nb) [above=-0.1cm of C21]{\scriptsize \(4\)};
    \node (C22nb) [above=-0.1cm of C22]{\scriptsize \(4\)};

    \node (C30nb) [above=-0.1cm of C30]{\scriptsize \(1\)};
    \node (C31nb) [above=-0.1cm of C31]{\scriptsize \(6\)};
    \node (C32nb) [above=-0.1cm of C32]{\scriptsize \(12\)};
    \node (C33nb) [above=-0.1cm of C33]{\scriptsize \(8\)};

    \node (Cmu0nb) [above=-0.1cm of Cmu0]{\scriptsize \(1\)};
    \node (Cmu2nb) [above=-0.1cm of Cmu2]{\scriptsize \(2^j\binom{\mu}{j}\)};
    \node (Cmu4nb) [above=-0.1cm of Cmu4]{\scriptsize \(2^\mu\)};


    \path (C00) edge            node[left,near start]  {\scriptsize \(1\)} (C10)
                edge            node[right,near start] {\scriptsize \(2\)} (C11);
    \path (C10) edge            node[left,near start]  {\scriptsize \(1\)} (C20)
                edge            node[right,near start] {\scriptsize \(2\)} (C21);
    \path (C11) edge            node[left,near start]  {\scriptsize \(2\)} (C21)
                edge            node[right,near start] {\scriptsize \(4\)} (C22);
    \path (C20) edge            node[left,near start]  {\scriptsize \(1\)} (C30)
                edge            node[right,near start] {\scriptsize \(2\)} (C31);
    \path (C21) edge            node[left,near start]  {\scriptsize \(4\)} (C31)
                edge            node[right,near start] {\scriptsize \(8\)} (C32);
    \path (C22) edge            node[left,near start]  {\scriptsize \(4\)} (C32)
                edge            node[right,near start] {\scriptsize \(8\)} (C33);
    \path (C30) edge[dashed]    (Cmu0)
                edge[dashed,-]  (Cmu1);
    \path (C31) edge[dashed,-]  (Cmu1)
                edge[dashed,]   ([xshift=-0.3cm]Cmu2.north);
    \path (C32) edge[dashed,]   ([xshift=0.3cm]Cmu2.north)
                edge[dashed,-]  (Cmu3);
    \path (C33) edge[dashed,-]  (Cmu3)
                edge[dashed]    (Cmu4);
  \end{tikzpicture}
  \caption{Directed acyclic graph of recursive calls, of depth \(\mu =
  \log_2(\rdim)\). Each boxed node shows the matrix dimensions and the
  determinantal degree of a recursive call. Beginning with one call in
  dimension and determinantal degree \((\rdim,\degdet)\), for a given node the
  number above it indicates the number of times a recursive call corresponding
  to this node is made, and the numbers of recursive sub-calls this node
  generates are indicated on both arrows starting from this node.}
  \label{fig:dag_calls}
\end{figure}
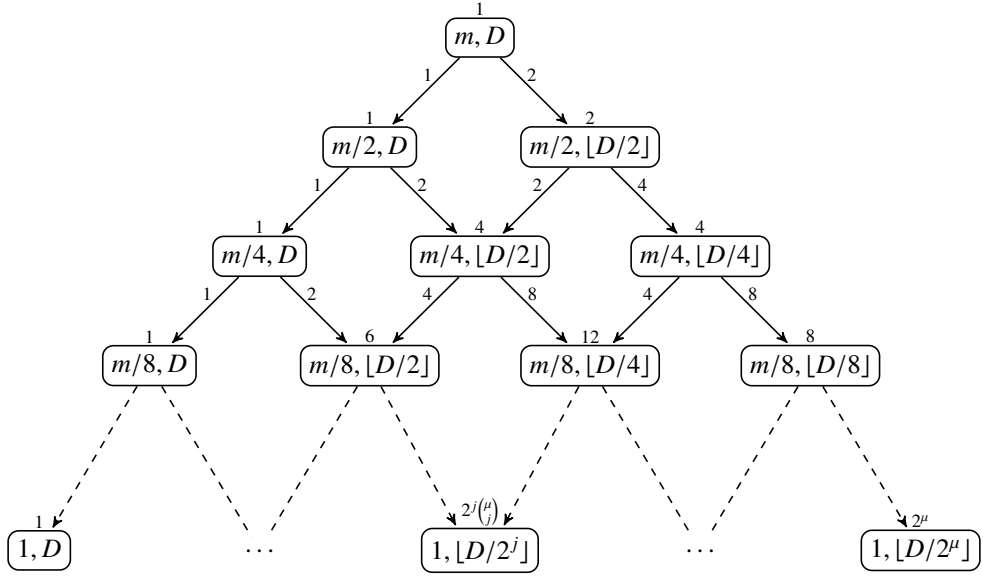

Let \(\mu= \log_2(\rdim)\) and let $K$ be the constant of the $\bigO{\cdot}$
term above. Recalling that \(\rdim\le \degdet\) and \(\lfloor\lfloor
\degdet/2^j \rfloor / 2\rfloor = \lfloor \degdet/2^{j+1} \rfloor\), unrolling
this recurrence to the \(i\)th recursion level for \(0\leq i\leq \mu\) yields

\begin{align*}
  \mathcal{C}(\rdim,\degdet) & \le
  \sum_{j=0}^{i} a_{i,j} \mathcal{C}\left(\frac{\rdim}{2^i},\left\lfloor\frac{\degdet}{2^j}\right\rfloor\right)
  +
  K\sum_{k=0}^{i-1}\sum_{j=0}^{k} a_{k,j} \left(\frac{\rdim}{2^k}\right)^\expmm \timegcd{\frac{\degdet/2^j}{m/2^k}}
  \\
  & \le
  \sum_{j=0}^{i} a_{i,j} \mathcal{C}\left(\frac{\rdim}{2^i},\left\lfloor\frac{\degdet}{2^j}\right\rfloor\right)
  +
  K\rdim^\expmm \timegcd{\frac{\degdet}{\rdim}}\left(
     \sum_{k=0}^{i-1} \sum_{j=0}^{k} a_{k,j}  2^{-k\expmm}\timegcd{2^{k-j}}\right),
\end{align*}
where the last inequality comes from the submultiplicativity of
\(\timegcd{\cdot}\) and the coefficients $a_{i,j}$ satisfy
\[
  \left\{
    \begin{array}{lcl}
      a_{i,0} &=& 1, \\
      a_{i,i} &=& 2^i,  \\
      a_{i,j} &=& a_{i-1,j} + 2a_{i-1,j-1} \text{ for } 0<j<i.
    \end{array}
  \right. 
\]
In \cref{fig:dag_calls}, one can observe the similarity between Pascal's
triangle and the number of calls with parameters \((\rdim/2^i,\degdet/2^j)\).
This translates as a connection between the \(a_{i,j}\)'s and the binomial
coefficients: one can prove by induction that \(a_{i,j} = 2^j \binom{i}{j}\).

Now, by assumption \(\timegcd{d} \in \bigO{d^{\expmm-1-\varepsilon}}\) for some
\(\varepsilon>0\); for such an \(\varepsilon>0\), let \(\tilde{K}\) be a
constant such that \(\timegcd{d} \le \tilde{K} d^{\expmm-1-\varepsilon}\) for
all \(d\ge 0\). Then
\[
  \sum_{k=0}^{i-1} \sum_{j=0}^{k} a_{k,j}  2^{-k\expmm}\timegcd{2^{k-j}}
  \le
  \tilde{K}\sum_{k=0}^{i-1} 2^{-(1+\varepsilon)k}\sum_{j=0}^{k} \frac{a_{k,j}}{2^{j(\expmm-1-\varepsilon)}}
  \le
  \tilde{K}\sum_{k=0}^{i-1} 2^{-(1+\varepsilon)k}\sum_{j=0}^k \binom{k}{j}
  =
  \tilde{K}\sum_{k=0}^{i-1} 2^{-\varepsilon k}
\]
and, defining the constant \(\hat{K} = K \tilde{K} \sum_{k=0}^{+\infty}
2^{-\varepsilon k}\), for \(i=\mu\) we obtain
\[
  \mathcal{C}(\rdim,\degdet) \leq
  \sum_{j=0}^{\mu} a_{\mu,j} \mathcal{C}\left(1, \left\lfloor\frac{\degdet}{2^j}\right\rfloor\right)
  + \hat{K} \rdim^\expmm \timegcd{\frac{\degdet}{\rdim}}
  .
\]
As we have seen above, parameters \((1,d)\) for any \(d \in \NN\) correspond to
base cases with a \(1\times 1\) matrix, and they incur no arithmetic cost (one
might want to consider them to use \(\bigO{1}\) operations each; then the total
cost of these base cases is bounded asymptotically by \(\sum_{j=0}^{\mu}
a_{\mu,j} \leq \rdim^{\log_2(3)}\)). Thus we obtain
\(\mathcal{C}(\rdim,\degdet) = \bigO{\rdim^{\expmm} \timegcd{\degdet /
\rdim}}\), under the assumption \(\rdim \le \degdet\).

For the case \(\degdet < \rdim\) handled at
\cref{algo:determinant:constant_rows_if}, \cref{sec:determinant:algo} showed
that \(\mathcal{C}(\rdim,\degdet) = \mathcal{C}(\hat{\rdim},\degdet) +
\bigO{\rdim^\expmm}\) where \(\hat{\rdim}\) is the number of non-constant rows
of the input matrix. Since \(\hat{\rdim} \le \degdet\), our proof above shows
that \(\mathcal{C}(\hat{\rdim},\degdet)\) is in \(\bigO{\hat{\rdim}^{\expmm}
\timegcd{\degdet/\hat{\rdim}}}\). Now our assumptions on \(\timepm{\cdot}\)
imply in particular that \(\timegcd{\cdot}\) is subquadratic, hence this bound
is in \(\bigO{\hat{\rdim}^{\expmm} (\degdet/\hat{\rdim})^{2}} =
\bigO{\hat{\rdim}^{\expmm-2} \degdet^2} \subseteq \bigO{\rdim^\expmm}\). This
concludes the complexity analysis.


\section{Shifted forms: from reduced to weak Popov}
\label{sec:reduced_to_weak_popov}

This section proves \cref{thm:reduced_to_weak_popov} by generalizing the
approach of \cite[Sec.\,2~and\,3]{SarSto11}, which focuses on the non-shifted
case \(\shifts=\shiftz\). It first uses Gaussian elimination on the
\(\shiftz\)-leading matrix of \(\pmat\) to find a unimodular matrix \(\mat{U}\)
such that \(\mat{U}\pmat\) is in \(\shiftz\)-weak Popov form, and then exploits
the specific form of \(\mat{U}\) to compute \(\mat{U}\pmat\) efficiently. Here
we extend this approach to arbitrary shifts and show how to take into account
the possible unbalancedness of the row degree of \(\pmat\).

First, we generalize \cite[Lem.\,8]{SarSto11} from \(\shifts=\shiftz\) to
an arbitrary \(\shifts\), by describing how \(\mat{U}\) can be obtained by
computing a \(\shiftz\)-weak Popov form of the \(\shifts\)-leading matrix of
\(\pmat\).

\begin{lemma}
  \label{lem:reduced_to_weak_popov}
  Let \(\shifts \in \ZZ^\cdim\) and let \(\pmat \in \pmatRing[\rdim][\cdim]\)
  be \(\shifts\)-reduced with \(\rdeg[\shifts]{\pmat}\) nondecreasing. There
  exists an invertible lower triangular \(\mat{T} \in \matRing[\rdim]\) which
  can be computed in \(\bigO{\rdim^{\expmm-1}\cdim}\) operations in \(\field\)
  and is such that \(\mat{T} \: \lmat[\shifts]{\pmat}\) is in
  \(\shiftz\)-unordered weak Popov form. For any such matrix \(\mat{T}\),
  \begin{itemize}
    \item \(\mat{U} = \xDiag{\shiftt} \mat{T} \xDiag{-\shiftt}\) has polynomial
      entries and is unimodular, where \(\shiftt = \rdeg[\shifts]{\pmat}\),
    \item \(\rdeg[\shifts]{\mat{U}\pmat}=\rdeg[\shifts]{\pmat}\)
      and \(\lmat[\shifts]{\mat{U}\pmat} = \mat{T} \:
      \lmat[\shifts]{\pmat}\),
    \item \(\mat{U} \pmat\) is in \(\shifts\)-unordered weak Popov form.
  \end{itemize}
\end{lemma}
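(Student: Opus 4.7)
The plan is to mimic and then extend the argument used in \cite[Lem.\,8]{SarSto11} for the non-shifted case. Since $\pmat$ is $\shifts$-reduced, the matrix $\lmat[\shifts]{\pmat} \in \matRing[\rdim][\cdim]$ has full row rank. To construct $\mat{T}$, I would process the rows of $\lmat[\shifts]{\pmat}$ from top to bottom, maintaining the invariant that the already-processed rows of $\mat{T}\lmat[\shifts]{\pmat}$ have pairwise distinct $\shiftz$-pivot indices. When handling row $i$, start from $\matrow{\lmat[\shifts]{\pmat}}{i}$ itself and, as long as its current pivot column coincides with one of the earlier pivots, subtract a suitable scalar multiple of the relevant earlier row of $\mat{T}\lmat[\shifts]{\pmat}$ to cancel that entry. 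Each such subtraction strictly decreases the pivot column, and the row cannot vanish (full row rank), so the procedure terminates with a distinct pivot. The resulting $\mat{T}$ is invertible and lower triangular by construction; up to reversing columns, this is exactly a PLU-style decomposition of a full-row-rank rectangular matrix, which can be carried out in $\bigO{\rdim^{\expmm-1}\cdim}$ operations by fast linear algebra \cite{IbaMorHui82}.

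It then remains to verify the three bullet points for any such $\mat{T}$. Writing $\shiftt=(t_1,\ldots,t_\rdim)=\rdeg[\shifts]{\pmat}$, the $(i,j)$-entry of $\mat{U}=\xDiag{\shiftt}\mat{T}\xDiag{-\shiftt}$ equals $\var^{t_i-t_j}T_{ij}$ for $j\le i$ and is $0$ otherwise; since $\shiftt$ is nondecreasing by hypothesis, the exponent $t_i-t_j$ is nonnegative and $\mat{U}$ lies in $\pmatRing[\rdim]$. It is lower triangular with diagonal entries $T_{ii}\in\field\setminus\{0\}$, hence unimodular.

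For the last two bullet points I would rely on the identity
\[
    \xDiag{-\shiftt}\,\mat{U}\pmat\,\xDiag{\shifts} = \mat{T}\,\bigl(\xDiag{-\shiftt}\,\pmat\,\xDiag{\shifts}\bigr).
\]
Since $\rdeg[\shifts]{\pmat}=\shiftt$, the right-hand factor has entries in $\field[\var^{-1}]$ with coefficient of degree $0$ equal to $\lmat[\shifts]{\pmat}$; hence the left-hand side is also over $\field[\var^{-1}]$ with coefficient of degree $0$ equal to $\mat{T}\lmat[\shifts]{\pmat}$. As the latter has no zero row (being in $\shiftz$-unordered weak Popov form), this directly gives $\rdeg[\shifts]{\mat{U}\pmat}=\shiftt$ and $\lmat[\shifts]{\mat{U}\pmat}=\mat{T}\lmat[\shifts]{\pmat}$. \cref{lem:weak_popov_lmat} then yields that $\mat{U}\pmat$ is in $\shifts$-unordered weak Popov form. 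The main obstacle lies in the first paragraph: arranging the elimination so that $\mat{T}$ ends up \emph{lower triangular} rather than merely invertible, while still achieving the target cost $\bigO{\rdim^{\expmm-1}\cdim}$; the nondecreasing hypothesis on $\rdeg[\shifts]{\pmat}$ then plays a decisive role, since it is exactly what turns the Laurent conjugate $\xDiag{\shiftt}\mat{T}\xDiag{-\shiftt}$ into a polynomial (hence unimodular) matrix.
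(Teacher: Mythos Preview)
Your proposal is correct and follows essentially the same approach as the paper. Two small remarks: for the cost bound, standard PLU from \cite{IbaMorHui82} does not directly produce a \emph{lower triangular} $\mat{T}$ with $\mat{T}\lmat[\shifts]{\pmat}$ in unordered weak Popov form (row pivoting would destroy triangularity); the paper resolves exactly the obstacle you flag by invoking the generalized Bruhat decomposition $\lmat[\shifts]{\pmat}\revmat{\cdim}=\mat{C}\mat{P}\mat{R}$ of \cite{DuPeSu17} (so that $\mat{T}=\mat{C}^{-1}$), or equivalently the modified-pivoting LUP of \cite[Sec.\,3]{SarSto11}. Conversely, your argument for the second and third bullets via the identity $\xDiag{-\shiftt}\mat{U}\pmat\xDiag{\shifts}=\mat{T}\bigl(\xDiag{-\shiftt}\pmat\xDiag{\shifts}\bigr)$ is slightly more direct than the paper's, which detours through the predictable degree property and an auxiliary nonnegative shift $\tuple{u}=\shiftt-(\min(\shiftt),\ldots,\min(\shiftt))$.
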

\begin{proof}
  Consider the matrix \(\lmat[\shifts]{\pmat}\revmat{\cdim}\) and its
  generalized Bruhat decomposition \(\lmat[\shifts]{\pmat}\revmat{\cdim} =
  \mat{C}\mat{P}\mat{R}\) as defined in~\cite{MaHe07}: \(\mat{C}\in\matRing\)
  is in column echelon form, \(\mat{R}\in\matRing[\rdim][\cdim]\) is in row
  echelon form, and \(\mat{P}\in\matRing\) is a permutation matrix. Therefore
  \(\revmat{\rdim}\mat{R}\revmat{\cdim}\) is in \(\shiftz\)-weak Popov form
  (see the paragraph before \cref{lem:weak_popov_lmat} in
  \cref{sec:prelim:popov}), and \(\mat{P}\mat{R}\revmat{\cdim}\) is in
  \(\shiftz\)-unordered weak Popov form. Since \(\lmat[\shifts]{\pmat}\) has
  full row rank, \(\mat{C}\) is lower triangular and invertible, hence
  \(\mat{P}\mat{R}\revmat{\cdim} = \mat{C}^{-1}
  \lmat[\shifts]{\pmat}\) which proves the existence of \(\mat{T}
  = \mat{C}^{-1}\). Computing the decomposition costs
  \(\bigO{\rdim^{\expmm-1}\cdim}\) operations \cite[Cor.\,25]{DuPeSu17} while
  inverting \(\mat{C}\) costs \(\bigO{\rdim^{\expmm}}\) operations.
  Alternatively, \cite[Sec.\,3]{SarSto11} shows how to compute \(\mat{T}\)
  within the same cost bound using an LUP decomposition with  a modified
  pivoting strategy.

  For any such matrix \(\mat{T}\), write \(\mat{T} = (T_{ij})_{ij}\) and
  \(\shiftt = (t_i)_i\). Then the entry \((i,j)\) of \(\mat{U}\) is \(T_{ij}
  \var^{t_i-t_j}\). Thus, \(\mat{U}\) is lower triangular in
  \(\field(\var)^{\rdim \times \rdim}\) with diagonal entries in
  \(\field\setminus\{0\}\), and since \(\shiftt = \rdeg[\shifts]{\pmat}\) is
  nondecreasing, \(\mat{U}\) is a unimodular matrix in \(\pmatRing[\rdim]\).

  Now, consider the nonnegative shift \(\tuple{u} = \shiftt -
  (\min(\shiftt),\ldots,\min(\shiftt)) \in \NN^\rdim\), and note that \(\mat{U}
  = \xDiag{\tuple{u}} \mat{T} \xDiag{-\tuple{u}}\). (The introduction of
  \(\tuple{u}\) is to circumvent the fact that we have not defined the row
  degree of a matrix over the Laurent polynomials, a notion which would be
  needed if we used \(\xDiag{\shiftt}\) rather than \(\xDiag{\tuple{u}}\) in
  \cref{eqn:red_to_weak:defU}.) Since \(\pmat\) is in \(\shifts\)-reduced
  form, the predictable degree property yields
  \begin{equation}
    \label{eqn:red_to_weak:predictable}
    \rdeg[\shifts]{\mat{U}\pmat} = \rdeg[\shiftt]{\mat{U}} = \rdeg[\tuple{u}]{\mat{U}} + (\min(\shiftt),\ldots,\min(\shiftt)).
  \end{equation}
  On the other hand,
  \begin{equation}
    \label{eqn:red_to_weak:defU}
    \rdeg[\tuple{u}]{\mat{U}} = \rdeg[\shiftz]{\mat{U}\xDiag{\tuple{u}}}
    = \rdeg[\shiftz]{\xDiag{\tuple{u}}\mat{T}} = \tuple{u} = \rdeg[\shifts]{\pmat} - (\min(\shiftt),\ldots,\min(\shiftt)),
  \end{equation}
  where the third equality follows from the fact that \(\mat{T}\) is a constant
  matrix with no zero row. Then, from
  \cref{eqn:red_to_weak:predictable,eqn:red_to_weak:defU}, we obtain
  \(\rdeg[\shifts]{\mat{U}\pmat}=\rdeg[\shifts]{\pmat} = \shiftt\).

  Then, \(\lmat[\shifts]{\mat{U}\pmat}\) is formed by the coefficients of
  nonnegative degree of
  \[
    \xDiag{-\rdeg[\shifts]{\mat{U}\pmat}}\mat{U}\pmat\xDiag{\shifts} =
    \xDiag{-\shiftt}\mat{U}\pmat\xDiag{\shifts} = \mat{T}
    \xDiag{-\shiftt} \pmat \xDiag{\shifts}.
  \]
  Since \(\mat{T}\) is constant and \(\lmat[\shifts]{\pmat}\) is formed by the
  coefficients of nonnegative degree of \(\xDiag{-\shiftt} \pmat
  \xDiag{\shifts}\), we obtain that \(\lmat[\shifts]{\mat{U}\pmat} = \mat{T}
  \:\lmat[\shifts]{\pmat}\). The third item then directly follows from
  \cref{lem:weak_popov_lmat}.
\end{proof}

Knowing \(\mat{T}\), and therefore \(\mat{U}\), the remaining difficulty is to
efficiently compute \(\mat{U}\pmat\). For this, in the case
\(\shifts=\shiftz\), the approach in \cite{SarSto11} has a cost bound which
involves the maximum degree \(d = \deg(\pmat) = \max(\rdeg[\shiftz]{\pmat})\)
\cite[Thm.\,13]{SarSto11}, and uses the following steps:
\begin{itemize}
  \item first compute \(\var^d \xDiag{-\rdeg[\shiftz]{\pmat}} \mat{U} \pmat =
    \mat{T}(\xDiag{(d,\ldots,d)-\rdeg[\shiftz]{\pmat}} \pmat)\);
  \item then scale the rows via the left multiplication by
    \(\var^{-d}\xDiag{\rdeg[\shiftz]{\pmat}}\).
\end{itemize}
While the scaling does not use arithmetic operations, the first step asks to
multiply the constant matrix \(\mat{T}\) by an \(\rdim\times\cdim\) polynomial
matrix of degree \(d\): this costs \(\bigO{\rdim^{\expmm-1}\cdim d}\)
operations in \(\field\). Such a cost would not allow us to reach our target
bound for the computation of characteristic polynomials, since \(d\) may be too
large, namely when \(\pmat\) has unbalanced row degrees.

Below, we refine the approach to better conform with the row degrees of
\(\pmat\). This leads to an improvement of the above cost bound to
\(\bigO{\rdim^{\expmm-1}\cdim (1 + \degdet/\rdim)}\) where \(\degdet =
\sumTuple{\rdeg[\shiftz]{\pmat}}\), thus involving the average row degree of
\(\pmat\) instead of its maximum degree \(d\). For this, we follow the strategy
of splitting the rows of \(\pmat\) into subsets having degrees in prescribed
intervals; when these intervals get further away from the average row degree of
\(\pmat\), the corresponding subset contains a smaller number of rows. Earlier
works using a similar strategy such as \cite{ZhoLab09},
\cite[Sec.\,2.6]{ZhLaSt12}, and \cite[Sec.\,4]{GioNei18}, are not directly
applicable to the problem here. Furthermore, we exploit the fact that \(\pmat\)
has nondecreasing row degree and that \(\mat{T}\) has a lower triangular shape
to avoid logarithmic factors in the cost bound. This results in
\Cref{algo:reduced_to_weak_popov}.

\begin{algorithm}
  \caption{\algoname{ReducedToWeakPopov}$(\pmat,\shifts)$}
  \label{algo:reduced_to_weak_popov}

  \begin{algorithmic}[1]
    \Require{a matrix \(\pmat\in\pmatRing[\rdim][\cdim]\), a shift
    \(\shifts\in\ZZ^\cdim\) such that \(\pmat\) is in \(\shifts\)-reduced form.}
    \Ensure{an \(\shifts\)-weak Popov form of \(\pmat\).}

    \State \CommentLine{Step 1: Ensure nonnegative shift and nondecreasing \(\shifts\)-row degree}

    \State \(\tuple{\hat{s}} \in \NN^\cdim \assign \shifts - (\min(\shifts),\ldots,\min(\shifts))\)
    \label{step:reduced_to_weak_popov:prepare_beg} 

    \State \((\mat{B},\shiftt) \in \pmatRing[\rdim][\cdim] \times \NN^\rdim
    \assign\) matrix \(\mat{B}\) obtained from \(\pmat\) by row permutation
    such that the tuple \(\shiftt = \rdeg[\tuple{\hat{s}}]{\mat{B}}\) is
    nondecreasing
    \label{step:reduced_to_weak_popov:prepare_end} 

    \State \CommentLine{Step 2: Compute the factor \(\mat{T}\) in the unimodular transformation \(\mat{U} = \xDiag{\shiftt} \mat{T} \xDiag{-\shiftt}\)}

    \State \(\mat{L} \in \matRing[\rdim][\cdim] \assign
    \lmat[\tuple{\hat{s}}]{\mat{B}}\), that is, the entry \((i,j)\) of
    \(\mat{L}\) is the coefficient of degree \(t_i - s_j\) of the entry \((i,j)\)
    of \(\mat{B}\), where \(\tuple{\hat{s}} = (s_1,\ldots,s_\cdim)\) and
    \(\shiftt = (t_1,\ldots,t_\rdim)\)
    \label{step:reduced_to_weak_popov:unimodular_beg} 

    \State \(\mat{T} \in \matRing[\rdim] \assign\) invertible lower triangular
    matrix such that \(\mat{T} \mat{L}\) is in \(\shiftz\)-unordered weak Popov
    form \Comment{can be computed via a generalized Bruhat decomposition, see
    \cref{lem:reduced_to_weak_popov}}
    \label{step:reduced_to_weak_popov:unimodular_end} 

    \State \CommentLine{Step 3: Compute the product \(\mat{P} = \mat{U}\mat{B}\)}

    \State \(\degdet \assign t_1 + \cdots + t_\rdim\); \(K \assign \lfloor \log_2(\rdim t_\rdim / \degdet) \rfloor + 1\)
    \Comment{\(K = \min\{k \in \ZZp \mid t_\rdim < 2^k \degdet/\rdim\}\)}
    \label{step:reduced_to_weak_popov:product_beg}

    \State \(i_0 \assign 0\); \(i_k \assign \max \{i \mid t_i < 2^k \degdet/\rdim\}\) for \(1\le k \le K\)
    \Comment{\(0=i_0 < i_1 \le \cdots \le i_{K-1} < i_K = \rdim\)}
    \State \(\mat{P} \assign\) zero matrix in \(\pmatRing[\rdim][\cdim]\)
    \For{\(k\) from \(1\) to \(K\)}
    \State \(\mathcal{R} \assign \{i_{k-1}+1, \ldots, \rdim\}\); ~ \(\mathcal{C} \assign \{i_{k-1}+1, \ldots, i_k\}\); ~ \(\theta \assign t_{i_k} = \max(\subTuple{\shiftt}{\mathcal{C}})\)
    \State \(\matrows{\mat{P}}{\mathcal{R}} \assign \matrows{\mat{P}}{\mathcal{R}} \,+\, \xDiag{\subTuple{\shiftt}{\mathcal{R}}-(\theta,\ldots,\theta)} \, \submat{\mat{T}}{\mathcal{R}}{\mathcal{C}} \, \xDiag{(\theta,\ldots,\theta)-\subTuple{\shiftt}{\mathcal{C}}} \, \matrows{\mat{B}}{\mathcal{C}}\)
    \label{step:reduced_to_weak_popov:product_end}
    \EndFor

    \State \Return the row permutation of \(\mat{P}\) which has
    increasing \(\tuple{\hat{s}}\)-pivot index
  \end{algorithmic}
\end{algorithm}

\begin{proof}[Proof of \cref{thm:reduced_to_weak_popov}.]
  The first step builds a nonnegative shift \(\tuple{\hat{s}}\) which only
  differs from \(\shifts\) by an additive constant, and builds a matrix
  \(\mat{B}\) which is a row permutation of \(\pmat\); hence any
  \(\tuple{\hat{s}}\)-weak Popov form of \(\mat{B}\) is an
  \(\shifts\)-weak Popov form of \(\pmat\). Since the
  \(\tuple{\hat{s}}\)-row degree \(\shiftt\) of \(\mat{B}\) is nondecreasing,
  the construction of \(\mat{T}\) at
  Step 2 ensures that the matrix
  \(\mat{U} = \xDiag{\shiftt} \mat{T} \xDiag{-\shiftt} \in \pmatRing[\rdim]\)
  is unimodular and such that \(\mat{U}\mat{B}\) is in \(\tuple{\hat{s}}\)-unordered weak
  Popov form, according to \cref{lem:reduced_to_weak_popov}. Thus, for the
  correctness, it remains to prove that the matrix \(\mat{P}\) computed at
  Step 3 is \(\mat{P} = \mat{U}\mat{B}\).
  
  Using notation from the algorithm, define \(\mathcal{C}_k = \{i_{k-1}+1,
  \ldots, i_k\}\) for \(1\le k\le K\). The \(\mathcal{C}_k\)'s are pairwise
  disjoint and such that \(\{1,\ldots,\rdim\} = \mathcal{C}_1 \cup \cdots \cup
  \mathcal{C}_K\). Then, slicing the columns of \(\mat{T}\) according to these
  sets of column indices, we obtain
  \[
    \mat{U}\mat{B} = \xDiag{\shiftt} \mat{T} \xDiag{-\shiftt} \mat{B}
    = \sum_{1 \le k \le K} \xDiag{\shiftt} \matcols{\mat{T}}{\mathcal{C}_k} \xDiag{-\subTuple{\shiftt}{\mathcal{C}_k}} \matrows{\mat{B}}{\mathcal{C}_k}.
  \]
  Furthermore, since \(\mat{T}\) is lower triangular, all rows of
  \(\matcols{\mat{T}}{\mathcal{C}_k}\) with index not in \(\mathcal{R}_k =
  \{i_{k-1}+1, \ldots, \rdim\}\) are zero. Hence, more precisely,
  \[
    \mat{U}\mat{B}
    = \sum_{1 \le k \le K} \begin{bmatrix} \matz_{i_{k-1} \times \cdim} \\[0.1cm] \xDiag{\subTuple{\shiftt}{\mathcal{R}_k}} \, \submat{\mat{T}}{\mathcal{R}_k}{\mathcal{C}_k} \, \xDiag{-\subTuple{\shiftt}{\mathcal{C}_k}} \, \matrows{\mat{B}}{\mathcal{C}_k} \end{bmatrix}.
  \]
This formula corresponds to the slicing of the product
\(\mat{P}=\xDiag{\shiftt} \mat{T} \xDiag{-\shiftt} \mat{B}\) in blocks as follows: 
\[
\setlength{\arraycolsep}{.8\arraycolsep}
\left[
  \begin{array}{cc;{1pt/3pt}c}
      \multicolumn{3}{l}{*} \\
      * & \multicolumn{2}{l}{*} \\
      \cdashline{3-3}[1pt/3pt]
        & \phantom{hi} & \rule[-1.2em]{0em}{3em}~~~~\xDiag{\subTuple{\shiftt}{\mathcal{R}_k}}~~
  \end{array}
  \right]
  \left[
    \begin{array}{cc;{1pt/3pt}c;{1pt/3pt}cc}
      \multicolumn{5}{l}{*} \\
      * & \multicolumn{4}{l}{*} \\
      \cdashline{3-3}[1pt/3pt]
      * & * & & \\
      * & * & \submat{\mat{T}}{\mathcal{R}_k}{\mathcal{C}_k}\\
      * & * & & * \\
      * & * & & * & * \\
    \end{array}
  \right]
  \left[
    \begin{array}{ccccc}
      * &  & \\
        &* & \\
        & &
            \begin{array}{;{1pt/3pt}c;{1pt/3pt}}
              \hdashline[1pt/3pt]
              \rule[-0.8em]{0em}{2.2em}
              \xDiag{-\subTuple{\shiftt}{\mathcal{C}_k}}\!\! \\
              \hdashline[1pt/3pt]
            \end{array}
        \\
    &&& *\\
    &&&&*
  \end{array}\right]
  \left[
    \begin{array}{ccc}
      \hphantom{larg} & * & \hphantom{rger} \\
      & * & \\
    \hdashline[1pt/3pt]
      & \rule[-0.8em]{0em}{2.2em} \matrows{\mat{B}}{\mathcal{C}_{k}} \\
    \hdashline[1pt/3pt]
      & * & \\
      & * &
    \end{array}
    \right].
\]
  The \algorithmicfor{} loop at Step 3 computes
  \(\mat{P}\) by following this formula, hence \(\mat{P} = \mat{U}\mat{B}\).
  (Note indeed that the scaling by \((\theta,\ldots,\theta)\) in the algorithm
  can be simplified away and is just there to avoid computing with Laurent
  polynomials.)

  Concerning the complexity bound, we first note that the quantity \(\degdet\)
  defined in the algorithm is the same as that in
  \cref{thm:reduced_to_weak_popov}; indeed, \(\rdeg[\tuple{\hat{s}}]{\mat{B}} =
  \rdeg[\shifts]{\mat{B}} - (\min(\shifts),\ldots,\min(\shifts))\), hence
  \[
    \degdet = t_1 + \cdots + t_\rdim = \sumTuple{\rdeg[\tuple{\hat{s}}]{\mat{B}}}
    = \sumTuple{\rdeg[\shifts]{\mat{B}}} - \rdim \cdot \min(\shifts)
    = \sumTuple{\rdeg[\shifts]{\pmat}} - \rdim \cdot \min(\shifts).
  \]
  By \cref{lem:reduced_to_weak_popov},  Step 2 uses
  \(\bigO{\rdim^{\expmm-1}\cdim}\) operations. As for Step 3, its cost
  follows from bounds on the cardinalities of \(\mathcal{C}_k\) and
  \(\mathcal{R}_k\). Precisely, we have \(\mathcal{C}_k \subseteq
  \mathcal{R}_k\), and \(\min(\mathcal{R}_k) > i_{k-1}\) implies that each
  entry of the subtuple \(\subTuple{\shiftt}{\mathcal{R}_k}\) is at least
  \(2^{k-1}\degdet/\rdim\). Hence \(\card{\mathcal{R}_k}\cdot 2^{k-1}
  \degdet/\rdim \le \sum_{i \in \mathcal{R}_k} t_i \;\le t_1 + \cdots + t_\rdim
  = \degdet\), and \(\card{\mathcal{C}_k} \le \card{\mathcal{R}_k} \le
  \rdim/2^{k-1}\).

  Then, in the product
  \(\xDiag{\subTuple{\shiftt}{\mathcal{R}_k}-(\theta,\ldots,\theta)} \,
  \submat{\mat{T}}{\mathcal{R}_k}{\mathcal{C}_k} \,
  \xDiag{(\theta,\ldots,\theta)-\subTuple{\shiftt}{\mathcal{C}_k}} \,
  \matrows{\mat{B}}{\mathcal{C}_k}\), the left multiplication by
  \(\xDiag{\subTuple{\shiftt}{\mathcal{R}_k}-(\theta,\ldots,\theta)}\) does not
  use arithmetic operations; the matrix
  \(\submat{\mat{T}}{\mathcal{R}_k}{\mathcal{C}_k}\) is over \(\field\) and has
  at most \(\rdim/2^{k-1}\) rows and at most \(\rdim/2^{k-1}\) columns; and the
  matrix \(\xDiag{(\theta,\ldots,\theta)-\subTuple{\shiftt}{\mathcal{C}_k}} \,
  \matrows{\mat{B}}{\mathcal{C}_k}\) is over \(\polRing\) and has \(\cdim\)
  columns and at most \(\rdim/2^{k-1}\) rows. Furthermore, the latter matrix
  has degree at most \(\theta\): indeed, \(\tuple{\hat{s}} \ge \shiftz\)
  implies that \(\shiftt = \rdeg[\tuple{\hat{s}}]{\mat{B}} \ge
  \rdeg[\shiftz]{\mat{B}}\), hence in particular
  \(\subTuple{\shiftt}{\mathcal{C}_k} \ge
  \rdeg[\shiftz]{\matrows{\mat{B}}{\mathcal{C}_k}}\). Recall that \(\theta =
  t_{i_k} \le 2^k \degdet/\rdim\) holds, by definition of \(\theta\) and
  \(i_k\). From these bounds on the dimensions and the degrees of the involved
  matrices, and using the fact that \(\rdim/2^{k-1} \le \rdim \le \cdim\), it
  follows that computing the product
  \(\xDiag{\subTuple{\shiftt}{\mathcal{R}_k}-(\theta,\ldots,\theta)} \,
  (\submat{\mat{T}}{\mathcal{R}_k}{\mathcal{C}_k} \,
  (\xDiag{(\theta,\ldots,\theta)-\subTuple{\shiftt}{\mathcal{C}_k}} \,
  \matrows{\mat{B}}{\mathcal{C}_k}))\) uses \(\bigO{(\rdim/2^{k-1})^{\expmm-1}
  \cdim (\theta+1)} \subseteq \bigO{(\rdim/2^{k-1})^{\expmm-1} \cdim
(2^k\degdet/\rdim + 1)}\) operations in \(\field\). Thus, since \(\expmm>2\),
the cost of Step 3 is
  \begin{align*}
    \bigO{\sum_{1\le k\le K} \left(\frac{\rdim}{2^{k-1}}\right)^{\expmm-1}
    \cdim \left( \frac{2^k \degdet}{\rdim} + 1 \right)}
    & \subseteq
    \bigO{\rdim^{\expmm-2} \cdim \degdet \left(\sum_{1\le k\le K} 2^{k(2-\expmm)}\right) + \rdim^{\expmm-1} \cdim \left(\sum_{1\le k\le K} 2^{k(1-\expmm)}\right)}
    \\
    & \subseteq
    \bigO{\rdim^{\expmm-2} \cdim \degdet + \rdim^{\expmm-1} \cdim}. \qedhere
  \end{align*}
\end{proof}

\section{Shifted forms: from weak Popov to Popov}
\label{sec:weak_popov_to_popov}

This section culminates in \cref{sec:weak_popov_to_popov:proof} with the
description of Algorithm \algoname{WeakPopovToPopov} and a proof of
\cref{thm:weak_popov_to_popov}. Based on \cite[Lem.\,14]{SarSto11} (which
extends to the shifted case), the result in \cref{thm:weak_popov_to_popov} can
easily be used to solve the same problem in the rectangular case with an
\(\rdim\times\cdim\) matrix \(\pmat\); while this is carried out in
Algorithm~\algoname{WeakPopovToPopov}, it is out of the main scope of this
paper and thus for conciseness we only give a cost analysis in the case
\(\rdim=\cdim\).

Our approach is to obtain the \(-\shifts\)-Popov form of \(\pmat\) from a
shifted reduced kernel basis of some matrix \(\sys\) built from \(\pmat\). This
fact is substantiated in \cref{sec:weak_popov_to_popov:via_kernel}, which also
proves that we have precise a priori information on the degrees of this sought
kernel basis.

A folklore method for kernel basis computation is to find an approximant basis
at an order sufficiently large so that it contains a kernel basis as a
submatrix. More precisely, assuming we know a list of bounds \(\shifts\) such
that there exists a basis of \(\modKer{\sys}\) with column degree bounded by
\(\shifts\), the following algorithm computes such a kernel basis which is
furthermore \(-\shifts\)-reduced:
\begin{itemize}
    \item \(\orders \assign \cdeg[\shifts]{\sys}+1\);
    \item \(\appbas  \assign\) basis of \(\modApp{\orders}{\sys}\) in
      \(-\shifts\)-reduced form;
    \item \Return the submatrix of \(\appbas\) formed by its rows with
      nonpositive \(-\shifts\)-degree.
\end{itemize}
The idea is that any row \(\app\) of \(\appbas\) is such that \(\app \sys =
\matz \bmod \xDiag{\orders}\), and if it further satisfies \(\cdeg{\app} \le
\shifts\) then \(\cdeg{\app\sys} \le \cdeg[\shifts]{\sys} < \orders\), so that
\(\app\sys=\matz\) holds. Here the complexity mainly depends on
\(\sumTuple{\shifts}\) and \(\sumTuple{\cdeg[\shifts]{\sys}}\), quantities that
are often large in which case the algorithm of Zhou et al.~\cite{ZhLaSt12} is
more efficient. Nevertheless there are cases, such as the one arising in this
section, where both sums are controlled, and elaborating over this approach
leads to an efficient algorithm.

In order to propose Algorithm~\algoname{KnownDegreeKernelBasis} in
\cref{sec:weak_popov_to_popov:algo}, efficiently computing the kernel basis
using essentially a single call to \algoname{PM-Basis}, we transform the input
into one with a balanced shift and a balanced order. Here and in what follows,
for a nonnegative tuple \(\shiftt\in\NN^\cdim\), we say that \(\shiftt\) is
\emph{balanced} if \(\max(\shiftt) \in \bigO{\sumTuple{\shiftt}/\cdim}\),
meaning that the maximum entry in \(\shiftt\) is not much larger than the
average of all entries of \(\shiftt\).
\Cref{sec:weak_popov_to_popov:output_parlin} deals with the shifts by
describing a transformation of the input inspired from
\cite[Sec.\,3]{Storjohann06}, allowing us to reduce to the case of a shift
\(\shifts\) whose entries are balanced.
\cref{sec:weak_popov_to_popov:ovlp_parlin} deals with balancing the order
\(\orders\) by performing the overlapping partial linearization of
\cite[Sec.\,2]{Storjohann06}.

For the latter transformation, as noted above, we assume that there
exists a basis of the considered kernel whose column degree is bounded by
\(\shifts\), or equivalently that \(-\shifts\)-reduced kernel bases have
nonpositive \(-\shifts\)-row degree. On the other hand, for the first
transformation we must ensure that \(-\shifts\)-reduced kernel bases have
nonnegative \(-\shifts\)-row degree. Thus our algorithm works under the
requirement that \(-\shifts\)-reduced bases of \(\modKer{\sys}\) have
\(-\shifts\)-row degree exactly \(\shiftz\), hence its name. This is a
restriction compared to \cite[Algo.\,1]{ZhoLab13} which only assumes that
\(-\shifts\)-reduced bases of \(\modKer{\sys}\) have nonpositive
\(-\shifts\)-row degree and has to perform several approximant basis
computations to recover the whole kernel basis, as outlined in the
introduction. In short, we have managed to exploit the fact that we have better
a priori knowledge of degrees in kernel bases than in the context of
\cite{ZhoLab13}, leading to a faster kernel computation which, when used in our
determinant algorithm, brings no logarithmic factor in the complexity.

\subsection{Normalization via kernel basis computation}
\label{sec:weak_popov_to_popov:via_kernel}

We normalize the matrix \(\pmat\) into its \(-\shifts\)-Popov form \(\mat{P}\)
using a kernel basis computation, an approach already used in a context similar
to ours in the non-shifted case in \cite[Lem.\,19]{SarSto11}. Roughly, this
stems from the fact that the identity \(\mat{U} \pmat = \mat{P}\) with a
unimodular \(\mat{U}\) can be rewritten as
\[
  \begin{bmatrix}
    \mat{U} & \mat{P}
  \end{bmatrix}
  \begin{bmatrix}
    \pmat \\
    -\idMat{\rdim}
  \end{bmatrix}
  =
  \matz;
\]
and that, for a well-chosen shift, one retrieves \([\mat{U} \;\; \mat{P}]\) as a shifted
reduced kernel basis. The next statement gives a choice of shift suited to our
situation, and describes the degree profile of such kernel bases. The focus on
the shift \(-\pivDegs\) comes from the fact that any \(-\pivDegs\)-reduced form
\(\mat{R}\) of \(\pmat\) is only a constant transformation away from being the
\(-\shifts\)-Popov form \(\mat{P}\) of \(\pmat\)
\cite[see][Lem.\,4.1]{JeNeScVi16}.

\begin{lemma}
  \label{lem:normalize}
  Let \(\shifts\in\ZZ^\rdim\), let \(\pmat\in\pmatRing[\rdim]\) be in
  \(-\shifts\)-weak Popov form, let \(\pivDegs \in \NN^\rdim\) be the
  \(-\shifts\)-pivot degree of \(\pmat\), and assume that \(\shifts \ge
  \pivDegs\). Let \(\mat{R}\) be a \(-\pivDegs\)-weak Popov form of \(\pmat\)
  and let \(\mat{U}\) be the unimodular matrix such that \(\mat{U} \pmat =
  \mat{R}\). Let further \(\cdegs = (\shifts-\pivDegs,\pivDegs) \in
  \ZZ^{2\rdim}\) and \(\sys = \left[ \begin{smallmatrix} \pmat \\
  -\idMat{\rdim}\end{smallmatrix} \right] \in \pmatRing[2\rdim][\rdim]\).
  Then,
  \begin{itemize}
    \item the \(-\cdegs\)-pivot profile of \([\mat{U} \;\; \mat{R}]\)
      is \((\rdim+j, \pivDeg_j)_{1\le j\le\rdim}\),
    \item \([\mat{U} \;\; \mat{R}]\) is in \(-\cdegs\)-weak Popov form
      with \(\rdeg[-\cdegs]{[\mat{U} \;\; \mat{R}]} = \shiftz\) and
      \(\cdeg{[\mat{U} \;\; \mat{R}]} \le \cdegs\),
    \item \([\mat{U} \;\; \mat{R}]\) is a basis of \(\modKer{\sys}\).
  \end{itemize}
\end{lemma}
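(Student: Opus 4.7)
The plan is to handle part (c) by a direct algebraic argument, then base parts (a) and (b) on a careful analysis of shifted row degrees, leveraging the predictable degree property and the invariance of the pivot profile under unimodular transformation.

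For part (c), the rows of $[\mat{U} \;\; \mat{R}]$ lie in $\modKer{\sys}$ since $[\mat{U} \;\; \mat{R}]\sys = \mat{U}\pmat - \mat{R} = \matz$. The kernel has rank $\rdim$ (as $\sys$ has full column rank, thanks to its lower block $-\idMat{\rdim}$), and $[\mat{U} \;\; \mat{R}]$ has rank $\rdim$ by unimodularity of $\mat{U}$. Any $[\mat{a} \;\; \mat{b}] \in \modKer{\sys}$ satisfies $\mat{b} = \mat{a}\pmat$, hence equals $(\mat{a}\mat{U}^{-1})[\mat{U} \;\; \mat{R}]$, which shows that the rows generate $\modKer{\sys}$.

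The key preliminary step for (a) and (b) is to pin down the shape of $\mat{R}$ under shift $-\pivDegs$. Since $\mat{R}$ is square and in $-\pivDegs$-weak Popov form, its $-\pivDegs$-pivot indices are strictly increasing elements of $\{1, \ldots, \rdim\}$, hence equal to $(1, \ldots, \rdim)$, so the pivots lie on the diagonal. To identify the $-\pivDegs$-pivot degrees, I would first observe that the $-\shifts$-Popov form $\mat{P}$ of $\pmat$ is automatically in $-\pivDegs$-Popov form: indeed, the strict column-degree inequalities $\deg(\mat{P}_{i,j}) < \pivDeg_j$ for $i \ne j$ characterizing the Popov property imply both the weak Popov and Popov conditions under shift $-\pivDegs$. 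By \cref{lem:popov_pivot}, $\mat{R}$ and $\mat{P}$ share the same $-\pivDegs$-pivot profile, which is therefore $(j, \pivDeg_j)_{1 \le j \le \rdim}$. This yields $\deg(\mat{R}_{j,j}) = \pivDeg_j$ and $\deg(\mat{R}_{j,k}) \le \pivDeg_k$ for all $j,k$, with strict inequality when $k > j$. The predictable degree property then controls $\mat{U}$: since $\pmat$ is $-\shifts$-reduced with $\rdeg[-\shifts]{\pmat} = \pivDegs - \shifts$, the identity $\mat{U}\pmat = \mat{R}$ gives $\rdeg[\pivDegs - \shifts]{\mat{U}} = \rdeg[-\shifts]{\mat{R}}$. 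Using $\shifts \ge \pivDegs$ together with the degree bounds on $\mat{R}$ just established, we get $\rdeg[-\shifts]{\mat{R}} \le \shiftz$, and hence $\deg(\mat{U}_{j,k}) + \pivDeg_k - s_k \le 0$ for every $j,k$.

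Assembling these facts proves (a) and (b). In row $j$ of $[\mat{U} \;\; \mat{R}]$ under shift $-\cdegs$, every entry contributes at most $0$ to the shifted row degree: the $\mat{U}$-entries by the bound above, and the $\mat{R}$-entries because $\deg(\mat{R}_{j,k}) \le \pivDeg_k$. Equality is achieved at $\mat{R}_{j,j}$ in column $\rdim+j$, while the $\mat{R}_{j,k}$ with $k > j$ are strictly negative. Hence $\rdeg[-\cdegs]{[\mat{U}\;\;\mat{R}]} = \shiftz$ and the $-\cdegs$-pivot profile is $(\rdim+j, \pivDeg_j)_{1\le j\le \rdim}$, proving (a). The pivot indices are already strictly increasing, so $[\mat{U} \;\; \mat{R}]$ is in $-\cdegs$-weak Popov form, and the column degree bound $\cdeg{[\mat{U}\;\;\mat{R}]} \le \cdegs$ follows from \cref{lem:rdegcdeg} applied to $\rdeg[-\cdegs]{[\mat{U}\;\;\mat{R}]} \le \shiftz$. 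The only delicate point is the identification of $\mat{R}$'s diagonal degrees with $\pivDegs$, where invariance of the pivot profile and the assumption $\shifts \ge \pivDegs$ both enter; the remainder is a mechanical combination of the predictable degree property with these bounds.
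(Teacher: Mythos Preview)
Your proof is correct and follows essentially the same strategy as the paper: establish that \(\mat{R}\) has \(-\pivDegs\)-pivot degree \(\pivDegs\) via the \(-\shifts\)-Popov form \(\mat{P}\) and \cref{lem:popov_pivot}, bound \(\mat{U}\) via the predictable degree property, and combine. Two minor differences are that you prove \(\mat{P}\) is in \(-\pivDegs\)-Popov form directly (the paper cites \cite[Lem.\,4.1]{JeNeScVi16}), and your argument for part~(c) is a bit more direct (any kernel element \([\row{a}\;\;\row{b}]\) equals \((\row{a}\mat{U}^{-1})[\mat{U}\;\;\mat{R}]\), whereas the paper shows unimodular equivalence to an arbitrary kernel basis); one small omission is that before writing \(\deg(\mat{P}_{i,j}) < \pivDeg_j\) you should note explicitly that \(\mat{P}\) inherits \(-\shifts\)-pivot degree \(\pivDegs\) from \(\pmat\) via \cref{lem:popov_pivot}, so that \(\pivDeg_j = \deg(\mat{P}_{j,j})\).
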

\begin{proof}
  First, we prove that \(\mat{R}\) has \(-\pivDegs\)-pivot degree \(\pivDegs\);
  note that this implies \(\rdeg[-\pivDegs]{\mat{R}}=\shiftz\) and
  \(\cdeg{\mat{R}}=\pivDegs\) since \(\mat{R}\) is in \(-\pivDegs\)-weak Popov
  form. \cref{lem:popov_pivot} shows that the \(-\shifts\)-Popov
  form \(\mat{P}\) of \(\pmat\) has the same \(-\shifts\)-pivot degree as
  \(\pmat\), that is, \(\pivDegs\). Hence, by definition of Popov forms,
  \(\cdeg{\mat{P}} = \pivDegs\). Then, \cite[Lem.\,4.1]{JeNeScVi16} states
  that \(\mat{P}\) is also in \(-\pivDegs\)-Popov form. Since \(\mat{R}\) is a
  \(-\pivDegs\)-weak Popov form of \(\mat{P}\), by
  \cref{lem:popov_pivot} it has the same \(-\pivDegs\)-pivot degree as
  \(\mat{P}\), that is, \(\pivDegs\).
  
  Now, by the predictable degree property and since \(\rdeg[-\shifts]{\pmat} =
  -\shifts+\pivDegs\),
  \[
    \rdeg[-\shifts+\pivDegs]{\mat{U}} =\rdeg[-\shifts]{\mat{U}\pmat} =
    \rdeg[-\shifts]{\mat{R}} \le \rdeg[-\pivDegs]{\mat{R}} = \shiftz,
  \]
  where the inequality holds because \(-\shifts\le-\pivDegs\).  Thus, by choice
  of \(\cdegs\), the \(-\cdegs\)-pivot entries of \([\mat{U} \;\; \mat{R}]\)
  are the \(-\pivDegs\)-pivot entries of its submatrix \(\mat{R}\); this proves
  the first item.

  Then, the second item follows: the matrix \([\mat{U} \;\; \mat{R}]\) is in
  \(-\cdegs\)-weak Popov form since its \(-\cdegs\)-pivot index is
  increasing; its \(-\cdegs\)-row degree is equal to the \(-\pivDegs\)-row
  degree of \(\mat{R}\) which is \(\shiftz\); and \(\rdeg[-\cdegs]{[\mat{U}
  \;\; \mat{R}]} = \shiftz\) implies \(\cdeg{[\mat{U} \;\; \mat{R}]} \le
  \cdegs\) by \cref{lem:rdegcdeg}.

  Let \([\kerbas_1 \;\; \kerbas_2]\in\pmatRing[\rdim][2\rdim]\) be a basis
  \(\modKer{\sys}\) (it has \(\rdim = 2\rdim-\rdim\) rows since \(\sys\) is
  \(2\rdim\times\rdim\) and has rank \(\rdim\)). Since \(\mat{U} \pmat =
  \mat{R}\), the rows of \([\mat{U} \;\; \mat{R}]\) are in this kernel. As a
  result, there exists a matrix \(\mat{V}\in\pmatRing[\rdim]\) such that
  \([\mat{U} \;\; \mat{R}] = \mat{V} [\kerbas_1 \;\; \kerbas_2]\). In
  particular, \(\mat{U} = \mat{V} \kerbas_1\), and since \(\mat{U}\) is
  unimodular, this implies that \(\mat{V}\) is unimodular as well. Thus, the
  basis \([\kerbas_1 \;\; \kerbas_2]\) is unimodularly equivalent to \([\mat{U}
  \;\; \mat{R}]\), and the latter matrix is also a basis of \(\modKer{\sys}\).
\end{proof}

One may note similarities with \cite[Lem.\,5.1]{JeNeScVi17}, which is about
changing the shift of reduced forms via kernel basis computations. Here, we
consider \(\pmat\) in \(-\shifts\)-reduced form and are interested in its
\(-\pivDegs\)-reduced forms \(\mat{R}\): we change \(-\shifts\) into
\(-\pivDegs = -\shifts + (\shifts-\pivDegs)\), with a nonnegative difference
\(\shifts-\pivDegs\ge \shiftz\). Still, the above lemma does not follow from
\cite{JeNeScVi17} since here the origin shift \(-\shifts\) is nonpositive and
thus we cannot directly incorporate it in the matrix \(\sys\) by considering
\(\pmat\xDiag{-\shifts}\).

The next corollary uses notation from \cref{lem:normalize} and shows how to
obtain the \(-\shifts\)-Popov form of \(\pmat\) via a \(-\cdegs\)-reduced basis
of the kernel of \(\sys\). Then
\cref{sec:weak_popov_to_popov:output_parlin,sec:weak_popov_to_popov:ovlp_parlin,sec:weak_popov_to_popov:algo}
focus on the efficient computation of such a kernel basis.

\begin{corollary}
  \label{cor:normalize}
  Let \([\matt{U} \;\; \matt{R}] \in \pmatRing[\rdim][2\rdim]\) be a
  \(-\cdegs\)-reduced basis of \(\modKer{\sys}\). Then, \(\matt{R}\) is a
  \(-\pivDegs\)-reduced form of \(\pmat\), and \(\mat{P} =
  (\lmat[-\pivDegs]{\matt{R}})^{-1}\matt{R}\) is the \(-\shifts\)-Popov form of
  \(\pmat\).
\end{corollary}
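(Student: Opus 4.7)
My plan is to leverage \cref{lem:normalize}, which provides a specific $-\cdegs$-weak Popov basis $[\mat{U} \;\; \mat{R}]$ of $\modKer{\sys}$ with $-\cdegs$-row degree exactly $\shiftz$, and to transfer its properties to $[\matt{U} \;\; \matt{R}]$ using uniqueness of shifted reduced forms. Since both $[\matt{U} \;\; \matt{R}]$ and $[\mat{U} \;\; \mat{R}]$ are bases of $\modKer{\sys}$, they are unimodularly equivalent; in particular, any $-\cdegs$-reduced basis of this module must have the same $-\cdegs$-row degree as $[\mat{U} \;\; \mat{R}]$ up to permutation, namely $\shiftz$. Applying \cref{lem:rdegcdeg} then yields $\cdeg{[\matt{U} \;\; \matt{R}]} \le \cdegs$, and in particular $\rdeg[-\pivDegs]{\matt{R}} \le \shiftz$.

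To obtain the first claim, I first observe that the unimodular equivalence $[\matt{U} \;\; \matt{R}] = \mat{V}[\mat{U} \;\; \mat{R}]$ forces $\matt{U} = \mat{V}\mat{U}$ to be unimodular, so that the kernel relation $\matt{U}\pmat = \matt{R}$ shows that $\matt{R}$ is a row basis of $\pmat$ of rank $\rdim$. Then I apply \cref{lem:minimality_implies_basis} with $\mathcal{M}$ the row space of $\pmat$ and shift $-\pivDegs$: the hypothesis is satisfied since $\mat{R}$ is a $-\pivDegs$-reduced basis of $\mathcal{M}$ with $\rdeg[-\pivDegs]{\mat{R}} = \shiftz$ (established in the proof of \cref{lem:normalize}), and $\matt{R}$ is a rank-$\rdim$ matrix in $\mathcal{M}$ with $\rdeg[-\pivDegs]{\matt{R}} \le \shiftz$. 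The lemma then certifies that $\matt{R}$ is a $-\pivDegs$-reduced basis of $\mathcal{M}$, i.e., a $-\pivDegs$-reduced form of $\pmat$, with $\rdeg[-\pivDegs]{\matt{R}} = \shiftz$.

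For the Popov identity, I invoke \cite[Lem.\,4.1]{JeNeScVi16} (already used in the proof of \cref{lem:normalize}) to ensure that the $-\shifts$-Popov form $\mat{P}$ of $\pmat$ is also in $-\pivDegs$-Popov form, and then observe that the Popov conditions (monic diagonal pivots, and strictly smaller degrees elsewhere in each pivot column) directly yield $\lmat[-\pivDegs]{\mat{P}} = \idMat{\rdim}$. Since $\matt{R}$ and $\mat{P}$ are both $-\pivDegs$-reduced bases of the same module, there is a unimodular $\mat{W}$ with $\matt{R} = \mat{W}\mat{P}$, and the predictable degree property together with $\rdeg[-\pivDegs]{\matt{R}} = \rdeg[-\pivDegs]{\mat{P}} = \shiftz$ forces $\rdeg[\shiftz]{\mat{W}} = \shiftz$, so $\mat{W}$ is a constant invertible matrix. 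Then \cref{lem:product_of_lmats} gives $\lmat[-\pivDegs]{\matt{R}} = \lmat[\shiftz]{\mat{W}}\lmat[-\pivDegs]{\mat{P}} = \mat{W}$, and hence $(\lmat[-\pivDegs]{\matt{R}})^{-1}\matt{R} = \mat{W}^{-1}\mat{W}\mat{P} = \mat{P}$.

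The main subtlety I anticipate is bookkeeping with three different shifts ($-\shifts$, $-\pivDegs$, $-\cdegs$) and making sure the chain of reduced-form equivalences is applied with the correct module at each step; the actual degree inequalities and leading-matrix computations are then routine once \cref{lem:normalize,lem:minimality_implies_basis,lem:product_of_lmats} are correctly marshalled.
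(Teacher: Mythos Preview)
Your proof is correct and follows essentially the same approach as the paper: both use \cref{lem:normalize} to transfer the $-\cdegs$-row degree $\shiftz$ to $[\matt{U}\;\;\matt{R}]$, then apply \cref{lem:minimality_implies_basis} to conclude that $\matt{R}$ is a $-\pivDegs$-reduced form of $\pmat$, and finally invoke \cite[Lem.\,4.1]{JeNeScVi16} for the Popov identity. The only minor differences are that you establish $\rank{\matt{R}}=\rdim$ by observing $\matt{U}=\mat{V}\mat{U}$ is unimodular (the paper instead argues by contradiction via the left kernel of $[\matt{U}\;\;\matt{R}]$), and you spell out in detail why $(\lmat[-\pivDegs]{\matt{R}})^{-1}\matt{R}=\mat{P}$ where the paper simply cites the external lemma.
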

\begin{proof}
  It suffices to prove that \(\matt{R}\) is a \(-\pivDegs\)-reduced form of
  \(\pmat\); then, the conclusion follows from \cite[Lem.\,4.1]{JeNeScVi16}.
  Both matrices \([\matt{U} \;\; \matt{R}]\) and \([\mat{U} \;\; \mat{R}]\) are
  \(-\cdegs\)-reduced and thus have the same \(-\cdegs\)-row degree up to
  permutation. \cref{lem:normalize} yields \(\rdeg[-\cdegs]{[\mat{U} \;\;
  \mat{R}]} = \shiftz\), hence \(\rdeg[-\cdegs]{[\matt{U} \;\; \matt{R}]} =
  \shiftz\). In particular, since \(-\cdegs= (\pivDegs-\shifts,-\pivDegs)\), we
  have \(\rdeg[-\pivDegs]{\matt{R}} \le \shiftz\).

  We conclude by applying \cref{lem:minimality_implies_basis} to the row space
  of \(\pmat\), which has rank \(\rdim\), and which has a basis \(\mat{R}\) in
  \(-\pivDegs\)-reduced form with \(-\pivDegs\)-row degree \(\shiftz\). From
  \(\matt{U} \pmat = \matt{R}\), we obtain that the rows of \(\matt{R}\) are in
  this row space. This identity also implies that \(\matt{R}\) has rank
  \(\rdim\), otherwise there would exist a nonzero vector in the left kernel of
  \(\matt{R}\), which would also be in the left kernel of \(\matt{U}\) since
  \(\pmat\) is nonsingular, hence it would be in the left kernel of the full
  row rank matrix \([\matt{U} \;\; \matt{R}]\). The assumptions of the lemma
  are satisfied, and thus \(\matt{R}\) is a \(-\pivDegs\)-reduced form of
  \(\pmat\).
\end{proof}

\subsection{Reducing to the case of balanced pivot degree}
\label{sec:weak_popov_to_popov:output_parlin}

Here, we show that the output column partial linearization, used previously in
algorithms for approximant bases and generalizations of them
\cite[Sec.\,3]{Storjohann06}, \cite{ZhoLab12,JeNeScVi16}, can be applied to
kernel computations when the sought kernel basis has nonnegative shifted row
degree. The main effect of this transformation is to make the shift and output
degrees more balanced, while preserving most other properties of
\((\shifts,\sys)\). The transformation itself, defined below, is essentially a
column-wise \(\var^\degExp\)-adic expansion for a well-chosen integer parameter
\(\degExp\).

\begin{definition}
  \label{dfn:output_parlin}
  Let \(\shifts = (\shift_1,\ldots,\shift_\rdim)\in \NN^\rdim\), and let
  \(\degExp \in \ZZp\). For \(1\le j\le \rdim\), write \(\shift_j =
  (\quoExp_j-1)\degExp + \remExp_j\) with \(\quoExp_j = \lceil
  \shift_j/\degExp\rceil\) and \(1\le \remExp_j\le \degExp\) if \(\shift_j\neq
  0\), and \(\quoExp_j = 1\) and \(\remExp_j = 0\) if \(\shift_j=0\). Define
  \(\expand{\rdim} = \quoExp_1 + \cdots + \quoExp_\rdim\), and the
  expansion-compression matrix \(\expandMat \in
  \pmatRing[\expand{\rdim}][\rdim]\) as
  \[
    \expandMat =
    \left[\begin{smallmatrix}
        1 \\
        \var^\degExp \\
        \svdots \\
        \var^{(\quoExp_1-1)\degExp} \\
         & \sddots \\
         & & 1 \\
         & & \var^\degExp \\
         & & \svdots \\
         & & \var^{(\quoExp_\rdim-1)\degExp}
    \end{smallmatrix}\right].
  \]
  Define also
  \[
    \expand{\shifts} =
    (\,\underbrace{\degExp,\ldots,\degExp,\remExp_1}_{\quoExp_1},\,\,\ldots\,\,,\underbrace{\degExp,\ldots,\degExp,\remExp_\rdim}_{\quoExp_\rdim})
    \in \NN^{\expand{\rdim}}.
  \]
  In this context, for a matrix \(\kerbas \in \pmatRing[\cork][\rdim]\), we
  define the \emph{column partial linearization of \(\kerbas\)} as the unique
  matrix \(\expand{\kerbas} \in \pmatRing[\cork][\expand{\rdim}]\) such that
  \(\kerbas = \expand{\kerbas} \expandMat\) and all the columns of
  \(\expand{\kerbas}\) whose index is not in \(\{\quoExp_1+\cdots+\quoExp_j,
  1\le j\le \rdim\}\) have degree less than \(\degExp\).
\end{definition}

\noindent
We use the notation in this definition in all of
\cref{sec:weak_popov_to_popov:output_parlin}. More explicitly, \(\expand{\kerbas} =
[\expand{\kerbas}_1 \;\; \cdots \;\; \expand{\kerbas}_\rdim]\) where
\(\expand{\kerbas}_{j} \in \pmatRing[\cork][\quoExp_j]\) is the unique matrix
such that
\[
  \matcol{\kerbas}{j} = \expand{\kerbas}_j
  \begin{bmatrix} 1 \\ \var^\degExp \\ \vdots \\ \var^{(\quoExp_j-1)\degExp} \end{bmatrix}
\]
and the first \(\quoExp_j-1\) columns of \(\expand{\kerbas}_j\) have degree
less than \(\degExp\).

This construction originates from \cite[Sec.\,3]{Storjohann06}, where it was
designed for approximant basis computations, in order to make the shift more
balanced at the cost of a small increase of the dimension; this is stated in
\cref{lem:parlin_dims_degs}. As mentioned above, several slightly different
versions of this column partial linearization have been given in the
literature: each version requires some minor adaptations of the original
construction to match the context. Here, in order to benefit from properties
proved in \cite[Sec.\,5.1]{JeaNeiVil20}, we follow the construction in
\cite[Lem.\,5.2]{JeaNeiVil20}: one can check that
\cref{dfn:output_parlin} is a specialization of the construction in that
reference, for the shift \(-\shifts \in \ZZ_{\le 0}^\rdim\) and taking the
second parameter to be \(t=\max(-\shifts)\) so that \(\tuple{t} = -\shifts -
\max(-\shifts) + t = -\shifts\).

\begin{lemma}
  \label{lem:parlin_dims_degs}
  The entries of \(\expand{\shifts}\) are in \(\{0,1,\ldots,\degExp\}\).
  Furthermore, if \(\degExp \ge \sumTuple{\shifts}/\rdim\), then \(\rdim \le
  \expand{\rdim} \le 2\rdim\).
\end{lemma}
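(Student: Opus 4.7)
The plan is to unpack the definition of $\expand{\shifts}$ and $\expand{\rdim}$ and verify each claim directly; no heavy machinery is needed.

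For the first claim, I would simply observe that by construction each entry of $\expand{\shifts}$ is either the parameter $\degExp$ itself (which clearly lies in $\{0,1,\ldots,\degExp\}$) or one of the remainders $\remExp_j$. When $\shift_j = 0$ the definition gives $\remExp_j = 0$, and when $\shift_j \neq 0$ the definition gives $1 \le \remExp_j \le \degExp$; in all cases $\remExp_j \in \{0,1,\ldots,\degExp\}$, which settles the first claim.

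For the lower bound in the second claim, I would note that $\quoExp_j \ge 1$ for every $j$ (directly from the definition in both the zero and nonzero cases), so $\expand{\rdim} = \quoExp_1 + \cdots + \quoExp_\rdim \ge \rdim$.

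For the upper bound, the key estimate I would establish is $\quoExp_j \le \shift_j/\degExp + 1$ for every $j$: when $\shift_j = 0$ this reads $1 \le 1$, and when $\shift_j \ne 0$ it follows from the general inequality $\lceil x \rceil \le x + 1$ applied with $x = \shift_j/\degExp$. Summing over $j$ gives
\[
  \expand{\rdim} = \sum_{j=1}^{\rdim} \quoExp_j \le \frac{\sumTuple{\shifts}}{\degExp} + \rdim.
\]
Under the hypothesis $\degExp \ge \sumTuple{\shifts}/\rdim$, the first term is at most $\rdim$, yielding $\expand{\rdim} \le 2\rdim$. There is no real obstacle here; the only subtlety is making sure the degenerate case $\shift_j = 0$ (where $\quoExp_j$ is set by convention rather than by the ceiling formula) is handled in the key inequality, which it is.
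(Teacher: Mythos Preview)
Your proof is correct and follows essentially the same approach as the paper: the paper dismisses the first claim as obvious and for the second uses the identical bound $1 \le \quoExp_j \le 1 + \shift_j/\degExp$, summing to get $\rdim \le \expand{\rdim} \le \rdim + \sumTuple{\shifts}/\degExp \le 2\rdim$. You have simply spelled out a few more details (including the $\shift_j = 0$ case), but the argument is the same.
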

\begin{proof}
  The first remark is obvious. For the second one, note that \(1 \le \quoExp_j
  \le 1 + \shift_j / \degExp\) holds by construction, for \(1 \le j \le
  \rdim\).  Hence \(\rdim \le \expand{\rdim} \le \rdim +
  \sumTuple{\shifts}/\degExp \le 2\rdim\).
\end{proof}

Importantly, this column partial linearization behaves well with respect to
shifted row degrees and shifted reduced forms.

\begin{lemma}
  \label{lem:parlin_lmat}
  Let \(\kerbas \in \pmatRing[\cork][\rdim]\) with
  \(\rdeg[-\shifts]{\kerbas}\ge\shiftz\), and let \(\expand{\kerbas} \in
  \pmatRing[\cork][\expand{\rdim}]\) be its column partial linearization.
  Then, row degrees are preserved: \(\rdeg[-\expand{\shifts}]{\expand{\kerbas}}
  = \rdeg[-\shifts]{\kerbas}\). Furthermore, if \(\kerbas\) is in
  \(-\shifts\)-weak Popov form, then \(\expand{\kerbas}\) is in
  \(-\expand{\shifts}\)-weak Popov form.
\end{lemma}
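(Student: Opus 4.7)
The plan is to decompose $\kerbas$ column by column into its $\degExp$-adic expansion and to track the shifted degree of each resulting column of $\expand{\kerbas}$. Fix a row index $i$ and write $d_i$ for the $i$th entry of $\rdeg[-\shifts]{\kerbas}$; by hypothesis $d_i \ge 0$. The identity $\kerbas = \expand{\kerbas}\, \expandMat$ means that for each $j$ the entry $\kerbas_{i,j}$ admits the expansion $\kerbas_{i,j} = \sum_{k=0}^{\quoExp_j - 1} \var^{k\degExp} c_{j,k}$, where $c_{j,k}$ is the $(i,k+1)$-entry of $\expand{\kerbas}_j$, subject to $\deg(c_{j,k}) < \degExp$ for $k < \quoExp_j - 1$ while $c_{j,\quoExp_j - 1}$ is unrestricted. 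Uniqueness of the $\degExp$-adic expansion then gives $\deg(c_{j,\quoExp_j-1}) = \max(-\infty,\, \deg(\kerbas_{i,j}) - (\quoExp_j - 1)\degExp)$.

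To prove the equality $\rdeg[-\expand{\shifts}]{\expand{\kerbas}}_i = d_i$, I first bound each block's contribution from above. The columns $c_{j,0},\ldots,c_{j,\quoExp_j-2}$ pair with entries of $\expand{\shifts}$ equal to $\degExp$, so their shifted degrees are at most $-1$, and in particular strictly less than $d_i \ge 0$. For the leading block column $c_{j,\quoExp_j-1}$, whose paired shift is $\remExp_j$, the shifted degree is at most $\deg(\kerbas_{i,j}) - (\quoExp_j-1)\degExp - \remExp_j = \deg(\kerbas_{i,j}) - \shift_j \le d_i$. For the reverse inequality, pick $j$ with $\deg(\kerbas_{i,j}) - \shift_j = d_i$; then $\deg(\kerbas_{i,j}) \ge \shift_j \ge (\quoExp_j-1)\degExp$, so $\deg(c_{j,\quoExp_j - 1}) = d_i + \remExp_j$ and the shifted degree of that column equals exactly $d_i$.

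For the weak Popov claim, I will compute the $-\expand{\shifts}$-pivot index of each row $i$. Let $\pi_i$ denote the $-\shifts$-pivot index of row $i$ of $\kerbas$. The argument above shows that at block $j = \pi_i$ the leading column attains the shifted row degree $d_i$, while all middle columns stay strictly below $d_i$. For $j > \pi_i$ the weak Popov hypothesis gives $\deg(\kerbas_{i,j}) - \shift_j < d_i$, which upgrades every estimate in the previous paragraph to a strict inequality for every column of block $j$. Hence the rightmost column of $\expand{\kerbas}$ attaining the $-\expand{\shifts}$-shifted maximum in row $i$ is the last column of block $\pi_i$, located at position $\quoExp_1 + \cdots + \quoExp_{\pi_i}$. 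Since the original pivot indices are strictly increasing and each $\quoExp_j \ge 1$, so is this new sequence, establishing that $\expand{\kerbas}$ is in $-\expand{\shifts}$-weak Popov form.

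There is no serious obstacle here; the only delicate points are correctly handling the degenerate case $\shift_j = 0$ (where $\quoExp_j = 1$, $\remExp_j = 0$, and block $j$ reduces to the single column $\matcol{\kerbas}{j}$ with unchanged shift) and making sure the strict-inequality step in the weak Popov argument really uses the assumption $d_i \ge 0$ to dismiss the middle columns, which is why the hypothesis $\rdeg[-\shifts]{\kerbas} \ge \shiftz$ is essential.
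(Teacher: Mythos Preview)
Your proof is correct. The paper's own proof takes a shorter route by invoking \cite[Lem.\,5.2]{JeaNeiVil20} as a black box to read off the \(-\expand{\shifts}\)-pivot profile of \(\expand{\kerbas}\) directly, and then deduces both the row-degree identity and the weak Popov property from that profile. Your argument is the self-contained unpacking of that citation: you track the \(\degExp\)-adic expansion column by column and bound the shifted degree of each piece by hand, arriving at the same pivot position \(\quoExp_1+\cdots+\quoExp_{\pi_i}\) and the same pivot degree. The content is identical; your version trades brevity for independence from the external reference, and makes more visible exactly where the hypothesis \(\rdeg[-\shifts]{\kerbas}\ge\shiftz\) enters (to dominate the middle columns whose shifted degree is at most \(-1\), and to guarantee \(\deg(\kerbas_{i,\pi_i})\ge(\quoExp_{\pi_i}-1)\degExp\) so the top slice is nonzero).
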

\begin{proof}
  We show that this follows from the second item in
  \cite[Lem.\,5.2]{JeaNeiVil20}; as noted above, the shift \(\tuple{t} =
  (t_1,\ldots,t_\rdim)\) in that reference corresponds to \(-\shifts =
  (-\shift_1,\ldots,-\shift_\rdim)\) here. Let \((\pivInd_i,\pivDeg_i)_{1\le
  i\le\cork}\) denote the \(-\shifts\)-pivot profile of \(\kerbas\). By
  definition of the \(-\shifts\)-pivot index and degree,
  \(\rdeg[-\shifts]{\kerbas} = (\pivDeg_i - \shift_{\pivInd_i})_{1\le
  i\le\cork}\), hence our assumption \(\rdeg[-\shifts]{\kerbas}\ge\shiftz\)
  means that \(\pivDeg_i \ge \shift_{\pivInd_i}\) for \(1\le i\le\cork\). Thus,
  we can apply \cite[Lem.\,5.2]{JeaNeiVil20} to each row of \(\kerbas\),
  from which we conclude that \(\expand{\kerbas}\) has
  \(-\expand{\shifts}\)-pivot profile \((\quoExp_1+\cdots+\quoExp_{\pivInd_i},
  \pivDeg_i - \shift_{\pivInd_i} + \remExp_{\pivInd_i})_{1\le i\le \cork}\).

  First, since the entry of \(-\expand{\shifts}\) at index
  \(\quoExp_1+\cdots+\quoExp_{\pivInd_i}\) is \(-\remExp_{\pivInd_i}\), this
  implies that
  \[
    \rdeg[-\expand{\shifts}]{\expand{\kerbas}} = 
    (\pivDeg_i - \shift_{\pivInd_i} + \remExp_{\pivInd_i} - \remExp_{\pivInd_i})_{1\le i\le\cork}
    =
    (\pivDeg_i - \shift_{\pivInd_i})_{1\le i\le\cork}
    =
    \rdeg[-\shifts]{\kerbas}.
  \]
  Furthermore, if \(\kerbas\) is in \(-\shifts\)-weak Popov form, then
  \((\pivInd_i)_{1\le i\le\cork}\) is increasing, hence
  \((\quoExp_1+\cdots+\quoExp_{\pivInd_i})_{1\le i\le\cork}\) is increasing as
  well and \(\expand{\kerbas}\) is in \(-\expand{\shifts}\)-weak
  Popov form.
\end{proof}

\begin{remark}
  \label{rmk:parlin_lmat}
  One may further note the following properties:
  \begin{itemize}
    \setlength{\itemsep}{0em}
    \item Writing \(\lmat[-\shifts]{\kerbas} = [\matcol{\mat{L}}{1} \;\; \cdots
      \;\; \matcol{\mat{L}}{\rdim}] \in \matRing[k][\rdim]\), we have
      \[
        \lmat[-\expand{\shifts}]{\expand{\kerbas}} =
        [\,\underbrace{\matz \;\; \cdots \;\; \matz \;\; \matcol{\mat{L}}{1}}_{\quoExp_1}
        \;\; \cdots \;\;
        \underbrace{\matz \;\; \cdots \;\; \matz \;\; \matcol{\mat{L}}{\rdim}}_{\quoExp_\rdim}]
        \in \matRing[k][\expand{\rdim}].
      \]
    \item If \(\kerbas\) is in \(-\shifts\)-Popov form, then \(\expand{\kerbas}\)
      is in \(-\expand{\shifts}\)-Popov form.
  \end{itemize}
  These properties are not used here, but for reference we provide a proof in
  \ref{app:parlin_forms}.
\end{remark}

We will also need properties for the converse operation, going from some matrix
\(\mat{P} \in \pmatRing[\rdim][\expand{\rdim}]\) to its compression
\(\mat{P}\expandMat\).

\begin{lemma}
  \label{lem:from_parlin_owP}
  Let \(\mat{P} \in \pmatRing[\cork][\expand{\rdim}]\) have
  \(-\expand{\shifts}\)-pivot profile \((\pivInd_i,\pivDeg_i)_{1\le
  i\le\cork}\) and assume \(\pivInd_i = \quoExp_1+\cdots+\quoExp_{j_i}\) for
  some \(j_i\in\ZZp\), for \(1 \le i \le \cork\). Then,
  \(\rdeg[-\shifts]{\mat{P}\expandMat} = \rdeg[-\expand{\shifts}]{\mat{P}}\),
  and \(\mat{P}\expandMat\) has \(-\shifts\)-pivot profile
  \[
    (j_i, \pivDeg_i+(\quoExp_{j_i}-1)\degExp)_{1\le i\le \cork} = (j_i, \pivDeg_i + \shift_{j_i} - \remExp_{j_i})_{1\le i\le \cork}.
  \]
  If \(\mat{P}\) is in \(-\expand{\shifts}\)-weak Popov form, then
  \(\mat{P}\expandMat\) is in \(-\shifts\)-weak Popov form.
\end{lemma}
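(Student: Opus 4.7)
The plan is to compute, for each row $i$, the degree of every entry of $\matrow{(\mat{P}\expandMat)}{i}$ by expanding the matrix product column-by-column, and then read off the $-\shifts$-row degree and pivot profile directly.

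By the definition of $\expandMat$, for $1\le j\le \rdim$ the $j$-th column of $\mat{P}\expandMat$ is the $\var^\degExp$-adic combination
\[
  \matcol{(\mat{P}\expandMat)}{j} \;=\; \sum_{\ell=1}^{\quoExp_j} \var^{(\ell-1)\degExp}\, \matcol{\mat{P}}{\quoExp_1+\cdots+\quoExp_{j-1}+\ell}.
\]
The hypothesis $\pivInd_i = \quoExp_1+\cdots+\quoExp_{j_i}$ identifies the pivot entry of row $i$ of $\mat{P}$ as the last column of the $j_i$-th block of the expansion. Since $\expand{\shift}_{\pivInd_i} = \remExp_{j_i}$, the $-\expand{\shifts}$-row degree of row $i$ of $\mat{P}$ equals $\pivDeg_i - \remExp_{j_i}$; in particular, every entry $\mat{P}_{i,\quoExp_1+\cdots+\quoExp_{j-1}+\ell}$ has degree at most $\pivDeg_i - \remExp_{j_i} + \expand{\shift}_{\quoExp_1+\cdots+\quoExp_{j-1}+\ell}$.

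Plugging this bound into the column formula above, and separating the contributions $\ell<\quoExp_j$ (where $\expand{\shift}_{\cdot}=\degExp$) from $\ell=\quoExp_j$ (where $\expand{\shift}_{\cdot}=\remExp_j$), a direct computation using $\shift_j = (\quoExp_j-1)\degExp + \remExp_j$ yields $\deg((\mat{P}\expandMat)_{i,j}) - \shift_j \le \pivDeg_i - \remExp_{j_i}$ for all $j$. For $j=j_i$, the $\ell=\quoExp_{j_i}$ term comes from the pivot entry $\mat{P}_{i,\pivInd_i}$ of exact degree $\pivDeg_i$, so this term has degree $\pivDeg_i + (\quoExp_{j_i}-1)\degExp = \pivDeg_i + \shift_{j_i} - \remExp_{j_i}$, and (whenever $\remExp_{j_i}\ge 1$, i.e.\ whenever $\shift_{j_i}\ne 0$) this strictly dominates all other $\ell$ in the block, while the case $\shift_{j_i}=0$ is trivial. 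This proves the equality $\rdeg[-\shifts]{\matrow{(\mat{P}\expandMat)}{i}} = \pivDeg_i-\remExp_{j_i} = \rdeg[-\expand{\shifts}]{\matrow{\mat{P}}{i}}$ and gives the candidate pivot entry in column $j_i$ with the announced degree $\pivDeg_i + \shift_{j_i} - \remExp_{j_i}$.

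The key subtlety is to argue that no column $j > j_i$ also reaches the $-\shifts$-row degree, so that the pivot is truly at $j_i$. Here I would use that $\pivInd_i$ is the \emph{rightmost} column of $\mat{P}$ realizing the $-\expand{\shifts}$-row degree of row $i$, so that for every index $\pivInd'>\pivInd_i$ the inequality $\deg(\mat{P}_{i,\pivInd'}) - \expand{\shift}_{\pivInd'} < \pivDeg_i - \remExp_{j_i}$ is strict. When $j>j_i$ every column of $\mat{P}$ contributing to $(\mat{P}\expandMat)_{i,j}$ has index strictly greater than $\pivInd_i$, so the same expansion argument now gives the strict bound $\deg((\mat{P}\expandMat)_{i,j}) - \shift_j < \pivDeg_i - \remExp_{j_i}$. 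This settles the pivot profile. Finally, the weak Popov claim follows at once: if $\mat{P}$ is in $-\expand{\shifts}$-weak Popov form then $(\pivInd_i)_i$ is strictly increasing, and since $\pivInd_i = \quoExp_1+\cdots+\quoExp_{j_i}$ with $\quoExp_j\ge 1$, the map $j\mapsto \quoExp_1+\cdots+\quoExp_j$ is strictly increasing, hence $(j_i)_i$ is strictly increasing as well.
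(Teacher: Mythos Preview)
Your proof is correct and follows the same underlying idea as the paper's: determine the \(-\shifts\)-pivot profile of \(\mat{P}\expandMat\) row by row, deduce the row-degree equality, and conclude the weak Popov claim from monotonicity of \(j\mapsto\quoExp_1+\cdots+\quoExp_j\). The only difference is presentational: the paper obtains the pivot profile in one line by invoking the first item of \cite[Lem.\,5.2]{JeaNeiVil20}, whereas you carry out that computation explicitly via the \(\var^\degExp\)-adic expansion of each column of \(\mat{P}\expandMat\); your version is thus more self-contained but otherwise equivalent.
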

\begin{proof}
  The \(-\shifts\)-pivot profile of \(\mat{P}\expandMat\) is directly obtained
  by applying the first item in \cite[Lem.\,5.2]{JeaNeiVil20} to each row of
  \(\mat{P}\). From this \(-\shifts\)-pivot profile, we get
  \[
    \rdeg[-\shifts]{\mat{P}\expandMat} = (\pivDeg_i + \shift_{j_i} - \remExp_{j_i} - \shift_{j_i})_{1 \le i\le \cork}
    = (\pivDeg_i - \remExp_{j_i})_{1 \le i\le \cork}.
  \]
  On the other hand, since the entry of \(-\expand{\shifts}\) at index
  \(\quoExp_1+\cdots+\quoExp_{j_i}\) is \(-\remExp_{j_i}\), the
  \(-\expand{\shifts}\)-pivot profile of \(\mat{P}\) yields
  \(\rdeg[-\expand{\shifts}]{\mat{P}} = (\pivDeg_i - \remExp_{j_i})_{1 \le i\le
  \cork}\). Thus, \(\rdeg[-\shifts]{\mat{P}\expandMat} =
  \rdeg[-\expand{\shifts}]{\mat{P}}\). Now, if \(\mat{P}\) is in
  \(-\expand{\shifts}\)-weak Popov form, then
  \((\quoExp_1+\cdots+\quoExp_{j_i})_{1\le i\le \cork}\) is increasing, which
  implies that \((j_i)_{1\le i\le \cork}\) is increasing.  As a result,
  \(\mat{P}\expandMat\) is in \(-\shifts\)-weak Popov form, since \((j_i)_{1\le
  i\le \cork}\) is the \(-\shifts\)-pivot index of \(\mat{P}\expandMat\).
\end{proof}

Our approach for computing the kernel of \(\sys\) is based on the fact that if
\(\kerbas\) is a basis of \(\modKer{\sys}\) and \(\expand{\kerbas}\) is its
column partial linearization, then \(\kerbas \sys = \expand{\kerbas} \expandMat
\sys = \matz\). This identity shows that the kernel \(\modKer{\expandMat\sys}\)
contains the rows of \(\expand{\kerbas}\), so we may hope to recover
\(\expand{\kerbas}\), and thus \(\kerbas = \expand{\kerbas}\expandMat\), from a
basis of \(\modKer{\expandMat\sys}\); the main advantage is that the latter
basis is computed with the balanced shift \(-\expand{\shifts}\). Note
that a basis of \(\modKer{\expandMat\sys}\) does not straightforwardly yield
\(\expand{\kerbas}\), at least because this kernel also contains
\(\modKer{\expandMat}\). In \cref{lem:parlin_kerexpmat} we exhibit a basis
\(\kerExpMat\) for \(\modKer{\expandMat}\), and then in \cref{lem:to_parlin} we
show that \(\modKer{\expandMat\sys}\) is generated by \(\expand{\kerbas}\) and
\(\kerExpMat\). We also give properties which allow us, from a basis of
\(\modKer{\expandMat\sys}\), to easily recover a basis \(\kerbas\) of
\(\modKer{\sys}\) which has the sought form (see \cref{lem:from_parlin}).

\begin{lemma}
  \label{lem:parlin_kerexpmat}
  The matrix \(\kerExpMat = \diag{\kerExpMat_1,\ldots,\kerExpMat_\rdim} \in
  \pmatRing[(\expand{\rdim}-\rdim)][\expand{\rdim}]\), where
  \[
    \kerExpMat_j =
    \begin{bmatrix}
      \var^\degExp & -1  \\
                   & \ddots & \ddots \\
                   &        & \var^\degExp & -1
    \end{bmatrix}
    \in \pmatRing[(\quoExp_j-1)][\quoExp_j]
  \]
  for \(1\le j\le \rdim\), is the \(\shiftz\)-Popov basis of the kernel
  \(\modKer{\expandMat}\). Furthermore, \(\kerExpMat\) is also in
  \(-\expand{\shifts}\)-Popov form, it has \(-\expand{\shifts}\)-row degree
  \(\shiftz\), and its \(-\expand{\shifts}\)-pivot profile is
  \((\quoExp_1+\cdots+\quoExp_{j-1}+i,\degExp)_{1\le i < \quoExp_j, 1\le
  j\le\rdim}\).
\end{lemma}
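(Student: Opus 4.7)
The plan is to exploit the block-diagonal structure of $\expandMat$ to reduce every claim to a single-block computation. Since $\expandMat = \diag{\expandMat_1,\ldots,\expandMat_\rdim}$ where $\expandMat_j = \trsp{[1 \;\; \var^\degExp \;\; \cdots \;\; \var^{(\quoExp_j-1)\degExp}]} \in \pmatRing[\quoExp_j][1]$ has rank one, the kernel $\modKer{\expandMat}$ splits as the external direct sum of the $\modKer{\expandMat_j}$, each of rank $\quoExp_j-1$, for a total rank of $\sum_j(\quoExp_j-1) = \expand{\rdim}-\rdim$. Thus it suffices to prove that each block $\kerExpMat_j$ is the $\shiftz$-Popov basis of $\modKer{\expandMat_j}$, and then to re-examine the global matrix $\kerExpMat$ in the shift $-\expand{\shifts}$.

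For the basis claim at the block level, row $i$ of $\kerExpMat_j$ annihilates $\expandMat_j$ because $\var^\degExp \cdot \var^{(i-1)\degExp} - 1 \cdot \var^{i\degExp} = 0$, and the $\quoExp_j-1$ rows are linearly independent since they have distinct pivot indices $1,\ldots,\quoExp_j-1$. To see they span the whole kernel, take $\row{a} = (a_1,\ldots,a_{\quoExp_j}) \in \modKer{\expandMat_j}$, so that $\sum_{i=1}^{\quoExp_j} a_i \var^{(i-1)\degExp}=0$, and define
\[
  b_k \;=\; -\sum_{i=k+1}^{\quoExp_j} a_i \var^{(i-k-1)\degExp}, \qquad 1\le k\le \quoExp_j-1.
\]
A direct telescoping verification (using the kernel equation for $k=1$ and then inductively for each $k$) shows that $(b_1,\ldots,b_{\quoExp_j-1})\kerExpMat_j = \row{a}$. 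Since the pivots $\var^\degExp$ are monic and the only other nonzero entry in each pivot column is $-1$, of degree $0 < \degExp$, the matrix $\kerExpMat_j$ is in $\shiftz$-Popov form; assembling blockwise yields $\kerExpMat$ in $\shiftz$-Popov form as well.

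For the statements involving $-\expand{\shifts}$, I will use that any $j$ with $\quoExp_j\ge 2$ must satisfy $\shift_j > \degExp$, so by \cref{dfn:output_parlin} we have $1 \le \remExp_j \le \degExp$. Within block $j$, the shift entries are all $-\degExp$ except the last, which is $-\remExp_j$. For the row of $\kerExpMat$ corresponding to row $i$ of $\kerExpMat_j$, located at global index $\quoExp_1+\cdots+\quoExp_{j-1}+i$, the $\var^\degExp$ entry contributes shifted degree $\degExp + (-\degExp) = 0$, while the $-1$ entry contributes either $0 + (-\degExp) = -\degExp$ or $0 + (-\remExp_j) = -\remExp_j$, both strictly negative. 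Hence the $-\expand{\shifts}$-row degree of this row is $0$, and the $-\expand{\shifts}$-pivot is unambiguously the $\var^\degExp$ entry at position $\quoExp_1+\cdots+\quoExp_{j-1}+i$, of pivot degree $\degExp$. Ranging over all $j$ and $1 \le i < \quoExp_j$ yields the claimed pivot profile and the equality $\rdeg[-\expand{\shifts}]{\kerExpMat} = \shiftz$. Monicity of pivots and the column-degree condition for Popov form were already established in the $\shiftz$-case and transfer unchanged, proving that $\kerExpMat$ is in $-\expand{\shifts}$-Popov form.

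The main potential obstacle is purely bookkeeping: correctly matching row/column positions with the shift entries $-\degExp$ versus $-\remExp_j$ across the blocks. Once the block-diagonal reduction and the bound $\remExp_j \ge 1$ (for blocks of size $\ge 2$) are in hand, the content of the lemma becomes a routine inspection, with the only nontrivial step being the explicit telescoping formula for $b_k$ that certifies spanning.
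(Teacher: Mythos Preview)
Your proof is correct and largely parallels the paper's, but the spanning argument is genuinely different. The paper never writes down coefficients $b_k$; instead it observes that $\modKer{\expandMat}$ contains no nonzero vector of degree less than $\degExp$, then takes any $\shiftz$-reduced basis $\kerbas$ of $\modKer{\expandMat}$ and uses the predictable degree property to conclude that the transformation from $\kerbas$ to $\kerExpMat$ is constant, hence unimodular. Your explicit telescoping construction is more elementary and self-contained (it avoids invoking reducedness machinery), while the paper's argument is shorter and exhibits why the row degrees of $\kerExpMat$ are minimal without computing anything. For the $-\expand{\shifts}$-Popov claims the two proofs are essentially identical, both hinging on $\remExp_j \ge 1$ whenever the block is nonempty.
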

\begin{proof}
  By construction, \(\kerExpMat\expandMat = \matz\) and \(\kerExpMat\) is in
  \(\shiftz\)-Popov form. Besides, \(\kerExpMat\) has rank
  \(\expand{\rdim}-\rdim\), which is the rank of \(\modKer{\expandMat}\) since
  \(\expandMat\) has rank \(\rdim\). Now, observe that there is no nonzero
  vector of degree less than \(\degExp\) in the left kernel of the vector
  \(\trsp{[1 \;\; x^\degExp \;\; \cdots \;\; x^{(\quoExp_j-1)\degExp}]}\),
  and thus there is no nonzero vector of degree less than \(\degExp\) in
  \(\modKer{\expandMat}\). It follows that \(\kerExpMat\) is a basis of
  \(\modKer{\expandMat}\). Indeed, if \(\kerbas \in
  \pmatRing[(\expand{\rdim}-\rdim)][\expand{\rdim}] \) is a basis of
  \(\modKer{\expandMat}\) in \(\shiftz\)-reduced form, then \(\rdeg{\kerbas}
  \ge (\degExp,\ldots,\degExp)\). Since \(\kerExpMat\expandMat = \matz\), we
  have \(\kerExpMat = \mat{U} \kerbas\) for some nonsingular \(\mat{U} \in
  \pmatRing[(\expand{\rdim}-\rdim)]\). By the predictable degree property,
  \[
    (\degExp,\ldots,\degExp) = \rdeg{\kerExpMat} = \rdeg{\mat{U}\kerbas} =
    \rdeg[\rdeg{\kerbas}]{\mat{U}} \ge \rdeg{\mat{U}} +
    (\degExp,\ldots,\degExp),
  \]
  hence \(\mat{U}\) is constant. Thus \(\mat{U}\) is unimodular, and
  \(\kerExpMat\) is a basis of \(\modKer{\expandMat}\).

  Now consider \(j\) such that \(\quoExp_j>1\), and write \(\shiftt =
  (-\degExp,\ldots,-\degExp,-\remExp_j) \in \ZZ^{\quoExp_j}\). Then, the
  definition of \(\quoExp_j\) and \(\remExp_j\), notably the fact that
  \(-\remExp_j < 0\), ensures that \(\kerExpMat_j\) is in \(\shiftt\)-Popov form
  with \(\shiftt\)-pivot degree \((\degExp,\ldots,\degExp)\) and
  \(\shiftt\)-pivot index \((1,\ldots,\quoExp_j-1)\). The conclusion about
  \(\kerExpMat\) follows.
\end{proof}

\begin{lemma}
  \label{lem:to_parlin}
  Let \(\sys \in \pmatRing[\rdim][\cdim]\). The following properties hold.
  \begin{enumerate}[(i)]
    \item \label{parlin:it:cdeg} Column degrees are preserved:
      \(\cdeg[\expand{\shifts}]{\expandMat\sys} = \cdeg[\shifts]{\sys}\).
    \item \label{parlin:it:ker} The kernels of \(\sys\) and \(\expandMat\sys\)
      are related by \(\modKer{\expandMat\sys} \expandMat = \modKer{\sys}\).
    \item \label{parlin:it:basis} Let \(\kerbas \in \pmatRing[\cork][\rdim]\) be a
      basis of \(\modKer{\sys}\) (hence \(\cork = \rdim - \rank{\sys}\)), and
      let \(\expand{\kerbas} \in \pmatRing[\cork][\expand{\rdim}]\) be any matrix
      such that \(\kerbas = \expand{\kerbas}\expandMat\). Then,
      \begin{equation}
        \label{eqn:basis_parlin}
        \mat{B} =
        \begin{bmatrix}
          \;\expand{\kerbas}\; \\
          \kerExpMat
        \end{bmatrix} \in
        \pmatRing[(\cork+\expand{\rdim}-\rdim)][\expand{\rdim}]
      \end{equation}
      is a basis of \(\modKer{\expandMat\sys}\), where \(\kerExpMat\) is the
      matrix defined in \cref{lem:parlin_kerexpmat}.
    \item \label{parlin:it:forms} Assume that \(\rdeg[-\shifts]{\kerbas} \ge
      \shiftz\) and that \(\expand{\kerbas}\) is the column partial
      linearization of \(\kerbas\). Then, \(\rdeg[-\expand{\shifts}]{\mat{B}} =
      (\rdeg[-\shifts]{\kerbas},\shiftz)\), and if \(\kerbas\) is in
      \(-\shifts\)-weak Popov form, then \(\mat{B}\) is in
      \(-\expand{\shifts}\)-unordered weak Popov form.
  \end{enumerate}
\end{lemma}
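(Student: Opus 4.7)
For item~\cref{parlin:it:cdeg}, the plan is a direct calculation based on the block structure of $\expandMat$: its row with index $\quoExp_1+\cdots+\quoExp_{j-1}+i$ has a single nonzero entry $\var^{(i-1)\degExp}$ located in column $j$. Hence the corresponding row of $\expandMat\sys$ is $\var^{(i-1)\degExp}\matrow{\sys}{j}$. For a fixed column index $k$, combining these degrees with the weights $\expand{\shifts}_{\quoExp_1+\cdots+\quoExp_{j-1}+i}$, which equal $\degExp$ for $i<\quoExp_j$ and $\remExp_j$ for $i=\quoExp_j$, one verifies that the maximum over $1\le i\le\quoExp_j$ equals $(\quoExp_j-1)\degExp+\remExp_j+\deg(\sys_{j,k}) = \shift_j+\deg(\sys_{j,k})$, independently of whether $\shift_j=0$ or not. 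Taking the maximum over $j$ recovers $\cdeg[\shifts]{\sys}_k$.

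For item~\cref{parlin:it:ker}, one inclusion is immediate: if $\row{v}\expandMat\sys=\matz$, then $\row{v}\expandMat \in\modKer{\sys}$. The reverse inclusion is the key subtlety; the plan is to observe that any row vector $\row{u}\in\pmatRing[1][\rdim]$ admits a column partial linearization $\expand{\row{u}}$ with $\expand{\row{u}}\expandMat = \row{u}$, so that if $\row{u}\in\modKer{\sys}$ then $\expand{\row{u}}\expandMat\sys = \row{u}\sys=\matz$, exhibiting $\row{u}$ as $\expand{\row{u}}\expandMat$ with $\expand{\row{u}}\in\modKer{\expandMat\sys}$.

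For item~\cref{parlin:it:basis}, the rows of $\mat{B}$ lie in $\modKer{\expandMat\sys}$ by construction: $\expand{\kerbas}\expandMat\sys = \kerbas\sys=\matz$, and $\kerExpMat\expandMat=\matz$ by \cref{lem:parlin_kerexpmat}. Conversely, for $\row{v}\in\modKer{\expandMat\sys}$, the plan is to apply~\cref{parlin:it:ker} to write $\row{v}\expandMat=\row{u}\kerbas=\row{u}\expand{\kerbas}\expandMat$ for some $\row{u}$, so that $\row{v}-\row{u}\expand{\kerbas}\in\modKer{\expandMat}$, and then invoke \cref{lem:parlin_kerexpmat} to write this difference as $\row{w}\kerExpMat$; hence $\row{v}=[\row{u}\;\;\row{w}]\mat{B}$. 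Since $\expandMat$ has full column rank, $\rank(\expandMat\sys)=\rank(\sys)$, so $\modKer{\expandMat\sys}$ has rank $\expand{\rdim}-\rank(\sys) = \cork + (\expand{\rdim}-\rdim)$, matching the number of rows of $\mat{B}$. Full row rank of $\mat{B}$ follows from a short argument: any relation $\row{u}\expand{\kerbas}+\row{w}\kerExpMat=\matz$ right-multiplied by $\expandMat$ gives $\row{u}\kerbas=\matz$, hence $\row{u}=\matz$, and then $\row{w}\kerExpMat=\matz$ forces $\row{w}=\matz$ since $\kerExpMat$ has full row rank.

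Finally, for item~\cref{parlin:it:forms}, the row degree statement follows by stacking the identities $\rdeg[-\expand{\shifts}]{\expand{\kerbas}}=\rdeg[-\shifts]{\kerbas}$ from \cref{lem:parlin_lmat} and $\rdeg[-\expand{\shifts}]{\kerExpMat}=\shiftz$ from \cref{lem:parlin_kerexpmat}. For the unordered weak Popov claim, \cref{lem:parlin_lmat} gives that $\expand{\kerbas}$ is in $-\expand{\shifts}$-weak Popov form with pivot indices contained in $\{\quoExp_1+\cdots+\quoExp_j : 1\le j\le\rdim\}$, while \cref{lem:parlin_kerexpmat} gives that $\kerExpMat$ is in $-\expand{\shifts}$-Popov form with pivot indices in $\{\quoExp_1+\cdots+\quoExp_{j-1}+i : 1\le i<\quoExp_j,\ 1\le j\le\rdim\}$. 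These two sets of pivot indices are disjoint, so all $-\expand{\shifts}$-pivot indices of $\mat{B}$ are pairwise distinct, which is exactly the definition of $-\expand{\shifts}$-unordered weak Popov form. The main obstacle is item~\cref{parlin:it:basis}: correctly assembling the kernel from $\expand{\kerbas}$ and $\kerExpMat$ hinges on the lifting provided by~\cref{parlin:it:ker} and on the rank bookkeeping above.
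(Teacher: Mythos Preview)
Your proposal is correct and follows essentially the same approach as the paper's proof. The only differences are stylistic: for \cref{parlin:it:cdeg} the paper uses the slicker observation that \(\xDiag{\expand{\shifts}}\expandMat\) has column degree \(\shifts\) and contains \(\xDiag{\shifts}\) as a submatrix of rows, and for \cref{parlin:it:ker} the paper lifts \(\row{u}\in\modKer{\sys}\) via the identity rows of \(\expandMat\) rather than via the column partial linearization; your explicit rank bookkeeping in \cref{parlin:it:basis} is a harmless addition the paper omits.
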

\begin{proof}
  \ref{parlin:it:cdeg} By definition, \(\cdeg[\expand{\shifts}]{\expandMat\sys} =
  \cdeg{\xDiag{\expand{\shifts}}\expandMat\sys}\) and \(\cdeg[\shifts]{\sys} =
  \cdeg{\xDiag{\shifts}\sys}\). Since the matrix
  \[
    \xDiag{\expand{\shifts}} \expandMat = 
    \left[\begin{smallmatrix}
        X^\degExp\\ \vdots \\ X^{(\quoExp_1-1)\degExp} \\ X^{\shift_1} \\ & \ddots \\
                                                                     && X^\degExp \\
                                                                     && \vdots\\
                                                                     && X^{(\quoExp_\rdim-1)\degExp}\\
                                                                     && X^{\shift_\rdim}
    \end{smallmatrix}\right]
  \]
  has column degree \(\shifts\) and contains \(\xDiag{\shifts}\) as a subset of
  its rows, we get \(\cdeg{\xDiag{\expand{\shifts}}\expandMat\sys}  =
  \cdeg{\xDiag{\shifts} \sys}\).

  \ref{parlin:it:ker} For any vector \(\row{p} \in \modKer{\expandMat\sys}\),
  we have \(\row{p} \expandMat \sys = \matz\), which means \(\row{p}\expandMat
  \in \modKer{\sys}\). Conversely, from any \(\row{p} \in \modKer{\sys}\) we
  can construct \(\row{q}\in\pmatRing[1][\expand{\rdim}]\) such that
  \(\row{q}\expandMat = \row{p}\) since \(\expandMat\) has the identity as a
  subset of its rows; then, \(\row{q} \expandMat\sys = \row{p} \sys = \matz\),
  which means \(\row{q} \in \modKer{\expandMat\sys}\).

  \ref{parlin:it:basis} We recall that, by \cref{lem:parlin_kerexpmat},
  \(\kerExpMat\) is a basis of \(\modKer{\expandMat}\). The rows of \(\mat{B}\)
  are in \(\modKer{\expandMat\sys}\), since \(\kerExpMat \expandMat\sys =
  \matz\) and \(\expand{\kerbas} \expandMat \sys = \kerbas \sys = \matz\).
  Now, we want to prove that any \(\row{u} \in \modKer{\expandMat\sys}\) is a
  \(\polRing\)-linear combination of the rows of \(\mat{B}\). By
  \cref{parlin:it:ker}, \(\row{u}\expandMat \in \modKer{\sys}\); thus
  \(\row{u}\expandMat = \row{v} \kerbas = \row{v} \expand{\kerbas} \expandMat\)
  for some \(\row{v} \in \pmatRing[1][\cork]\).  Therefore \(\row{u} - \row{v}
  \expand{\kerbas} \in \modKer{\expandMat}\), and it follows that \(\row{u} -
  \row{v}\expand{\kerbas} = \row{w} \kerExpMat\) for some \(\row{w} \in
  \pmatRing[1][\expand{\rdim}]\). This yields \(\row{u} = [\row{v} \;\;
  \row{w}] \mat{B}\).

  \ref{parlin:it:forms} Since \(\rdeg[-\shifts]{\kerbas} \ge \shiftz\), we can
  apply \cref{lem:parlin_lmat}, which yields \(\rdeg[-\shifts]{\kerbas} =
  \rdeg[-\expand{\shifts}]{\expand{\kerbas}}\), hence
  \(\rdeg[-\expand{\shifts}]{\mat{B}} = (\rdeg[-\shifts]{\kerbas},\shiftz)\).
  Now, if \(\kerbas\) is in \(-\shifts\)-weak Popov form, then
  \cref{lem:parlin_lmat} shows that \(\expand{\kerbas}\) is in
  \(-\expand{\shifts}\)-weak Popov form and that all entries of its
  \(-\expand{\shifts}\)-pivot index are in \(\{\quoExp_1+\cdots+\quoExp_i, 1\le
  i\le\rdim\}\). Besides, by \cref{lem:parlin_kerexpmat}, \(\kerExpMat\) is in
  \(-\expand{\shifts}\)-Popov form with a \(-\expand{\shifts}\)-pivot index
  which is disjoint from \(\{\quoExp_1+\cdots+\quoExp_i, 1\le
  i\le\rdim\}\).  Hence \(\mat{B}\) is in
  \(-\expand{\shifts}\)-unordered weak Popov form.
\end{proof}

\begin{remark}
  \label{rmk:parlin_forms}
  Similarly to \cref{parlin:it:forms}, one may observe that if \(\kerbas\) is
  in \(-\shifts\)-reduced form, then \(\mat{B}\) is in
  \(-\expand{\shifts}\)-reduced form; and that if \(\kerbas\) is in
  \(-\shifts\)-Popov form, then \(\mat{B}\) is in \(-\expand{\shifts}\)-Popov
  form up to row permutation. These points will not be used here; for reference
  a proof is given in \ref{app:parlin_forms}.
\end{remark}

Finally, we combine the above results to show that one can compute a basis of
\(\modKer{\sys}\) by computing a \(-\expand{\shifts}\)-weak Popov basis
of \(\modKer{\expandMat\sys}\) and taking a submatrix of it.

\begin{lemma}
  \label{lem:from_parlin}
  Let \(\sys \in \pmatRing[\rdim][\cdim]\) and let \(\rk =
  \expand{\rdim}-\rank{\sys}\) be the rank of \(\modKer{\expandMat\sys}\).
  Assume that \(-\shifts\)-reduced bases of \(\modKer{\sys}\) have nonnegative
  \(-\shifts\)-row degree. Let \(\mat{Q} \in \pmatRing[\rk][\expand{\rdim}]\) be
  a \(-\expand{\shifts}\)-weak Popov basis of
  \(\modKer{\expandMat\sys}\).  Let \(\mat{P} \in
  \pmatRing[k][\expand{\rdim}]\) be the submatrix of the rows of \(\mat{Q}\)
  whose \(-\expand{\shifts}\)-pivot index is in \(\{\quoExp_1+\cdots+\quoExp_j,
  1\le j\le \rdim\}\). Then, \(\mat{P}\expandMat\) is a
  \(-\shifts\)-weak Popov basis of \(\modKer{\sys}\).
\end{lemma}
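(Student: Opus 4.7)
The plan is to compare $\mat{Q}$ with an explicit $-\expand{\shifts}$-weak Popov basis of $\modKer{\expandMat\sys}$ built out of a known $-\shifts$-weak Popov basis of $\modKer{\sys}$, and then exploit the invariance of the shifted pivot profile across such bases in order to identify the rows of $\mat{P}$.

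First, I would fix any $-\shifts$-weak Popov basis $\kerbas$ of $\modKer{\sys}$; by the hypothesis on $\modKer{\sys}$, it satisfies $\rdeg[-\shifts]{\kerbas} \ge \shiftz$. Letting $\expand{\kerbas}$ denote its column partial linearization, \cref{lem:to_parlin}\ref{parlin:it:basis}--\ref{parlin:it:forms} shows that the matrix $\mat{B} = \left[\begin{smallmatrix}\expand{\kerbas}\\\kerExpMat\end{smallmatrix}\right]$ is a basis of $\modKer{\expandMat\sys}$ in $-\expand{\shifts}$-unordered weak Popov form, with $\rdeg[-\expand{\shifts}]{\mat{B}} = (\rdeg[-\shifts]{\kerbas},\shiftz)$. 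Since $\mat{Q}$ is also a $-\expand{\shifts}$-weak Popov basis of the same module, \cref{lem:popov_pivot} ensures that $\mat{Q}$ and $\mat{B}$ share a common $-\expand{\shifts}$-pivot profile up to a permutation of rows.

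Next, I would identify the selected rows using disjointness of pivot index sets. Let $S = \{\quoExp_1+\cdots+\quoExp_j,\ 1\le j\le\rdim\}$. The proof of \cref{lem:parlin_lmat} shows that the $-\expand{\shifts}$-pivot indices of $\expand{\kerbas}$ are all of the form $\quoExp_1+\cdots+\quoExp_{\pivInd_i}$ and thus lie in $S$, while \cref{lem:parlin_kerexpmat} shows that those of $\kerExpMat$ form exactly the complement $\{1,\ldots,\expand{\rdim}\}\setminus S$. Combined with the pivot-profile correspondence from the previous step, this yields a bijection (via a permutation) between the rows of $\mat{P}$ and those of $\expand{\kerbas}$, preserving the $-\expand{\shifts}$-pivot profile and hence the $-\expand{\shifts}$-row degree. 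In particular $\mat{P}$ has exactly $\rdim-\rank{\sys}$ rows, which is the rank of $\modKer{\sys}$.

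Finally, since every $-\expand{\shifts}$-pivot index of $\mat{P}$ lies in $S$, \cref{lem:from_parlin_owP} applies and gives that $\mat{P}\expandMat$ is in $-\shifts$-weak Popov form with $\rdeg[-\shifts]{\mat{P}\expandMat} = \rdeg[-\expand{\shifts}]{\mat{P}}$; by the preceding step and \cref{lem:parlin_lmat}, this equals $\rdeg[-\expand{\shifts}]{\expand{\kerbas}} = \rdeg[-\shifts]{\kerbas}$ up to permutation. The rows of $\mat{P}\expandMat$ lie in $\modKer{\sys}$ because $(\mat{P}\expandMat)\sys = \mat{P}(\expandMat\sys) = \matz$, and $\mat{P}\expandMat$ has full row rank as a weak Popov matrix; applying \cref{lem:minimality_implies_basis} to $\mathcal{M}=\modKer{\sys}$, with $\kerbas$ as the $-\shifts$-reduced reference basis and $\mat{P}\expandMat$ as a candidate whose sorted $-\shifts$-row degree matches that of $\kerbas$, concludes that $\mat{P}\expandMat$ is a basis of $\modKer{\sys}$. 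The main bookkeeping obstacle is precisely this row-matching argument: it is what turns the selection rule defining $\mat{P}$ from an opaque condition on $\mat{Q}$ into the extraction of a bona fide kernel basis after compression by $\expandMat$.
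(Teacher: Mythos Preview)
Your proof is correct and follows essentially the same route as the paper: build the reference basis $\mat{B}=\left[\begin{smallmatrix}\expand{\kerbas}\\\kerExpMat\end{smallmatrix}\right]$ from a shifted (weak) Popov basis $\kerbas$ of $\modKer{\sys}$, use \cref{lem:popov_pivot} to match the $-\expand{\shifts}$-pivot profiles of $\mat{Q}$ and $\mat{B}$, deduce that $\mat{P}$ and $\expand{\kerbas}$ share the same pivot profile (hence row degree), then apply \cref{lem:from_parlin_owP} and conclude via \cref{lem:minimality_implies_basis}. The only cosmetic differences are that the paper takes $\kerbas$ to be the $-\shifts$-Popov basis (rather than any weak Popov basis) and separates out the row-count argument first, but neither affects the substance.
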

\begin{proof}
  We first prove that the number of rows of \(\mat{P}\expandMat\) is the rank
  of \(\modKer{\sys}\), that is, \(k=\rdim-\rank{\sys}\). Indeed, by
  \cref{parlin:it:forms} of \cref{lem:to_parlin}, the
  \(-\expand{\shifts}\)-pivot index of the \(-\expand{\shifts}\)-Popov basis of
  \(\modKer{\expandMat\sys}\) contains the \(-\expand{\shifts}\)-pivot index of
  \(\kerExpMat\). By \cref{lem:parlin_kerexpmat}, the latter is the tuple
  formed by the integers in the set \(\{1,\ldots,\expand{\rdim}\} \setminus
  \{\quoExp_1+\cdots+\quoExp_j, 1\le j\le \rdim\}\) sorted in increasing order.
  Since \(\mat{P}\) is the submatrix of the rows of \(\mat{Q}\) whose
  \(-\expand{\shifts}\)-pivot index is not in this set, \(\mat{P}\) has \(k=\rk
  - (\expand{\rdim} - \rdim) = \rdim - \rank{\sys}\) rows.

  Now, by construction, \(\mat{P}\) is in \(-\expand{\shifts}\)-weak
  Popov form and its \(-\expand{\shifts}\)-pivot index has entries in
  \(\{\quoExp_1+\cdots+\quoExp_i, 1\le i\le\rdim\}\). Thus we can apply
  \cref{lem:from_parlin_owP}; it ensures that \(\mat{P}\expandMat\) is in
  \(-\shifts\)-weak Popov form and that
  \(\rdeg[-\shifts]{\mat{P}\expandMat} = \rdeg[-\expand{\shifts}]{\mat{P}}\).

  It remains to prove that \(\mat{P}\expandMat\) is a basis of
  \(\modKer{\sys}\). Let \(\kerbas \in \matRing[\cork][\rdim]\) be the
  \(-\shifts\)-Popov basis of \(\modKer{\sys}\), and let \(\expand{\kerbas} \in
  \matRing[\cork][\expand{\rdim}]\) be its column partial linearization. Let
  \(\tuple{d} = \rdeg[-\shifts]{\kerbas}\), which has nonnegative entries by
  assumption. Then, according to \cref{lem:parlin_lmat}, \(\expand{\kerbas}\)
  is in \(-\expand{\shifts}\)-weak Popov form, and
  \(\rdeg[-\expand{\shifts}]{\expand{\kerbas}} = \tuple{d}\). Then, we define
  \(\mat{B} \in \matRing[(\cork+\expand{\rdim}-\rdim)][\expand{\rdim}]\) as in
  \cref{eqn:basis_parlin}; by \cref{parlin:it:forms} of \cref{lem:to_parlin},
  the matrix \(\mat{B}\) is a \(-\expand{\shifts}\)-unordered weak Popov basis of
  \(\modKer{\expandMat\sys}\).

  Then, \cref{lem:popov_pivot} shows that \(\mat{Q}\) has the same
  \(-\expand{\shifts}\)-pivot profile as the row permutation of \(\mat{B}\)
  which is in \(-\expand{\shifts}\)-weak Popov form. Since \(\mat{P}\)
  (resp.~\(\expand{\kerbas}\)) is the submatrix of the rows of \(\mat{Q}\)
  (resp.~\(\mat{B}\)) whose \(-\expand{\shifts}\)-pivot index is in
  \(\{\quoExp_1+\cdots+\quoExp_j, 1\le j\le \rdim\}\), and since both
  \(\mat{P}\) and \(\expand{\kerbas}\) are in \(-\expand{\shifts}\)-weak Popov
  form, it follows that \(\mat{P}\) and \(\expand{\kerbas}\) have
  the same \(-\expand{\shifts}\)-pivot profile. In particular, we have
  \(\rdeg[-\expand{\shifts}]{\mat{P}} =
  \rdeg[-\expand{\shifts}]{\expand{\kerbas}} = \tuple{d}\), from which we get
  \(\rdeg[-\shifts]{\mat{P}\expandMat} = \tuple{d}\). According to
  \cref{lem:minimality_implies_basis}, since the rows of \(\mat{P} \expandMat\)
  are in \(\modKer{\sys}\), this implies that \(\mat{P}\expandMat\) is a basis
  of \(\modKer{\sys}\).
\end{proof}

\subsection{Reducing to the case of a balanced order}
\label{sec:weak_popov_to_popov:ovlp_parlin}

Now we apply the overlapping partial linearization of
\cite[Sec.\,2]{Storjohann06}, more precisely the version in
\cite[Sec.\,5.2]{JeaNeiVil20} which supports arbitrary \(\orders\) as showed in
the definition below that we recall for completeness. In the next lemma, we
will show that the sought kernel basis can be retrieved as a submatrix of an
approximant basis for the transformed problem.

\begin{definition}[\cite{Storjohann06,JeaNeiVil20}]
  \label{dfn:ovlplin}
  Let $\orders=(\order_1,\ldots,\order_\cdim) \in \ZZp^\cdim$, let $\sys \in
  \pmatRing[\rdim][\cdim]$ with $\cdeg{\sys} < \orders$, and let
  $\mu\in\ZZp$. Then, for $1\le i\le \cdim$, let $\order_i = \quoExp_i
  \mu + \remExp_i$ with $\quoExp_i = \left\lceil \frac{\order_i}{\mu} -
  1 \right\rceil$ and $1\le \remExp_i\le \mu$.
  Let also $\expand{\cdim} = \max(\quoExp_1-1,0) + \cdots +
  \max(\quoExp_\cdim-1,0)$, and define
  \[
    \mathcal{L}_\mu(\orders) =
    (\expand{\order}_1,\ldots,\expand{\order}_\cdim) \in \ZZp^{\cdim+\expand{\cdim}},
  \]
  where $\expand{\order}_i = (2\mu,\ldots,2\mu,\mu+\remExp_i) \in
  \ZZp^{\quoExp_i}$ if $\quoExp_i > 1$ and $\expand{\order}_i = \order_i$
  otherwise. Considering the $i$th column of $\sys$, we write its
  $\var^\mu$-adic representation as
  \begin{align*}
    \matcol{\sys}{i} & = \matcol{\sys}{i}^{(0)} + \matcol{\sys}{i}^{(1)}
    \var^{\mu} + \cdots + \matcol{\sys}{i}^{(\quoExp_i)} \var^{\quoExp_i\mu}
     \\
    & \quad\text{where }\, \cdeg{[\matcol{\sys}{i}^{(0)} \;\; \matcol{\sys}{i}^{(1)} \;\; \cdots \;\; \matcol{\sys}{i}^{(\quoExp_i)}]} < (\mu,\ldots,\mu,\remExp_i) .
  \end{align*}
  If $\quoExp_i> 1$, we define
  \[
    \matcol{\expand{\sys}}{i} = 
    \begin{bmatrix}
      \matcol{\sys}{i}^{(0)} + \matcol{\sys}{i}^{(1)}\var^{\mu} \,\;&\,\; \matcol{\sys}{i}^{(1)} + \matcol{\sys}{i}^{(2)}\var^{\mu} \,\;&\,\; \cdots \,\;&\,\; \matcol{\sys}{i}^{(\quoExp_i-1)} + \matcol{\sys}{i}^{(\quoExp_i)}\var^{\mu} \\
    \end{bmatrix}
    \in \pmatRing[\rdim][\quoExp_i]
  \]
  and $\mat{J}_i = [\matz \;\; \idMat{\quoExp_i-1}] \in
  \pmatRing[(\quoExp_i-1)][\quoExp_i]$, and otherwise we let
  $\matcol{\expand{\sys}}{i} = \matcol{\sys}{i}$ and $\mat{J}_i \in
  \pmatRing[0][1]$. Then,
  \[
    \mathcal{L}_{\orders,\mu}(\sys) =
    \begin{bmatrix}
      \matcol{\expand{\sys}}{1} & \matcol{\expand{\sys}}{2} & \cdots & \matcol{\expand{\sys}}{\cdim} \\
      \mat{J}_1 &  & &  \\
       & \mat{J}_2 & &  \\
       &  & \ddots &  \\
       &  &  & \mat{J}_\cdim
    \end{bmatrix}
    \in \pmatRing[(\rdim+\expand{\cdim})][(\cdim+\expand{\cdim})]
  \]
  is called the \emph{overlapping linearization} of $\sys$ with respect to
  $\orders$ and $\mu$.
\end{definition}

\begin{lemma}
  \label{lem:ovlp_lin}
  Let \(\sys \in \pmatRing[\rdim][\cdim]\), and let \(\shifts \in \NN^\rdim\)
  be such that there exists a basis of \(\modKer{\sys}\) with nonpositive
  \(-\shifts\)-row degree. Let \(\mu\in\ZZp\) with \(\mu > \max(\shifts)\),
  \(\shiftt = (\shifts,\mu-1,\ldots,\mu-1) \in \NN^{\rdim+\expand{\cdim}}\),
  and \(\orders \in \ZZp^\cdim\) with \(\orders \ge \cdeg[\shifts]{\sys}+1\).
  Let \(\appbas \in \pmatRing[(\rdim+\expand{\cdim})]\) be a
  \(-\shiftt\)-reduced basis of
  \(\modApp{\mathcal{L}_\mu(\orders)}{\mathcal{L}_{\orders,\mu}(\sys)}\).
  Then, exactly \(k = \rdim-\rank{\sys}\) rows of \(\appbas\) have nonpositive
  \(-\shiftt\)-degree, and the first \(\rdim\) columns of these rows form a
  matrix \(\kerbas \in \pmatRing[k][\rdim]\) which is a \(-\shifts\)-reduced
  basis of \(\modKer{\sys}\). Besides, if \(\appbas\) is in \(-\shiftt\)-weak
  Popov form, then \(\kerbas\) is in \(-\shifts\)-weak Popov form.
\end{lemma}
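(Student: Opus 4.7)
The plan is to follow the overlapping partial linearization framework of \cite{Storjohann06,JeaNeiVil20}. The engine of the proof will be a degree-preserving bijection between the ``small-\(-\shiftt\)-degree'' part of the approximant module \(\modApp{\mathcal{L}_\mu(\orders)}{\mathcal{L}_{\orders,\mu}(\sys)}\) and the kernel \(\modKer{\sys}\); once this is set up, the three claims of the lemma follow mechanically.

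I would first establish the central claim: the projection \([\row{p} \;\; \row{q}] \mapsto \row{p}\) restricts to a bijection between \(\{[\row{p} \;\; \row{q}] \in \modApp{\mathcal{L}_\mu(\orders)}{\mathcal{L}_{\orders,\mu}(\sys)} : \rdeg[-\shiftt]{[\row{p} \;\; \row{q}]} \le 0\}\) and \(\{\row{p} \in \modKer{\sys} : \rdeg[-\shifts]{\row{p}} \le 0\}\). Given \(\row{p} \in \modKer{\sys}\) with \(\rdeg{\row{p}} \le \shifts\), the identity \(\sum_\ell \row{p}\matcol{\sys}{i}^{(\ell)} \var^{\ell\mu} = 0\) forces cancellations between adjacent \(\var^\mu\)-adic slices, and the column-\(i\) approximant conditions for \(\mathcal{L}_{\orders,\mu}(\sys)\) will then uniquely determine a block \(\row{q}_i\) whose entries turn out to have degree at most \(\max(\shifts) - 1 < \mu - 1\); every entry of \(\row{q}\) will therefore have strictly negative \(-\shiftt\)-degree. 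Conversely, for any \([\row{p} \;\; \row{q}]\) in the approximant module with \(-\shiftt\)-degree at most \(0\), I would telescope the column-wise approximant constraints to \(\row{p}\matcol{\sys}{i} \equiv 0 \bmod \var^{\order_i}\), then combine with \(\deg(\row{p}\matcol{\sys}{i}) \le \cdeg[\shifts]{\sys}_i < \order_i\) (from \(\rdeg{\row{p}} \le \shifts\) and the hypothesis \(\orders \ge \cdeg[\shifts]{\sys}+1\)) to conclude \(\row{p}\matcol{\sys}{i} = 0\). A byproduct of these computations would be that the bijection preserves row degrees, \(\rdeg[-\shifts]{\row{p}} = \rdeg[-\shiftt]{[\row{p} \;\; \row{q}]}\), and that whenever the row degree is attained, the \(-\shiftt\)-pivot of \([\row{p} \;\; \row{q}]\) lies in its first \(\rdim\) columns.

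With the bijection in hand, the remaining claims follow quickly. The hypothesis on \(\shifts\) gives a \(-\shifts\)-reduced basis \(\mat{K}^{\star}\) of \(\modKer{\sys}\) of rank \(k = \rdim - \rank{\sys}\) with \(\rdeg[-\shifts]{\mat{K}^{\star}} \le \shiftz\); lifting its rows yields \(k\) linearly independent elements of the approximant module with \(-\shiftt\)-degree at most \(\shiftz\). Since the multiset of \(-\shiftt\)-row degrees is invariant across \(-\shiftt\)-reduced bases, \(\appbas\) must have exactly \(k\) rows of nonpositive \(-\shiftt\)-degree, matching those of the lifted \(\mat{K}^{\star}\) up to permutation. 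Let \(\kerbas\) be the first-\(\rdim\)-column projection of these rows; by the bijection its rows lie in \(\modKer{\sys}\) and \(\rdeg[-\shifts]{\kerbas}\) coincides (up to permutation) with \(\rdeg[-\shifts]{\mat{K}^{\star}}\), so \cref{lem:minimality_implies_basis} (applied to the row space \(\modKer{\sys}\), the basis \(\mat{K}^{\star}\), and the candidate \(\kerbas\)) yields that \(\kerbas\) is a \(-\shifts\)-reduced basis of \(\modKer{\sys}\). If moreover \(\appbas\) is in \(-\shiftt\)-weak Popov form, the pivot localization from the bijection step ensures each selected row's \(-\shiftt\)-pivot is among its first \(\rdim\) entries, so the \(-\shifts\)-pivot indices of the rows of \(\kerbas\) coincide with the corresponding \(-\shiftt\)-pivot indices of the selected rows of \(\appbas\) and are therefore strictly increasing, proving that \(\kerbas\) is in \(-\shifts\)-weak Popov form.

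The main obstacle will be the bijection step, specifically the two \(\var^\mu\)-adic computations: (i) the explicit carry-propagation construction of \(\row{q}\) from \(\row{p}\), producing the sharp bound \(\deg(\row{q}_i) \le \max(\shifts) - 1\), which requires \(\mu > \max(\shifts)\) crucially and underpins the pivot argument in the weak Popov step; and (ii) the telescoping reducing the \(\quoExp_i\) column-\(i\) approximant constraints of \(\mathcal{L}_{\orders,\mu}(\sys)\) to the single congruence \(\row{p}\matcol{\sys}{i} \equiv 0 \bmod \var^{\order_i}\). Both pieces can be adapted from \cite[Sec.\,5.2]{JeaNeiVil20}, but the combination of a nonuniform shift \(\shifts\) with an arbitrary order \(\orders\) calls for a careful rewriting of the arguments of that reference.
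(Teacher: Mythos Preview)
Your approach closely parallels the paper's: both rely on the two directions of \cite[Lem.\,5.6]{JeaNeiVil20} (which you package as a bijection), both lift the assumed kernel basis into the linearized approximant module, and both exploit the localization of \(-\shiftt\)-pivots in the first \(\rdim\) columns. Your endgame via \cref{lem:minimality_implies_basis} differs from the paper's direct unimodular-equivalence argument (writing \(\mat{P} = \mat{V}\kerbas\) and \(\kerbas = \mat{W}\mat{P}\) to force \(\mat{V}\mat{W} = \idMat{k}\)), but both routes are valid.

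There is, however, a genuine gap in your deduction that \(\appbas\) has \emph{exactly} \(k\) rows of nonpositive \(-\shiftt\)-degree. The invariance of the \(-\shiftt\)-row-degree multiset holds only across \(-\shiftt\)-reduced \emph{bases} of the same module, but the lifted \(\mat{K}^{\star}\) has only \(k\) rows and is not a basis of the approximant module (whose rank is \(\rdim+\expand{\cdim}\)); there is no second reduced basis against which to invoke invariance. The lifting, combined with minimality of reduced bases, yields only \emph{at least} \(k\) such rows. For the upper bound---and simultaneously for the full-row-rank hypothesis required by \cref{lem:minimality_implies_basis}---you must actually use your pivot-localization byproduct: since \(\lmat[-\shiftt]{\appbas}\) is invertible and every selected row has its \(-\shiftt\)-pivot among its first \(\rdim\) entries, one obtains \(\lmat[-\shiftt]{[\kerbas \;\; \mat{Q}]} = [\lmat[-\shifts]{\kerbas} \;\; \matz]\), whence \(\lmat[-\shifts]{\kerbas}\) has full row rank. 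Then \(\kerbas\) is \(-\shifts\)-reduced with rank equal to its row count \(k'\); since its rows lie in \(\modKer{\sys}\) (rank \(k\)) this forces \(k' \le k\), hence \(k'=k\). Finally, minimality gives sorted \(\rdeg[-\shifts]{\kerbas} \le\) sorted \(\rdeg[-\shifts]{\mat{K}^{\star}}\), which is what \cref{lem:minimality_implies_basis} actually requires; your ``coincides up to permutation'' is that lemma's \emph{conclusion}, not its premise.
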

\begin{proof}
  Let \([\kerbas \;\; \mat{Q}]\) be the submatrix of \(\appbas\) formed by its
  rows of nonpositive \(-\shiftt\)-degree, where \(\kerbas \in
  \pmatRing[k][\rdim]\) and \(\mat{Q} \in \pmatRing[k][\expand{\cdim}]\); we
  have \(0 \le k \le \rdim+\expand{\cdim}\).  By choice of \(\shiftt\), from
  \(\rdeg[-\shiftt]{[\kerbas \;\; \mat{Q}]} \le \shiftz\) we get
  \(\deg(\mat{Q}) < \mu\) and \(\rdeg[-\shifts]{\kerbas} \le \shiftz\).

  In particular, \(\deg(\kerbas) \le \max(\shifts) < \mu\): applying the second
  item in \cite[Lem.\,5.6]{JeaNeiVil20} to each row of \([\kerbas \;\;
  \mat{Q}]\) shows that \(\rdeg{\mat{Q}} < \rdeg{\kerbas}\) and that the rows
  of \(\kerbas\) are in \(\modApp{\orders}{\sys}\), that is, \(\kerbas \sys =
  \matz \bmod \xDiag{\orders}\). On the other hand, \(\cdeg{\kerbas} \le
  \shifts\) implies that \(\cdeg{\kerbas \sys} \le \cdeg[\shifts]{\sys} <
  \orders\), hence \(\kerbas\sys=\matz\), i.e.~the rows of \(\kerbas\) are in
  \(\modKer{\sys}\). This implies that the rank of \(\kerbas\) is at most the
  rank of the module \(\modKer{\sys}\), i.e.~\(\rank{\kerbas} \le \rdim-r\)
  where \(r=\rank{\sys}\).

  Furthermore, from \(\rdeg{\mat{Q}} < \rdeg{\kerbas}\) and \(\max(\shifts) <
  \mu\) we obtain
  \[
    \rdeg[-\shifts]{\kerbas} \ge \rdeg{\kerbas} - \max(\shifts)
    > \rdeg{\mat{Q}} - \mu + 1 = \rdeg[(-\mu+1,\ldots,-\mu+1)]{\mat{Q}},
  \]
  hence by choice of \(\shiftt\) we have \(\lmat[-\shiftt]{[\kerbas \;\;
  \mat{Q}]} = [\lmat[-\shifts]{\kerbas} \;\; \shiftz]\). Since this matrix is a
  subset of the rows of the nonsingular matrix \(\lmat[-\shiftt]{\appbas}\), it
  has full row rank, and thus \(\lmat[-\shifts]{\kerbas}\) has full row rank.
  This shows that \(\kerbas\) is in \(-\shifts\)-reduced form, and that \(k =
  \rank{\kerbas}\). If \(\appbas\) is furthermore in \(-\shiftt\)-weak
  Popov form, then \([\kerbas \;\; \mat{Q}]\) is in \(-\shiftt\)-weak
  Popov form as well; then, the identity \(\lmat[-\shiftt]{[\kerbas \;\;
  \mat{Q}]} = [\lmat[-\shifts]{\kerbas} \;\; \shiftz]\) shows that the
  \(-\shiftt\)-pivot entries of \([\kerbas \;\; \mat{Q}]\) are all located in
  \(\kerbas\), hence \(\kerbas\) is in \(-\shifts\)-weak Popov form.

  It remains to prove that \(k=\rdim-r\) and that the rows of \(\kerbas\)
  generate \(\modKer{\sys}\).

  By assumption, there exists a basis \(\mat{P} \in
  \pmatRing[(\rdim-r)][\rdim]\) of \(\modKer{\sys}\) such that \(\cdeg{\mat{P}}
  \le \shifts\). In particular, the rows of \(\mat{P}\) are in
  \(\modApp{\orders}{\sys}\), and applying the first item of
  \cite[Lem.\,5.6]{JeaNeiVil20} to each of these rows shows that there
  exists a matrix \(\mat{R} \in \pmatRing[(\rdim-r)][\expand{\cdim}]\) with
  \(\rdeg{\mat{R}} < \rdeg{\mat{P}}\) and such that the rows of \([\mat{P} \;\;
  \mat{R}]\) are in
  \(\modApp{\mathcal{L}_\mu(\orders)}{\mathcal{L}_{\orders,\mu}(\sys)}\).
  Thus, \([\mat{P} \;\; \mat{R}]\) is a left multiple of \(\appbas\).

  A key remark now is that \(\rdeg[-\shiftt]{[\mat{P} \;\; \mat{R}]} \le
  \shiftz\). Indeed, by \cref{lem:rdegcdeg} \(\rdeg[-\shifts]{\mat{P}} \le
  \shiftz\) follows from \(\cdeg{\mat{P}} \le \shifts\), and we have
  \[
    \deg(\mat{R}) < \deg(\mat{P}) = \max(\cdeg{\mat{P}}) \le \max(\shifts) < \mu.
  \] 
  Thus, since \(\appbas\) is \(-\shiftt\)-reduced, the predictable degree
  property ensures that \([\mat{P} \;\; \mat{R}]\) is a left multiple of
  \(\appbas\) which does not involve the rows of \(\appbas\) of positive
  \(-\shiftt\)-degree, i.e.~a left multiple of \([\kerbas \;\; \mat{Q}]\). In
  particular, \(\mat{P}\) is a left multiple of \(\kerbas\): there exists a
  matrix \(\mat{V} \in \pmatRing[(\rdim-r)][k]\) such that \(\mat{V} \kerbas =
  \mat{P}\). Since \(\mat{P}\) has rank \(\rdim-r\), we obtain \(k \ge
  \rdim-r\), hence \(k=\rdim-r\).  On the other hand, since the rows of
  \(\kerbas\) are in \(\modKer{\sys}\), there exists a matrix \(\mat{W} \in
  \pmatRing[k][k]\) such that \(\mat{W}\mat{P} = \kerbas\). It follows that
  \(\mat{P} = \mat{V}\mat{W} \mat{P}\), and since \(\mat{P}\) has full row rank
  this implies \(\mat{V}\mat{W} = \idMat{k}\). This means that \(\mat{P}\) and
  \(\kerbas\) are left unimodularly equivalent, hence \(\kerbas\) is a basis of
  \(\modKer{\sys}\), which concludes the proof.
\end{proof}

\subsection{Computing kernel bases with known pivot degree}
\label{sec:weak_popov_to_popov:algo}

After applying the transformations presented in
\cref{sec:weak_popov_to_popov:output_parlin,sec:weak_popov_to_popov:ovlp_parlin},
we are left with the computation of an approximant basis for a balanced order
\(\mathcal{L}_\mu(\orders)\) and a balanced shift \(-\expand{\shifts}\): this
is done efficiently by PM-Basis, designed in \cite{GiJeVi03} as an improvement of
\cite[Algo.~SPHPS]{BecLab94}. Here, we use the version in
\cite[Algo.\,2]{JeaNeiVil20} which ensures that the output basis is in
\(-\expand{\shifts}\)-weak Popov form.

\begin{algorithm}
  \caption{\algoname{KnownDegreeKernelBasis}$(\sys,\shifts)$}
  \label{algo:knowndegker}

  \begin{algorithmic}[1]
  \Require{a matrix \(\sys \in \pmatRing[\rdim][\cdim]\), and a nonnegative shift \(\shifts \in \NN^\rdim\).}
  \Assume{\(-\shifts\)-reduced bases of \(\modKer{\sys}\) have \(-\shifts\)-row degree \(\shiftz\).}
  \Ensure{a \(-\shifts\)-weak Popov basis of \(\modKer{\sys}\).}

  \State \CommentLine{Step 1: Output column partial linearization}
      \label{bounddegker:pl} 

  \State \(\degExp \assign \lceil \degdet / \rdim \rceil \in \ZZp\), where \(\degdet = \max(\sumTuple{\shifts},\sumTuple{\cdeg[\shifts]{\sys}},1)\)

  \State Apply \cref{dfn:output_parlin} to \((\shifts,\degExp)\) to obtain the parameters and expansion-compression matrix:
    \((\quoExp_1,\ldots,\quoExp_\rdim) \in \ZZp^\rdim\), \(\expand{\rdim} \in \ZZp\),
    \(\expand{\shifts}\in\NN^{\expand{\rdim}}\),
    \(\expandMat \in \pmatRing[\expand{\rdim}][\rdim]\)

  \State \CommentLine{Step 2: Overlapping partial linearization}
      \label{bounddegker:ovlp}

  \State \(\orders \assign \cdeg[\shifts]{\sys}+1 \in \ZZp^\cdim\) \Comment{order for approximation, equal to \(\cdeg[\expand{\shifts}]{\expandMat\sys}+1\)}

  \State Apply \cref{dfn:ovlplin} to \((\orders, \expandMat\sys, \degExp+1)\)
  to obtain the order \(\mathcal{L}_{\degExp+1}(\orders) \in
  \ZZp^{\cdim+\expand{\cdim}}\) and the matrix
  \(\mathcal{L}_{\orders,\degExp+1}(\expandMat\sys) \in
  \pmatRing[(\expand{\rdim}+\expand{\cdim})][(\cdim+\expand{\cdim})]\)

  \State \CommentLine{Step 3: Compute \(-\shiftt\)-weak Popov basis of \(\modApp{\mathcal{L}_{\degExp+1}(\orders)}{\mathcal{L}_{\orders,\degExp+1}(\expandMat\sys)}\)}

  \State \(\shiftt \assign (\expand{\shifts},\degExp,\ldots,\degExp) \in \NN^{\expand{\rdim}+\expand{\cdim}}\)

  \State \(\Gamma \assign \max(\mathcal{L}_{\degExp+1}(\orders))\);
        \(\mat{G} = \mathcal{L}_{\orders,\degExp+1}(\expandMat\sys) \xDiag{(\Gamma,\ldots,\Gamma) - \mathcal{L}_{\degExp+1}(\orders)}\)
        \Comment{use uniform order \((\Gamma,\ldots,\Gamma)\)}

  \State
    \(\appbas \in \pmatRing[(\expand{\rdim}+\expand{\cdim})] \assign \Call{PM-Basis}{\Gamma,\mat{G},-\shiftt}\)
    \label{bounddegker:pmbasis}

  \State \CommentLine{Step 4: Deduce first \(-\expand{\shifts}\)-weak Popov basis of \(\modKer{\expandMat\sys}\), then \(-\shifts\)-weak Popov basis of \(\modKer{\sys}\)}

  \State \(\mat{Q} \in \pmatRing[\expand{k}][\expand{\rdim}] \assign\) first
    \(\expand{\rdim}\) columns of the rows of \(\appbas\)
    which have nonpositive \(-\shiftt\)-degree 
       \label{bounddegker:extract} 

  \State \(\mat{P} \in \pmatRing[k][\expand{\rdim}] \assign\) the rows of
    \(\mat{Q}\) whose \(-\expand{\shifts}\)-pivot index is in
    \(\{\quoExp_1+\cdots+\quoExp_j, 1\le j\le \rdim\}\)
       \label{bounddegker:reextract} 

  \State \Return \(\mat{P} \expandMat\)
      \label{bounddegker:return}
  \end{algorithmic}
\end{algorithm}

\begin{proposition}
  \label{prop:algo:knowndegker}
  \cref{algo:knowndegker} is correct. Let \(\degdet =
  \max(\sumTuple{\shifts},\sumTuple{\cdeg[\shifts]{\sys}},1)\). Then, assuming
  \(\rdim\ge\cdim\) and using notation from the algorithm, its cost is bounded
  by the sum of:
  \begin{itemize}
    \item the cost of performing \algoname{PM-Basis} at order at most \(2
      \lceil \degdet/\rdim \rceil + 2\) on an input matrix of row dimension
      \(\expand{\rdim}+\expand{\cdim} \le 3\rdim\) and column dimension
      \(\cdim+\expand{\cdim} \le 2\rdim\);
    \item \(\bigO{\rdim^2}\) extra operations in \(\field\).
  \end{itemize}
  Thus, \cref{algo:knowndegker} uses \(\bigO{\rdim^\expmm \timegcd{\degdet/\rdim}}\) operations in \(\field\).
\end{proposition}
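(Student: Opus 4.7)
The plan is to verify correctness by chaining the lemmas of this section in the order in which the algorithm applies them, and then to bound the cost by showing that the single call to \algoname{PM-Basis} dominates while its inputs are controlled by the target parameters.

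For correctness, I would first exhibit a basis of \(\modKer{\expandMat\sys}\) witnessing the hypothesis of \cref{lem:ovlp_lin}. The requirement on \(\sys\) supplies a \(-\shifts\)-Popov basis \(\kerbas\) of \(\modKer{\sys}\) with \(\rdeg[-\shifts]{\kerbas}=\shiftz\), in particular nonnegative. Applying \cref{lem:parlin_lmat} to its column partial linearization \(\expand{\kerbas}\) yields \(\rdeg[-\expand{\shifts}]{\expand{\kerbas}}=\shiftz\) and shows \(\expand{\kerbas}\) is in \(-\expand{\shifts}\)-weak Popov form; then \cref{parlin:it:forms} of \cref{lem:to_parlin} gives that \([\expand{\kerbas};\,\kerExpMat]\) is a basis of \(\modKer{\expandMat\sys}\) with \(-\expand{\shifts}\)-row degree \(\shiftz\). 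Hence a basis with nonpositive \(-\expand{\shifts}\)-row degree exists, as needed by \cref{lem:ovlp_lin}.

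Next I would check the remaining hypotheses of \cref{lem:ovlp_lin} for the input \((\expandMat\sys,\expand{\shifts})\) with parameter \(\mu=\degExp+1\): by \cref{lem:parlin_dims_degs} the entries of \(\expand{\shifts}\) lie in \(\{0,\ldots,\degExp\}\), so \(\mu>\max(\expand{\shifts})\); and by \cref{parlin:it:cdeg} of \cref{lem:to_parlin} we have \(\cdeg[\expand{\shifts}]{\expandMat\sys}=\cdeg[\shifts]{\sys}\), so the algorithm's choice \(\orders=\cdeg[\shifts]{\sys}+1\) satisfies \(\orders\ge\cdeg[\expand{\shifts}]{\expandMat\sys}+1\). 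Applying \cref{lem:ovlp_lin} to the \(-\shiftt\)-weak Popov approximant basis \(\appbas\) computed at \cref{bounddegker:pmbasis} then shows that the matrix \(\mat{Q}\) extracted at \cref{bounddegker:extract} is a \(-\expand{\shifts}\)-weak Popov basis of \(\modKer{\expandMat\sys}\). Since the requirement implies the hypothesis of \cref{lem:from_parlin} (nonnegative \(-\shifts\)-row degree being a consequence of row degree \(\shiftz\)), it then yields that \(\mat{P}\expandMat\) returned at \cref{bounddegker:return} is a \(-\shifts\)-weak Popov basis of \(\modKer{\sys}\).

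For the cost bound I would first verify the dimensions. By \cref{lem:parlin_dims_degs}, \(\expand{\rdim}\le 2\rdim\) since \(\degExp\ge\sumTuple{\shifts}/\rdim\). For \(\expand{\cdim}\), the definition of \(\mathcal{L}_{\degExp+1}(\orders)\) gives \(\expand{\cdim}\le\sumTuple{\orders}/(\degExp+1)\); since \(\sumTuple{\orders}=\sumTuple{\cdeg[\shifts]{\sys}}+\cdim\le\degdet+\cdim\) and \(\degExp+1\ge\degdet/\rdim+1\), the assumption \(\cdim\le\rdim\) yields \(\expand{\cdim}\le \rdim(\degdet+\cdim)/(\degdet+\rdim)\le\rdim\). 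Hence \(\expand{\rdim}+\expand{\cdim}\le 3\rdim\) and \(\cdim+\expand{\cdim}\le 2\rdim\). By inspection of \cref{dfn:ovlplin}, every entry of \(\mathcal{L}_{\degExp+1}(\orders)\) is at most \(2(\degExp+1)=2\lceil\degdet/\rdim\rceil+2\). \cref{lem:subroutine:approx} then gives the cost of \algoname{PM-Basis} as \(\bigO{\rdim^\expmm\timegcd{\degdet/\rdim}}\). The remaining operations are arithmetic-free rearrangements (\(\expandMat\sys\) and \(\mathcal{L}_{\orders,\degExp+1}(\expandMat\sys)\) are coefficient shifts; extractions at \cref{bounddegker:extract,bounddegker:reextract} are row selections; the final \(\mat{P}\expandMat\) is a block sum of columns) together with \(\bigO{\rdim^2}\) bookkeeping operations.

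The one delicate point is the hypothesis check: to apply \cref{lem:ovlp_lin} in the linearized world one needs a basis of \(\modKer{\expandMat\sys}\) of nonpositive \(-\expand{\shifts}\)-row degree, which is precisely where the strengthened assumption ``\(-\shifts\)-row degree \(\shiftz\)'' (as opposed to merely nonpositive) buys us the nonnegativity required by \cref{lem:parlin_lmat} so that output partial linearization preserves the degree profile. Beyond this, the argument is a careful but routine chaining of the results already proved in \cref{sec:weak_popov_to_popov:via_kernel,sec:weak_popov_to_popov:output_parlin,sec:weak_popov_to_popov:ovlp_parlin}.
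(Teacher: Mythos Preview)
Your proposal is correct and follows essentially the same route as the paper's proof: chain \cref{lem:to_parlin}\,\cref{parlin:it:forms} (to verify the hypothesis of \cref{lem:ovlp_lin} for \(\expandMat\sys\) with shift \(-\expand{\shifts}\)), then \cref{lem:ovlp_lin} (to identify \(\mat{Q}\) as a \(-\expand{\shifts}\)-weak Popov kernel basis), then \cref{lem:from_parlin} (to recover the \(-\shifts\)-weak Popov basis \(\mat{P}\expandMat\)); and for the cost, bound \(\expand{\rdim}\), \(\expand{\cdim}\), and \(\max(\mathcal{L}_{\degExp+1}(\orders))\) exactly as the paper does. Your closing remark on why the assumption ``\(-\shifts\)-row degree equal to \(\shiftz\)'' (rather than merely nonpositive) is needed---namely to feed the nonnegativity hypothesis of \cref{lem:parlin_lmat}---is a useful clarification that the paper leaves implicit.
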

\begin{proof}
  Using the assumption that \(-\shifts\)-reduced bases of \(\modKer{\sys}\)
  have \(-\shifts\)-row degree \(\shiftz\) along with \cref{parlin:it:forms} of
  \cref{lem:to_parlin} shows that the \(-\expand{\shifts}\)-reduced bases of
  \(\modKer{\expandMat\sys}\) have \(-\expand{\shifts}\)-row degree
  \(\shiftz\). Thus, we can apply \cref{lem:ovlp_lin} to
  \((\expandMat\sys,\expand{\shifts},\mu,\orders)\) with \(\mu=\degExp+1 >
  \max(\expand{\shifts})\) and \(\orders =
  \cdeg[\expand{\shifts}]{\expandMat\sys}+1\), which is
  \(\orders=\cdeg[\shifts]{\sys}+1\) according to \cref{parlin:it:cdeg} of
  \cref{lem:to_parlin}. Note that \(\appbas\) is a \(-\shiftt\)-weak
  Popov basis of \(\modApp{(\Gamma,\ldots,\Gamma)}{\mat{G}} =
  \modApp{\mathcal{L}_{\degExp+1}(\orders)}{\mathcal{L}_{\orders,\degExp+1}(\expandMat\sys)}\)
  (see e.g.~\cite[Rmk.\,3.3]{JeaNeiVil20} for this approach to make the order
  uniform). Then, \cref{lem:ovlp_lin} states that the matrix \(\mat{Q}\) at
  \cref{bounddegker:extract} has \(\expand{k} = \expand{\rdim} -
  \rank{\expandMat\sys} = \expand{\rdim}-\rank{\sys}\) rows and is a
  \(-\expand{\shifts}\)-weak Popov basis of
  \(\modKer{\expandMat\sys}\).  Then, by \cref{lem:from_parlin},
  \(\mat{P}\expandMat\) is a \(-\shifts\)-weak Popov basis of
  \(\modKer{\sys}\), hence the correctness.

  For the cost analysis, we assume \(\rdim\ge\cdim\), and we start by
  summarizing bounds on the dimensions and degrees at play.
  \Cref{lem:parlin_dims_degs} yields \(\rdim \le \expand{\rdim} \le 2\rdim\),
  while \cref{parlin:it:cdeg} of \cref{lem:to_parlin} ensures
  \(\cdeg[\expand{\shifts}]{\expandMat\sys} = \cdeg[\shifts]{\sys}\). Each entry
  of \(\mathcal{L}_{\degExp+1}(\orders)\) is at most \(2(\degExp+1) = 2\lceil
  \degdet/\rdim\rceil + 2\), by construction. Writing \(\orders =
  (\order_1,\ldots,\order_\cdim)\), by \cref{dfn:ovlplin} we have
  \begin{align*}
    \expand{\cdim} & = \max\left(\left\lceil \frac{\order_1}{\degExp+1} - 1 \right\rceil, 0\right) + \cdots + \max\left(\left\lceil \frac{\order_\cdim}{\degExp+1} - 1 \right\rceil, 0\right)
    \le \frac{\order_1}{\degExp+1} + \cdots + \frac{\order_\cdim}{\degExp+1} = \frac{\sumTuple{\orders}}{\degExp+1}.
  \end{align*}
  Since \(\sumTuple{\orders} = \sumTuple{\cdeg[\shifts]{\sys}} + \cdim \le
  \degdet+\rdim\), and since \(\degExp+1 \ge (\degdet+\rdim)/\rdim\), it follows that
  \(\expand{\cdim} \le \rdim\).
  Besides, \(\rdeg[-\expand{\shifts}]{\mat{P}} \le \shiftz\) by construction,
  so that \(\cdeg{\mat{P}} \le \expand{\shifts}\) by \cref{lem:rdegcdeg}.

  The only steps that involve operations in \(\field\) are the call to
  \algoname{PM-Basis} at \cref{bounddegker:pmbasis} and the multiplication
  \(\mat{P}\expandMat\) at \cref{bounddegker:return}. The construction of
  \(\expandMat\) and the inequality \(\cdeg{\mat{P}} \le \expand{\shifts}\)
  imply that the product \(\mat{P}\expandMat\) mainly involves
  concatenating vectors of coefficients; concerning operations in \(\field\),
  there are \(\expand{\rdim}-\rdim\) columns of \(\mat{P}\) for which
  we may add the constant term of that column to the term of degree \(\degExp\)
  of the previous column. Therefore \cref{bounddegker:return} has cost bound
  \(\bigO{\expand{\rdim} k}\); since \(\expand{\rdim} \le 2\rdim\) and \(k\le
  \rdim\) this is in \(\bigO{\rdim^2}\).
  At \cref{bounddegker:pmbasis}, we call the approximant basis subroutine
  \algoname{PM-Basis} discussed in \cref{sec:prelim:subroutines}; since \(\Gamma
  = \max(\mathcal{L}_{\degExp+1}(\orders))\) is at most \(2 \lceil \degdet/\rdim
  \rceil + 2 \in \bigO{1+\degdet/\rdim}\), \cref{lem:subroutine:approx} states that
  this call uses
  \[
    \bigO{(\expand{\rdim}+\expand{\cdim} + \cdim+\expand{\cdim})
    (\expand{\rdim}+\expand{\cdim})^{\expmm-1} \timegcd{\degdet/\rdim}}
  \]
  operations in \(\field\). Since \(\cdim+\expand{\cdim} \le \expand{\rdim} +
  \expand{\cdim} \le 3\rdim\), this yields the claimed cost bound.
\end{proof}

\subsection{Proof of Theorem~\ref{thm:weak_popov_to_popov}}
\label{sec:weak_popov_to_popov:proof}

\begin{algorithm}
  \caption{\algoname{WeakPopovToPopov}$(\pmat,\shifts)$}
  \label{algo:weak_to_popov}

  \begin{algorithmic}[1]
  \Require{a matrix \(\pmat\in\pmatRing[\rdim][\cdim]\), a shift \(\shifts\in\ZZ^\cdim\) such that \(\pmat\) is in \(-\shifts\)-weak Popov form.}
  \Assume{\(\subTuple{\shifts}{\pivInds} \ge \pivDegs\), where \((\pivInds,\pivDegs)\) is the \(-\shifts\)-pivot profile of \(\pmat\).}
  \Ensure{the \(-\shifts\)-Popov form of \(\pmat\).}

  \State \CommentLine{Step 1: Find unimodular transformation and \(-\pivDegs\)-reduced form of \(\matcols{\pmat}{\pivInds}\)}

  \State \((\pivInds,\pivDegs) \assign\) the \(-\shifts\)-pivot profile of \(\pmat\)

  \State \([\mat{U} \;\; \mat{R}] \in \pmatRing[\rdim][2\rdim] \assign \algoname{KnownDegreeKernelBasis}\left(\begin{bmatrix} \matcols{\pmat}{\pivInds} \\ -\idMat{\rdim} \end{bmatrix},(\subTuple{\shifts}{\pivInds}-\pivDegs,\pivDegs)\right)\)
     \label{step:normalize:kernel}

  \State \CommentLine{Step 2: Deduce \(-\shifts\)-Popov form of \(\pmat\)}

  \State \(\mat{P} \assign\) zero matrix in \(\pmatRing[\rdim][\cdim]\)

  \State \(\matcols{\mat{P}}{\pivInds} \assign \lmat[-\pivDegs]{\mat{R}}^{-1} \mat{R}\)
      \label{step:normalize:popov} 

  \State \(\matcols{\mat{P}}{\{1,\ldots,\cdim\}\setminus\pivInds} \assign
  \lmat[-\pivDegs]{\mat{R}}^{-1}\mat{U} \,
  \matcols{\pmat}{\{1,\ldots,\cdim\}\setminus\pivInds}\) \label{step:nonsquare:update}

  \State \Return \(\mat{P}\)
\end{algorithmic}
\end{algorithm}

For proving \cref{thm:weak_popov_to_popov}, we describe
Algorithm~\algoname{WeakPopovToPopov} (\cref{algo:weak_to_popov}) and we focus
on the square case, \(\rdim=\cdim\). Then, by definition of the
\(-\shifts\)-weak Popov form, \(\pivDegs\) is the tuple of degrees of the
diagonal entries of \(\pmat\), and \(\pivInds = (1,\ldots,\rdim)\).
Furthermore, in this case, \(\subTuple{\shifts}{\pivInds} = \shifts\),
\(\matcols{\pmat}{\pivInds} = \pmat\), and \(\matcols{\mat{P}}{\pivInds} =
\mat{P}\); in particular, we can discard the step
at~\cref{step:nonsquare:update} since the submatrices it involves are empty.

First note that the shift \(\tuple{d} =
(\subTuple{\shifts}{\pivInds}-\pivDegs,\pivDegs) =
(\shifts-\pivDegs,\pivDegs)\) used at \cref{step:normalize:kernel} is
nonnegative. Besides, \cref{lem:normalize} shows that \(-\cdegs\)-reduced bases
of \(\modKer{\sys}\) have \(-\cdegs\)-row degree \(\shiftz\). Thus the
requirements of Algorithm~\algoname{KnownDegreeKernelBasis} are met, and
\cref{prop:algo:knowndegker} shows that the matrix \([\mat{U} \;\; \mat{R}]\)
computed at \cref{step:normalize:kernel} is a \(-\cdegs\)-weak Popov
basis of \(\modKer{\sys}\). Then, \cref{cor:normalize} shows that the matrix
\(\mat{P} = \matcols{\mat{P}}{\pivInds} = \lmat[-\pivDegs]{\mat{R}}^{-1}
\mat{R}\) computed at \cref{step:normalize:popov} is the \(-\shifts\)-Popov
form of \(\pmat\). This proves that \cref{algo:weak_to_popov} is correct.

Concerning the cost bound, we focus on the case \(\sumTuple{\shifts}>0\).
Indeed, since \(\shifts\ge\pivDegs\ge\shiftz\), if \(\sumTuple{\shifts}=0\),
then \(\shifts=\pivDegs=\shiftz\). In this case, no computation needs to be
done: the \(-\shifts\)-Popov form of \(\pmat\) is \(\idMat{\rdim}\), the unique
matrix in Popov form whose pivot degree is \(\shiftz\).

The cost of \cref{step:normalize:popov} is that of multiplying
\(\lmat[-\pivDegs]{\mat{R}}^{-1} \in \matRing[\rdim]\) by \(\mat{R} \in
\pmatRing[\rdim]\), which has column degree \(\pivDegs\) as explained in
\cref{sec:weak_popov_to_popov:via_kernel}; this computation corresponds to the second
item in \cref{thm:weak_popov_to_popov}. This multiplication can be done by first
performing a column linearization of \(\mat{R}\) into a \(\rdim \times
(\rdim+\sumTuple{\pivDegs})\) matrix \(\expand{\mat{R}}\) over \(\field\),
computing \(\lmat[-\pivDegs]{\mat{R}}^{-1} \expand{\mat{R}}\), and finally
compressing the result back into a polynomial matrix. This uses
\(\bigO{\rdim^\expmm (1 + \sumTuple{\pivDegs} / \rdim})\) operations in
\(\field\).

Concerning \cref{step:normalize:kernel}, we rely on
\cref{prop:algo:knowndegker}. Here, the matrix we give as input to
Algorithm~\algoname{KnownDegreeKernelBasis} has dimensions
\(2\rdim\times\rdim\), hence the dimensions in \cref{prop:algo:knowndegker}
satisfy \(\expand{\rdim} \le 4\rdim\) and \(\expand{\cdim} \le 2\rdim\) (the
latter bound comes from the proof of that proposition). Then,
\cref{prop:algo:knowndegker} states that \cref{step:normalize:kernel} costs:
\begin{itemize}
  \item \(\bigO{\rdim^2}\) operations in \(\field\), which is the third item in
    \cref{thm:weak_popov_to_popov},
  \item one call to \algoname{PM-Basis} on a matrix of row dimension
    \(\expand{\rdim}+\expand{\cdim} \le 6\rdim\), column dimension
    \(\rdim+\expand{\cdim} \le 3\rdim\), and at order at most \(2 \lceil
    \degdet/(2\rdim) \rceil + 2\), where \(\degdet =
    \max(\sumTuple{\cdegs},\sumTuple{\cdeg[\cdegs]{\sys}},1)\) and \(\sys\) is
    the input matrix \(\trsp{[\trsp{\pmat} \;\; -\idMat{\rdim}]}\).
\end{itemize}
Besides, \cref{prop:algo:knowndegker} also implies that
\cref{step:normalize:kernel} uses \(\bigO{\rdim^\expmm
\timegcd{\degdet/\rdim}}\) operations in \(\field\).

We are going to prove that \(\degdet = \sumTuple{\shifts}\), which concludes
the proof. Indeed, the previous paragraph then directly gives the overall cost
bound \(\bigO{\rdim^\expmm \timegcd{\sumTuple{\shifts}/\rdim}}\) in
\cref{thm:weak_popov_to_popov}, and using
\[
  2 \left\lceil \frac{\degdet}{2\rdim} \right\rceil + 2
  < 2 \left(\frac{\degdet}{2\rdim} + 1\right) + 2
  = \sumTuple{\shifts}/\rdim + 4,
\]
the previous paragraph also gives the first item in that theorem.

To observe that \(\degdet = \sumTuple{\shifts}\), we first use the definition
of \(\tuple{d}\) to obtain \(\sumTuple{\tuple{d}} = \sumTuple{\shifts-\pivDegs}
+ \sumTuple{\pivDegs} = \sumTuple{\shifts}\). Since \(\sumTuple{\shifts}\ge
1\), this gives \(\degdet = \max(\sumTuple{\shifts},
\sumTuple{\cdeg[\cdegs]{\sys}})\).  Now, \(\cdeg[\cdegs]{\sys}\) is the
entry-wise maximum of \(\cdeg[\shifts-\pivDegs]{\pmat}\) and
\(\cdeg[\pivDegs]{-\idMat{\rdim}} = \pivDegs\).  Since \(\pmat\) is in
\(-\shifts\)-row reduced form with \(\rdeg[-\shifts]{\pmat} =
-\shifts+\pivDegs\), \cref{lem:row_column_reduced} ensures that
\(\trsp{\pmat}\) is in \(\shifts-\pivDegs\)-reduced form with
\(\cdeg[\shifts-\pivDegs]{\pmat} = \rdeg[\shifts-\pivDegs]{\trsp{\pmat}} =
\shifts\). Then, since \(\shifts\ge\pivDegs\) we obtain \(\cdeg[\cdegs]{\sys} =
\shifts\), and thus \(\degdet = \sumTuple{\shifts}\). This concludes the proof
of \cref{thm:weak_popov_to_popov}.

As for the rectangular case \(\rdim<\cdim\), the correctness of
\cref{algo:weak_to_popov} follows from the above proof in the square case,
which shows that the algorithm correctly computes the
\(-\subTuple{\shifts}{\pivInds}\)-Popov form \(\matcols{\mat{P}}{\pivInds}\) of
\(\matcols{\pmat}{\pivInds}\), and from \cite[Lem.\,5.1]{NeiRosSol18}
concerning the computation of
\(\matcols{\mat{P}}{\{1,\ldots,\cdim\}\setminus\pivInds}\) and the fact that
\(\lmat[-\pivDegs]{\mat{R}}^{-1}\mat{U}\) is the unimodular matrix which
transforms \(\pmat\) into \(\mat{P}\). The cost bound can be derived using the
degree bounds on rectangular shifted Popov forms given in
\cite{BeLaVi06,NeiRosSol18} (we do not detail this here since this would add
technical material unrelated to the main results of this article).

\appendix
\section{}
\renewcommand\thesection{A} 
\label{app:parlin_forms}

In this appendix, we give proofs for \cref{rmk:parlin_lmat,rmk:parlin_forms}.
We use notation from \cref{sec:weak_popov_to_popov:output_parlin}: a shift
$\shifts \in \NN^\rdim$, an integer $\degExp\in\ZZp$, and a matrix \(\sys \in
\pmatRing[\rdim][\cdim]\) are given; \(\kerbas \in \pmatRing[\cork][\rdim]\) is
a basis of \(\modKer{\sys}\); \((\quoExp_j,\remExp_j)_{1\le j\le\rdim}\) and
\(\expand{\shifts} \in \NN^{\expand{\rdim}}\) and \(\expand{\kerbas} \in
\pmatRing[\cork][\expand{\rdim}]\) are as described in
\cref{dfn:output_parlin}. Following the context of
\cref{rmk:parlin_lmat,rmk:parlin_forms}, we assume \(\rdeg[-\shifts]{\kerbas}
\ge \shiftz\), hence in particular \(\rdeg[-\expand{\shifts}]{\expand{\kerbas}}
= \rdeg[-\shifts]{\kerbas}\) (see \cref{lem:parlin_lmat}). We start with the
claims in \cref{rmk:parlin_lmat}.

\begin{lemma}
  \label{lem:rmk:parlin_lmat}
  Writing
  \(\lmat[-\shifts]{\kerbas} = [\matcol{\mat{L}}{1} \;\; \cdots \;\;
  \matcol{\mat{L}}{\rdim}] \in \matRing[k][\rdim]\), we have
  \[
    \lmat[-\expand{\shifts}]{\expand{\kerbas}} =
    [\,\underbrace{\matz \;\; \cdots \;\; \matz \;\; \matcol{\mat{L}}{1}}_{\quoExp_1}
    \;\; \cdots \;\;
    \underbrace{\matz \;\; \cdots \;\; \matz \;\; \matcol{\mat{L}}{\rdim}}_{\quoExp_\rdim}]
    \in \matRing[k][\expand{\rdim}].
  \]
  If \(\kerbas\) is in \(-\shifts\)-Popov form, then
  \(\expand{\kerbas}\) is in \(-\expand{\shifts}\)-Popov form.
\end{lemma}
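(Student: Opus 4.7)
My plan is to establish both claims by a direct coefficient-wise computation, grounded in the defining identity $\matcol{\kerbas}{j} = \expand{\kerbas}_j \, \trsp{[\,1\ \var^{\degExp}\ \cdots\ \var^{(\quoExp_j-1)\degExp}\,]}$ together with the degree bound $\deg((\expand{\kerbas}_j)_{i,\ell}) < \degExp$ for $\ell < \quoExp_j$, combined with the row-degree information already supplied by \cref{lem:parlin_lmat}.

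Writing $\tuple{d} = (d_1,\ldots,d_\cork) = \rdeg[-\shifts]{\kerbas}$, which is nonnegative by assumption, we recall from \cref{lem:parlin_lmat} that $\rdeg[-\expand{\shifts}]{\expand{\kerbas}} = \tuple{d}$ as well. To compute $\lmat[-\expand{\shifts}]{\expand{\kerbas}}$ I will fix a row index $i$ and a column index $k$, and split according to whether $k = \quoExp_1+\cdots+\quoExp_j$ for some $j$ (trailing column of block $j$) or $k = \quoExp_1+\cdots+\quoExp_{j-1}+\ell$ with $1\le \ell < \quoExp_j$ (interior column of block $j$). In the interior case one needs the coefficient of degree $d_i + \degExp$ of an entry of degree strictly less than $\degExp$, which vanishes because $d_i \ge 0$. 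In the trailing case the degree bound on non-trailing entries forces, for every integer $m \ge 0$, the coefficient of degree $m$ of $(\expand{\kerbas}_j)_{i,\quoExp_j}$ to coincide with the coefficient of degree $(\quoExp_j-1)\degExp + m$ of $\kerbas_{i,j}$; specialising to $m = d_i + \remExp_j$ and using $(\quoExp_j-1)\degExp + \remExp_j = \shift_j$ turns this into the coefficient of degree $d_i + \shift_j$ of $\kerbas_{i,j}$, that is, $\mat{L}_{i,j}$. The block structure announced for $\lmat[-\expand{\shifts}]{\expand{\kerbas}}$ follows. The degenerate case $\shift_j = 0$, where by convention $\quoExp_j = 1$ and $\remExp_j = 0$ so that block $j$ reduces to its trailing column, is absorbed transparently by the same identities.

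For the Popov form claim, the $-\expand{\shifts}$-weak Popov property of $\expand{\kerbas}$ is already provided by \cref{lem:parlin_lmat}, and from the formula for $\lmat[-\expand{\shifts}]{\expand{\kerbas}}$ just established the $-\expand{\shifts}$-pivot of row $i$ sits in the trailing column $\quoExp_1+\cdots+\quoExp_{\pivInd_i}$ of block $\pivInd_i$. It remains to verify monicity of the pivots and the strict degree-domination in each pivot column; both follow from the coefficient-wise correspondence above, restricted to that column. The leading coefficient of the pivot $(\expand{\kerbas}_{\pivInd_i})_{i,\quoExp_{\pivInd_i}}$ equals that of the monic pivot $\kerbas_{i,\pivInd_i}$, and for any row $i' \neq i$ the same correspondence yields $\deg((\expand{\kerbas}_{\pivInd_i})_{i',\quoExp_{\pivInd_i}}) < \pivDeg_i - (\quoExp_{\pivInd_i}-1)\degExp$ out of the Popov hypothesis $\deg(\kerbas_{i',\pivInd_i}) < \pivDeg_i$ on $\kerbas$, which matches the degree of the pivot of row $i$ in that column. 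I do not expect any genuine obstacle in carrying this out: the whole proof is index manipulation, and the only two points requiring vigilance are the use of $\tuple{d} \ge \shiftz$ to kill the interior contributions to the leading matrix, and the treatment of the degenerate blocks with $\shift_j = 0$.
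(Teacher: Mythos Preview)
Your proposal is correct and follows essentially the same approach as the paper: a direct coefficient-wise computation of the leading matrix using the identity $\kerbas_{i,j} = \sum_{\ell} (\expand{\kerbas}_j)_{i,\ell}\,\var^{(\ell-1)\degExp}$ and the bound $\deg((\expand{\kerbas}_j)_{i,\ell}) < \degExp$ for $\ell<\quoExp_j$, with the crucial use of $d_i\ge 0$ to kill the interior columns, and then the same coefficient correspondence restricted to the pivot column for the Popov claim. The paper phrases the second part slightly differently (saying the trailing column of block $j$ is ``the part of nonnegative degree of $\var^{-(\quoExp_j-1)\degExp}\matcol{\kerbas}{j}$''), but this is exactly your coefficient correspondence viewed column-wise.
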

\begin{proof}
  For given \(i \in \{1,\ldots,\cork\}\) and \(j \in \{1,\ldots,\rdim\}\), we
  rely on the definition of a leading matrix (see \cref{sec:prelim:reduced}):
  the entry \((i,j)\) of \(\lmat[-\expand{\shifts}]{\expand{\kerbas}}\) is the
  coefficient of degree \(d_i + \expand{\shifts}_j\) of
  \(\expand{\kerbas}_{i,j}\), where $d_i =
  \rdeg[-\expand{\shifts}]{\matrow{\expand{\kerbas}}{i}} =
  \rdeg[-\shifts]{\matrow{\kerbas}{i}} \ge 0$. First consider the case \(j
  \not\in \{\quoExp_1+\cdots+\quoExp_\pivInd, 1\le \pivInd \le\rdim\}\).  Then
  the \(j\)th entry of \(\expand{\shifts}\) is \(\expand{\shifts}_j =
  \degExp\), and by definition of the output column partial linearization
  \(\expand{\kerbas}_{i,j}\) has degree less than \(\degExp\), hence its
  coefficient of degree \(d_i + \degExp \ge \degExp\) must be zero. This proves
  that all columns of \(\lmat[-\expand{\shifts}]{\expand{\kerbas}}\) with index
  not in \(\{\quoExp_1+\cdots+\quoExp_\pivInd, 1\le \pivInd \le\rdim\}\) are
  indeed zero. It remains to prove that in the case \(j = \quoExp_1 + \cdots +
  \quoExp_\pivInd\) for some \(1\le \pivInd \le\rdim\), then the entry
  \((i,j)\) of \(\lmat[-\expand{\shifts}]{\expand{\kerbas}}\) is equal to
  \(\mat{L}_{i,\pivInd}\). This holds, since in this case we have
  \(\expand{\shifts}_j = \remExp_\pivInd\), and by construction of
  \(\expand{\kerbas}\) the coefficient of degree \(d_i + \remExp_\pivInd\) of
  \(\expand{\kerbas}_{i,j}\) is equal to the coefficient of degree \(d_i +
  (\quoExp_\pivInd-1)\degExp + \remExp_\pivInd = d_i + \shift_{\pivInd}\) of
  \(\kerbas_{i,\pivInd}\), which itself is equal to \(\mat{L}_{i,\pivInd}\) by
  definition of a leading matrix.

  Now assume \(\kerbas\) is in \(-\shifts\)-Popov form. We have showed in
  \cref{lem:parlin_lmat} that \(\expand{\kerbas}\) is in
  \(-\expand{\shifts}\)-weak Popov form and that the row
  \(\matrow{\expand{\kerbas}}{i}\) has its \(-\expand{\shifts}\)-pivot at index
  \(\quoExp_1+\cdots+\quoExp_{\pivInd_i}\) where \(\pivInd_i\) is the
  \(-\shifts\)-pivot index of \(\matrow{\kerbas}{i}\). By construction, the
  column \(\matcol{\expand{\kerbas}}{\quoExp_1+\cdots+\quoExp_{\pivInd_i}}\) is
  the part of nonnegative degree of the column
  \(\var^{-(\quoExp_{\pivInd_i}-1)\degExp} \matcol{\kerbas}{\pivInd_i} =
  \var^{-\shift_{\pivInd_i} + \remExp_i} \matcol{\kerbas}{\pivInd_i}\).  It
  follows first that the \(-\expand{\shifts}\)-pivot entry
  \(\expand{\kerbas}_{i,\quoExp_1+\cdots+\quoExp_{\pivInd_i}}\) is monic since
  it is a high degree part of the (monic) \(-\shifts\)-pivot entry
  \(\kerbas_{i,\pivInd_i}\); and second that
  \(\expand{\kerbas}_{i,\quoExp_1+\cdots+\quoExp_{\pivInd_i}}\) has degree
  strictly larger than all other entries in the column
  \(\matcol{\expand{\kerbas}}{\quoExp_1+\cdots+\quoExp_{\pivInd_i}}\) since
  \(\kerbas_{i,\pivInd_i}\) has degree strictly larger than all other entries
  in the column \(\matcol{\kerbas}{\pivInd_i}\).  Hence \(\expand{\kerbas}\) is
  in \(-\expand{\shifts}\)-Popov form.
\end{proof}

Now we prove the claims in \cref{rmk:parlin_forms} concerning variants of
\cref{parlin:it:forms} of \cref{lem:to_parlin}, using further notation from
\cref{sec:weak_popov_to_popov:output_parlin}: \(\kerExpMat\) is the basis of
\(\modKer{\expandMat}\) described in \cref{lem:parlin_kerexpmat} and \(\mat{B}
= [\begin{smallmatrix} \;\expand{\kerbas}\; \\ \kerExpMat \end{smallmatrix}]
\in \pmatRing[(\cork+\expand{\rdim}-\rdim)][\expand{\rdim}]\) is the basis of
\(\modKer{\expandMat\sys}\) given in \cref{eqn:basis_parlin}. As above we
assume \(\rdeg[-\shifts]{\kerbas} \ge \shiftz\), and therefore
\(\rdeg[-\expand{\shifts}]{\mat{B}} = (\rdeg[-\shifts]{\kerbas},\shiftz)\) as
stated in \cref{parlin:it:forms} of \cref{lem:to_parlin}.

\begin{lemma}
  \label{lem:rmk:parlin_forms}
  With the above notation and assumptions,
  \begin{itemize}
    \item If \(\kerbas\) is in \(-\shifts\)-reduced form, then
      \(\mat{B}\) is in \(-\expand{\shifts}\)-reduced form.
    \item If \(\kerbas\) is in \(-\shifts\)-Popov form, then
      \(\mat{B}\) is in \(-\expand{\shifts}\)-Popov form up to row permutation.
  \end{itemize}
\end{lemma}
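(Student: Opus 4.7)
The plan is to analyze $\lmat[-\expand{\shifts}]{\mat{B}}$ by splitting the columns of $\mat{B}$ into two complementary index sets: $\mathcal{I} = \{\quoExp_1+\cdots+\quoExp_j \mid 1\le j\le\rdim\}$ and its complement $\mathcal{J} = \{1,\ldots,\expand{\rdim}\}\setminus\mathcal{I}$. By Lemma~A.1 applied to $\expand{\kerbas}$, the columns of $\lmat[-\expand{\shifts}]{\expand{\kerbas}}$ indexed by $\mathcal{J}$ are zero, and the submatrix of its columns indexed by $\mathcal{I}$ is exactly $\lmat[-\shifts]{\kerbas}$. A direct computation using \cref{lem:parlin_kerexpmat} shows that $\lmat[-\expand{\shifts}]{\kerExpMat}$ is block diagonal with each block equal to $[\idMat{\quoExp_j-1}\;\;\matz]$, so its columns indexed by $\mathcal{I}$ are zero (they are the trailing column of each block) while its submatrix of columns indexed by $\mathcal{J}$ is the identity $\idMat{\expand{\rdim}-\rdim}$. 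Thus, after permuting columns to list those in $\mathcal{I}$ first, the matrix $\lmat[-\expand{\shifts}]{\mat{B}}$ becomes block diagonal.

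For the first item, when $\kerbas$ is $-\shifts$-reduced the top block $\lmat[-\shifts]{\kerbas}$ has full row rank $\cork$, and the bottom block $\idMat{\expand{\rdim}-\rdim}$ has full row rank; hence $\lmat[-\expand{\shifts}]{\mat{B}}$ has full row rank $\cork+\expand{\rdim}-\rdim$, proving $\mat{B}$ is $-\expand{\shifts}$-reduced.

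For the second item, assume $\kerbas$ is $-\shifts$-Popov. Lemma~A.1 gives that $\expand{\kerbas}$ is $-\expand{\shifts}$-Popov, and \cref{lem:parlin_kerexpmat} gives that $\kerExpMat$ is $-\expand{\shifts}$-Popov; moreover their $-\expand{\shifts}$-pivot indices are exactly $\mathcal{I}$ and $\mathcal{J}$, which partition $\{1,\ldots,\expand{\rdim}\}$. Stacking them yields a matrix whose $-\expand{\shifts}$-pivot indices are pairwise distinct but not necessarily sorted; applying the row permutation that sorts them gives a matrix in $-\expand{\shifts}$-unordered weak Popov form with distinct pivot indices, and monicity of pivots is inherited from each summand. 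It remains to verify the Popov column-degree condition: each pivot must have degree strictly greater than every other entry in its column. For a pivot column $\quoExp_1+\cdots+\quoExp_{\pivInd_i}\in\mathcal{I}$ contributed by row $i$ of $\expand{\kerbas}$, the other entries within $\expand{\kerbas}$ are dominated by the Popov property of $\expand{\kerbas}$; the entries contributed by $\kerExpMat$ in that column are constants of degree $0$ (possibly zero), so one only needs the pivot degree to be positive.

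The main subtlety is this last point: the pivot entry of $\expand{\kerbas}$ at position $(i,\quoExp_1+\cdots+\quoExp_{\pivInd_i})$ has degree $\pivDeg_i - (\quoExp_{\pivInd_i}-1)\degExp$, where $\pivDeg_i$ is the $-\shifts$-pivot degree of row $i$ of $\kerbas$. Here I will split on $\shift_{\pivInd_i}$: if $\shift_{\pivInd_i}=0$, then $\quoExp_{\pivInd_i}=1$ so the column $\quoExp_1+\cdots+\quoExp_{\pivInd_i}$ receives no contribution from $\kerExpMat$ at all (the corresponding block $\kerExpMat_{\pivInd_i}$ has zero rows), hence nothing to check; if $\shift_{\pivInd_i}>0$, then the assumption $\rdeg[-\shifts]{\kerbas}\ge\shiftz$ gives $\pivDeg_i\ge\shift_{\pivInd_i}$, so $\pivDeg_i-(\quoExp_{\pivInd_i}-1)\degExp \ge \remExp_{\pivInd_i}\ge 1>0$ as required. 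Symmetrically, for columns in $\mathcal{J}$ the pivot from $\kerExpMat$ has degree $\degExp$, the other $\kerExpMat$-entries in that column have degree $0$, and the $\expand{\kerbas}$-entries have degree strictly less than $\degExp$ by construction of the column partial linearization. This verifies all conditions and concludes the proof.
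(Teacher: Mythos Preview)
Your proof is correct and follows essentially the same approach as the paper: both compute \(\lmat[-\expand{\shifts}]{\mat{B}}\) by combining the description of \(\lmat[-\expand{\shifts}]{\expand{\kerbas}}\) from \cref{lem:rmk:parlin_lmat} with the block form \([\idMat{\quoExp_j-1}\;\;\matz]\) of \(\lmat[-\expand{\shifts}]{\kerExpMat}\), and then verify the Popov column-degree condition by bounding the \(\kerExpMat\)-entries in pivot columns of \(\expand{\kerbas}\) (degree~\(0\)) and the \(\expand{\kerbas}\)-entries in pivot columns of \(\kerExpMat\) (degree \(<\degExp\)). One minor imprecision: the \(-\expand{\shifts}\)-pivot indices of \(\expand{\kerbas}\) lie in \(\mathcal{I}\) but need not exhaust it when \(\cork<\rdim\); this is harmless since you only use disjointness from \(\mathcal{J}\), and your explicit case split on \(\shift_{\pivInd_i}=0\) is actually more careful than the paper's tacit assumption \(\remExp_j>0\).
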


\begin{proof}
  Suppose first that \(\kerbas\) is in \(-\shifts\)-reduced form. We apply
  \cref{lem:rmk:parlin_lmat}:
  \[
   \lmat[-\expand{\shifts}]{\expand{\kerbas}} =
   [\,\matz \;\; \cdots \;\; \matz \;\; \matcol{\mat{L}}{1}
     \;\; \cdots \;\;
   \matz \;\; \cdots \;\; \matz \;\; \matcol{\mat{L}}{\rdim}]
  \]
  where \(\lmat[-\shifts]{\kerbas} = [\matcol{\mat{L}}{1} \;\; \cdots \;\;
  \matcol{\mat{L}}{\rdim}] \in \matRing[k][\rdim]\). The latter matrix has full
  rank, since \(\kerbas\) is in \(-\shifts\)-reduced form. 
  On the other hand, for matrices such as the diagonal blocks of
  \(\kerExpMat\) we have
  \[
    \lmatBig[(-\degExp,\ldots,-\degExp,-\remExp)]{\begin{bmatrix}
      \var^\degExp & -1  \\
                   & \ddots & \ddots \\
                   &        & \var^\degExp & -1
    \end{bmatrix}}
    =
    \begin{bmatrix}
      1 & 0  \\
                   & \ddots & \ddots \\
                   &        & 1 & 0
    \end{bmatrix}
  \]
  for any integer \(\remExp \ge 1\). As a result,
  \begin{equation*}
   \lmat[-\expand{\shifts}]{\mat{B}} =
   \begin{bmatrix}
     \lmat[-\expand{\shifts}]{\expand{\kerbas}} \\
     \lmat[-\expand{\shifts}]{\kerExpMat}
   \end{bmatrix}
   =
   \left[\begin{array}{cccc;{1pt/3pt}c;{1pt/3pt}cccc}
           & & & \vert               &        & & & & \vert \\
           & & & \matcol{\mat{L}}{1} & \cdots & & & & \matcol{\mat{L}}{\rdim} \\
           & & & \vert               &        & & & & \vert \\ \hdashline[1pt/3pt]
         1 & & & & &     \\
           & \ddots & & & & &  \\
           &        & 1 & & & & &  \\ \hdashline[1pt/3pt]
           &        & & & \ddots \\ \hdashline[1pt/3pt]
           & & & & & 1     \\
           & & & & & & \ddots \\
           & & & & & &        & 1 \\
   \end{array}\right]
  \end{equation*}
  has full rank as well, which means that \(\mat{B}\) is in
  \(-\expand{\shifts}\)-reduced form.

  Now, assume further that \(\kerbas\) is in \(-\shifts\)-Popov form. Then
  \cref{lem:rmk:parlin_lmat} states that \(\expand{\kerbas}\) is in
  \(-\expand{\shifts}\)-Popov form, and the above description of
  \(\lmat[-\expand{\shifts}]{\mat{B}}\) shows that \(\expand{\kerbas}\) and
  \(\kerExpMat\) have disjoint \(-\expand{\shifts}\)-pivot indices, and that
  all their \(-\expand{\shifts}\)-pivots are monic. Hence \(\mat{B}\) is in
  \(-\expand{\shifts}\)-unordered weak Popov form with monic
  \(-\expand{\shifts}\)-pivots. It remains to show that each of these
  \(-\expand{\shifts}\)-pivots has degree strictly larger than the other
  entries in the same column. Each row of the submatrix \(\kerExpMat\) of
  \(\mat{B}\) has \(-\expand{\shifts}\)-pivot entry \(\var^\degExp\) and the
  other entries in the same column of \(\mat{B}\) are either \(-1\), which has
  degree \(0<\degExp\), or are in a column of \(\expand{\kerbas}\) whose index
  is not in \(\{\quoExp_1+\cdots+\quoExp_j, 1\le j\le \rdim\}\), which has
  degree less than \(\degExp\) by definition of the column partial
  linearization. The \(j\)th row of the submatrix \(\expand{\kerbas}\) of
  \(\mat{B}\) has \(-\expand{\shifts}\)-pivot index
  \(\quoExp_1+\cdots+\quoExp_j\) and strictly positive
  \(-\expand{\shifts}\)-pivot degree since
  \(\rdeg[-\expand{\shifts}]{\expand{\kerbas}} \ge \shiftz\) and the entry of
  \(-\expand{\shifts}\) at index \(\quoExp_1+\cdots+\quoExp_j\) is \(-\remExp_j
  < 0\). This concludes the proof since the column
  \(\matcol{\kerExpMat}{\quoExp_1+\cdots+\quoExp_j}\) has degree \(0\), and
  since \(\expand{\kerbas}\) is in \(-\expand{\shifts}\)-Popov form.
\end{proof}

\bibliographystyle{elsarticle-harv}

\end{document}